\newcommand{\argmin}{\mathop{\rm arg~min}\limits}
\newtheorem{theorem}{Theorem}[section]
\newtheorem{lemma}[theorem]{Lemma}
\newtheorem{corollary}[theorem]{Corollary}
\newtheorem{definition}[theorem]{Definition}
\newtheorem{assumption}[theorem]{Assumption}
\newtheorem{remark}[theorem]{Remark}
\newcommand{\indicator}{\mathds{1}}
\newcommand{\Hess}{\textnormal{Hess}}
\newif\ifarxiv
\setlist{nolistsep}
\setlist{nosep}
\begin{document}

% If your paper is accepted and the title of your paper is very long,
% the style will print as headings an error message. Use the following
% command to supply a shorter title of your paper so that it can be
% used as headings.
%
\runningtitle{Last-Iterate Convergence with Full and Noisy Feedback in Two-Player Zero-Sum Games}

% If your paper is accepted and the number of authors is large, the
% style will print as headings an error message. Use the following
% command to supply a shorter version of the authors names so that
% they can be used as headings (for example, use only the surnames)
%
\runningauthor{Kenshi Abe, Kaito Ariu, Mitsuki Sakamoto, Kentaro Toyoshima, Atsushi Iwasaki}

\twocolumn[

\aistatstitle{Last-Iterate Convergence with Full and Noisy Feedback \\ in Two-Player Zero-Sum Games}

\aistatsauthor{ Kenshi Abe \And Kaito Ariu}
\aistatsaddress{CyberAgent, Inc. \And CyberAgent, Inc., KTH Royal Institute of Technology}
\aistatsauthor{Mitsuki Sakamoto \And Kentaro Toyoshima \And Atsushi Iwasaki}
\aistatsaddress{University of Electro-Communications}

]

\begin{abstract}
This paper proposes Mutation-Driven Multiplicative Weights Update (M2WU) for learning an equilibrium in two-player zero-sum normal-form games and proves that it exhibits the last-iterate convergence property in both full and noisy feedback settings. In the former, players observe their exact gradient vectors of the utility functions. In the latter, they only observe the noisy gradient vectors. Even the celebrated Multiplicative Weights Update (MWU) and Optimistic MWU (OMWU) algorithms may not converge to a Nash equilibrium with noisy feedback. On the contrary, M2WU exhibits the last-iterate convergence to a stationary point near a Nash equilibrium in both feedback settings. We then prove that it converges to an exact Nash equilibrium by iteratively adapting the mutation term. We empirically confirm that M2WU outperforms MWU and OMWU in exploitability and convergence rates.
\end{abstract}

\ifarxiv
\else
\begin{figure*}[t!]
    \centering
    \begin{minipage}[t]{0.24\textwidth}
        \centering
        \includegraphics[width=\linewidth]{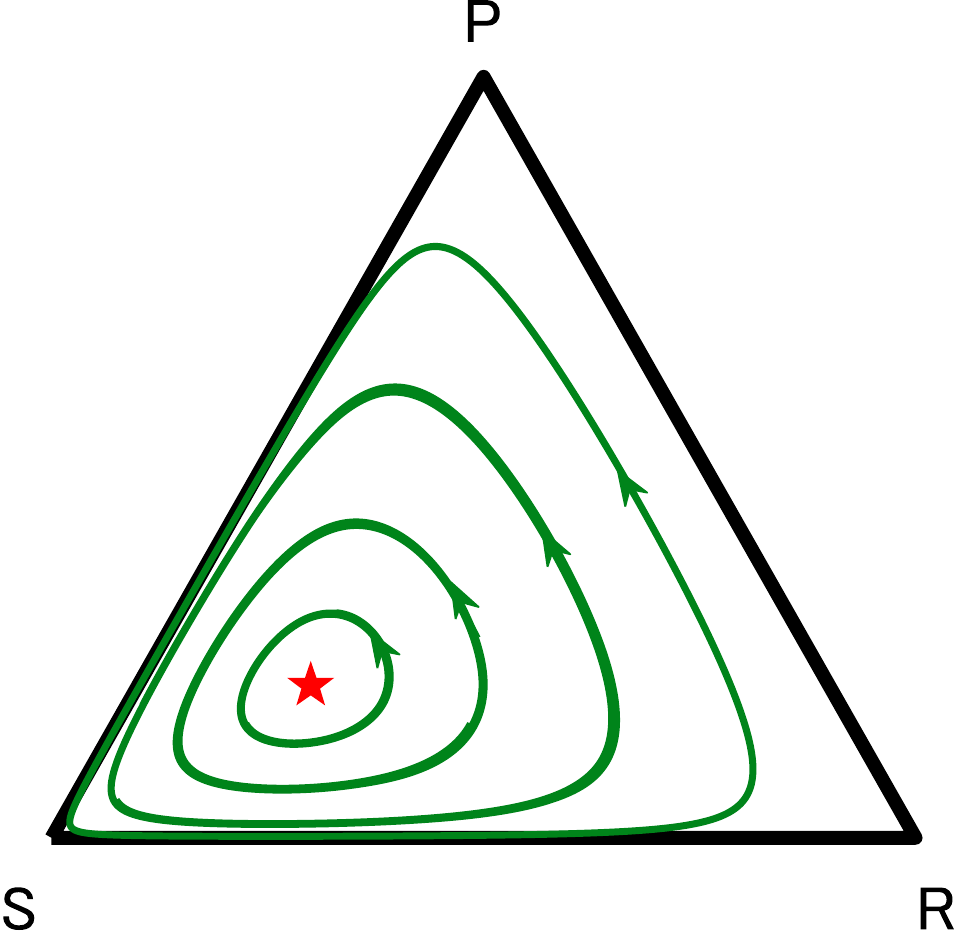}
        \subcaption{RD}\label{fig:FTRL-trajectory}
    \end{minipage}
    \begin{minipage}[t]{0.24\textwidth}
        \centering
        \includegraphics[width=\linewidth]{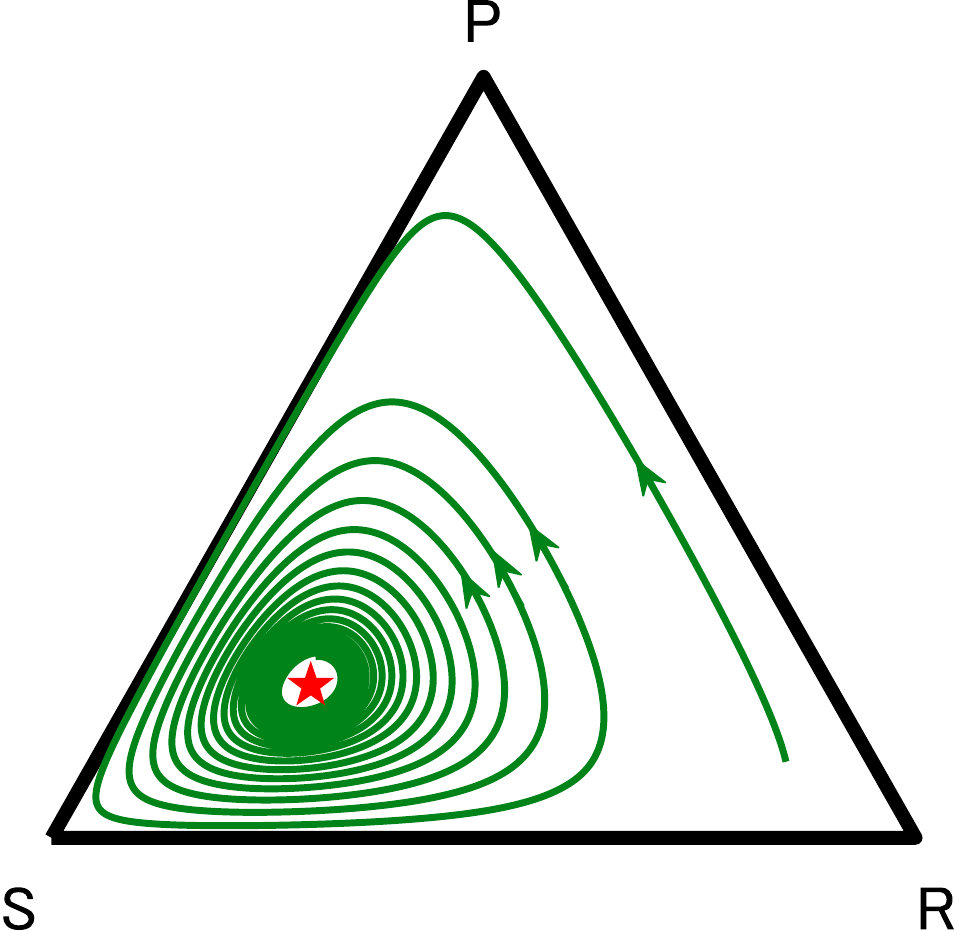}
        \subcaption{RMD ($\mu=0.01$)}\label{fig:M-FTRL-trajectory-0.01}
    \end{minipage}
    \begin{minipage}[t]{0.24\textwidth}
        \centering
        \includegraphics[width=\linewidth]{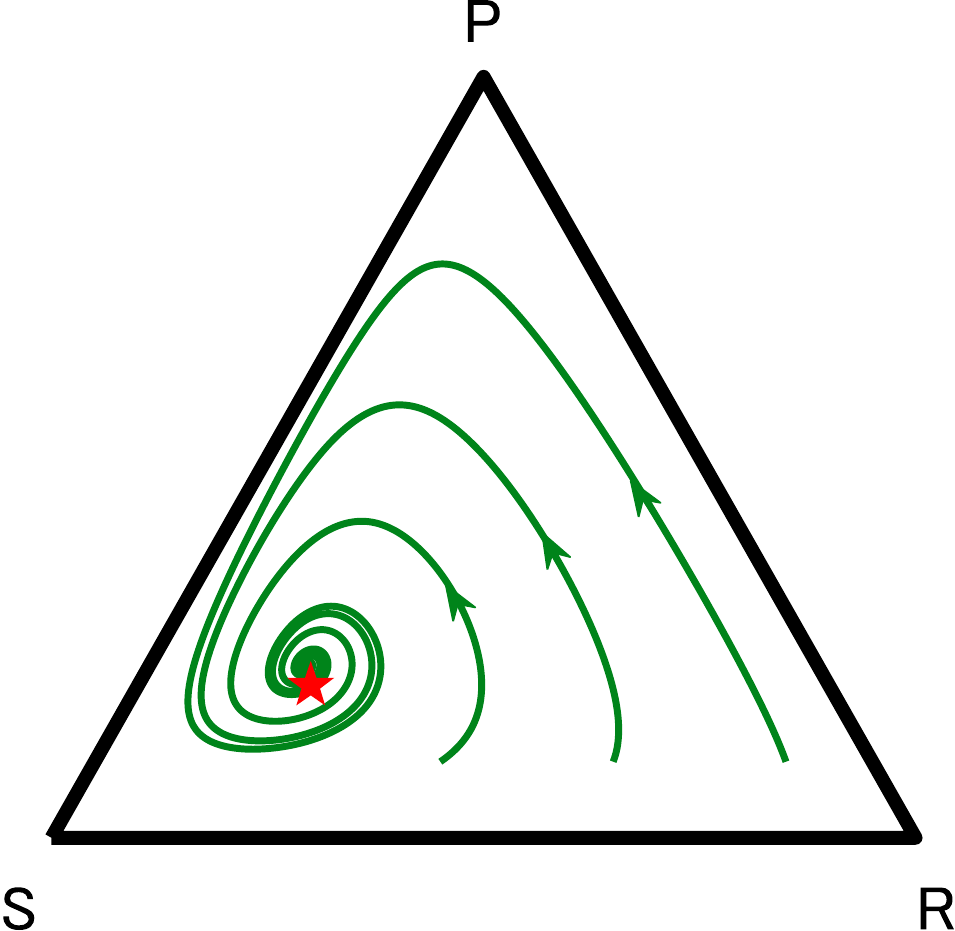}
        \subcaption{RMD ($\mu=0.1$)}\label{fig:M-FTRL-trajectory-0.1}
    \end{minipage}
    \begin{minipage}[t]{0.24\textwidth}
        \centering
        \includegraphics[width=\linewidth]{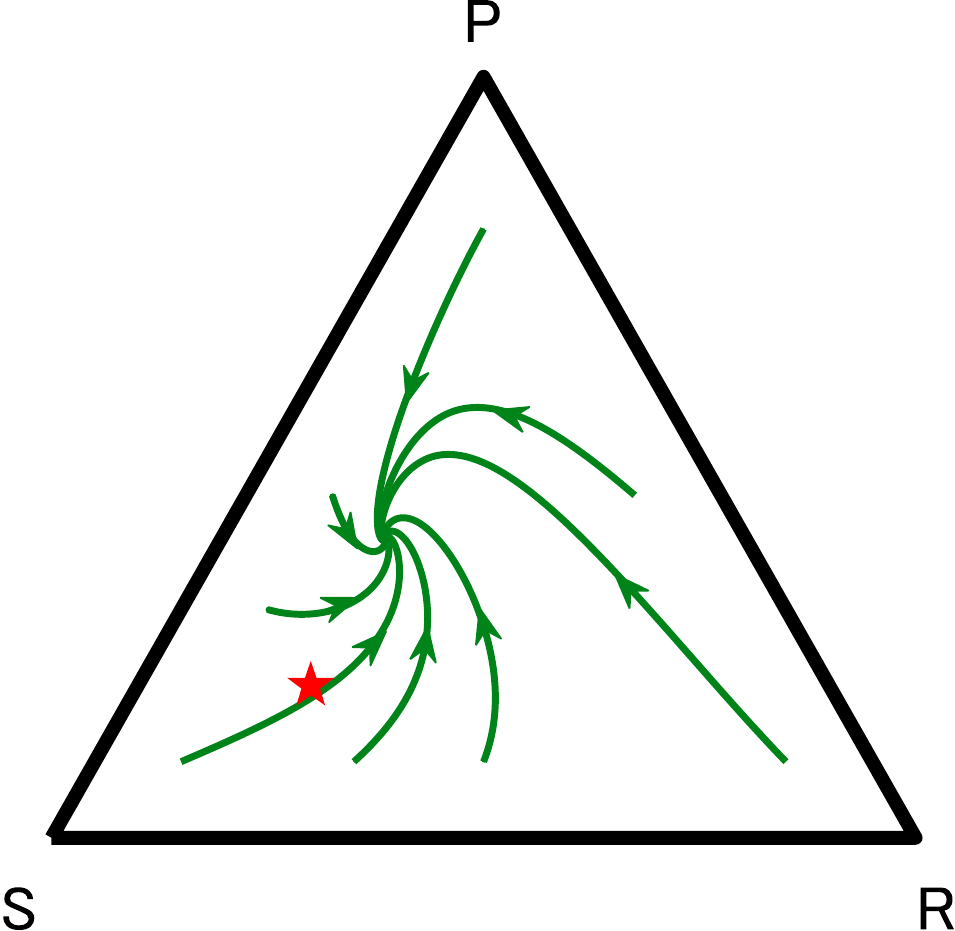}
        \subcaption{RMD ($\mu=1.0$)}\label{fig:M-FTRL-trajectory-1.0}
    \end{minipage}
    \caption{
    Learning dynamics of RD and RMD in biased Rock-Paper-Scissors (the game matrix is given by $[[0,-3,1],~[3,0,-1],~[-1,1,0]]$.
    The red star represents the Nash equilibrium point of the game.
    }
    \label{fig:learningdynamics}
\end{figure*}
\fi

\section{INTRODUCTION}
This paper considers learning algorithms for finding an (approximate) equilibrium in two-player zero-sum games.
Motivated by the training for Generative Adversarial Networks (GANs) \citep{goodfellow2014generative} and multi-agent reinforcement learning \citep{busoniu2008comprehensive}, many algorithms have been developed to find a near-optimal solution to minimax problems~\citep{blum:agt:2007,daskalakis2017training} in the form of $\min_x\max_y f(x,y)$.

In this context, no-regret learning, which minimizes regret in repeated decisions, has been extensively studied~\citep{banerjee2005efficient,zinkevich2007regret,daskalakis2011near}.
These algorithms, including the well-known Multiplicative Weights Update (MWU), exhibit the \textit{average-iterate convergence} by minimizing the regret of each player; that is, the averaged strategies over iterations converge to the minimax solution (the Nash equilibrium).
Still, it has been shown that the actual trajectory of updated strategies diverges or cycles~\citep{mertikopoulos2018cycles,bailey2018multiplicative}. 
This feature is unsatisfactory because averaging may require non-negligible amounts of memory and computation for large games, or averaging introduces additional error when the nonlinear function approximation is used, as in the case of training GANs.

This paper focuses on whether the actual sequence of updated strategies converges to an equilibrium, i.e., the \textit{last-iterate convergence}, which is inevitably a stronger notion than the average-iterate convergence. 
A series of \textit{optimistic} no-regret learning algorithms is proven to exhibit the last-iterate convergence~\citep{daskalakis2017training,mertikopoulos2018optimistic}.
In particular, the Optimistic MWU (OMWU) algorithm is guaranteed to converge to a Nash equilibrium at an exponential rate \citep{daskalakis2018last,wei2020linear}.
However, existing guarantees require that players observe the exact gradient vectors of their utility functions at each iteration, which we call \textit{full feedback}.  

We generalize the full feedback setting to the \textit{noisy feedback} setting, where players can only observe the gradient vectors with additive noise at each iteration; this setting is also called \textit{semi-bandit feedback}. 
For this setting, the celebrated OMWU is not guaranteed to have the last-iterate convergence and may diverge or enter a limit cycle, as shown in Figure~\ref{fig:cycle with OMWU}. 
It has been so far guaranteed only in some restricted games, such as those with a strict Nash equilibrium, in the noisy feedback setting~\citep{Heliou2017Learning,giannou2021convergence}. 

To this end, we propose \textit{Mutant} MWU\footnote{An implementation of our method is available at \url{https://github.com/CyberAgentAILab/m2wu}.} (M2WU) as the first learning algorithm that enjoys the last-iterate convergence with noisy feedback. 
M2WU is inspired by the fact that MWU is tantamount to replicator dynamics (RD), which is widely used in evolutionary game theory~\citep{borgers:jet:1997,Bloembergen2015}. 
Our M2WU is designed so that it corresponds to replicator-mutator dynamics (RMD)~\citep{Hofbauer1998,hofbauer:mor:2009,zagorsky:plosone:2013,Bauer:2019}, where each player may mutate his/her action. RMD has a unique stationary point that is asymptotically stable. Then, introducing mutation stabilizes the dynamics and empirically makes numerical errors in computation small~\citep{zagorsky:plosone:2013}. Figure~\ref{fig:learningdynamics} demonstrates that RMD clearly converges to a near-equilibrium in a biased Rock-Paper-Scissors game, while RD oscillates around an equilibrium. Our M2WU inherits these advantages via an additional mutation term. 

Starting with the full feedback case, we show that M2WU with a constant learning rate converges to a stationary point of RMD, which is known to be an approximate Nash equilibrium.
The amount of approximation is specified by the mutation rates.
Convergence occurs at an exponentially fast rate. 
Although OMWU achieves a similar convergence rate, it requires that the equilibrium in underlying games must be unique to establish convergence at that rate \citep{daskalakis2018last,wei2020linear}. 
The convergence guarantee of M2WU holds with noisy feedback under mild conditions for noise influencing the player's observations (zero-mean martingale noise with tame second-moment tails).
Specifically, M2WU converges to the stationary point almost surely. 
We utilize the fact that M2WU forms a continuous-time dynamics (RMD) and the existence of its Lyapunov function. 
In contrast, the existing convergence proof of OMWU depends on the path length of the observed gradient vectors~\citep{mertikopoulos2018optimistic,wei2020linear}, making such a guarantee with noise difficult. 

Surprisingly, in both feedback settings, we successfully establish convergence to an exact Nash equilibrium via iteratively adapting the mutation term according to the recently maintained strategy by M2WU.
To the best of our knowledge, the proposed M2WU with an appropriate choice of the update interval is the first to exhibit the last-iterate convergence to an exact Nash equilibrium with noisy feedback. 
We further empirically demonstrate that M2WU outperforms MWU and OMWU in several games in exploitability and convergence rate, regardless of which feedback is applied.

\ifarxiv
\begin{figure*}[t!]
    \centering
    \begin{minipage}[t]{0.24\textwidth}
        \centering
        \includegraphics[width=\linewidth]{figs/continuous_time/brps_rd_aaai2023.pdf}
        \subcaption{RD}\label{fig:FTRL-trajectory}
    \end{minipage}
    \begin{minipage}[t]{0.24\textwidth}
        \centering
        \includegraphics[width=\linewidth]{figs/continuous_time/brps_rmd_0.01_aaai2023.pdf}
        \subcaption{RMD ($\mu=0.01$)}\label{fig:M-FTRL-trajectory-0.01}
    \end{minipage}
    \begin{minipage}[t]{0.24\textwidth}
        \centering
        \includegraphics[width=\linewidth]{figs/continuous_time/brps_rmd_0.1_aaai2023.pdf}
        \subcaption{RMD ($\mu=0.1$)}\label{fig:M-FTRL-trajectory-0.1}
    \end{minipage}
    \begin{minipage}[t]{0.24\textwidth}
        \centering
        \includegraphics[width=\linewidth]{figs/continuous_time/brps_rmd_1_aaai2023.pdf}
        \subcaption{RMD ($\mu=1.0$)}\label{fig:M-FTRL-trajectory-1.0}
    \end{minipage}
    \caption{
    Learning dynamics of RD and RMD in biased Rock-Paper-Scissors (the game matrix is given by $[[0,-3,1],~[3,0,-1],~[-1,1,0]]$.
    The red star represents the Nash equilibrium point of the game.
    }
    \label{fig:learningdynamics}
\end{figure*}
\fi

\section{RELATED LITERATURE}
\paragraph{Last-Iterate Convergence with Full Feedback.}
Recently, various optimistic learning algorithms \citep{rakhlin2013online,rakhlin2013optimization} such as optimistic Follow the Regularized Leader (FTRL) and optimistic Mirror Descent have been proposed, and their last-iterate convergence guarantees are proven with full feedback.
In particular, last-iterate convergence for OMWU \citep{daskalakis2018last,wei2020linear,lei2021last,farina2022kernelized}, Optimistic Gradient Descent Ascent (OGDA) \citep{daskalakis2017training,mertikopoulos2018optimistic,daskalakis2018limit,liang2019interaction,golowich2020tight,wei2020linear,de2022convergence}, and extra-gradient algorithms \citep{golowich2020last,mokhtari2020unified,cai2022tight} have been proven in various settings such as minimax optimization and monotone games.
Some studies have proposed alternative approaches that exhibit last-iterate convergence by perturbing each player's utility function via strongly convex functions \citep{cen2021fast,perolat2021poincare,liu2022power,bernasconi2022last} or by utilizing the asymmetric information assumption \citep{nguyen2021last}.
Notably, \citet{abe2022mutationdriven} analyze a continuous-time version of M2WU. % dynamics that is inspired by RMD.
However, the last-iterate convergence properties 
% of these algorithms 
are guaranteed only with full feedback, not with noisy feedback.

\paragraph{Last-Iterate Convergence with Noisy Feedback.}
A few studies have been done to prove last-iterate convergence with noisy feedback. 
Most existing studies discuss last-iterate convergence under the assumption that the game's equilibrium is a pure (or strict).
\citet{Heliou2017Learning} prove the convergence of an MWU-based algorithm with noise for the potential game, in which the game always has a pure Nash equilibrium, with the help of the stochastic approximation technique. There are also analyses with FTRL-based algorithms with noise \citep{giannou2021convergence, giannou2021Survival}.
Such results have been obtained under other strong assumptions, such as strict (or strong) monotonicity \citep{bravo2018bandit, hsieh2019ontheconvergence,kannan2019optimal,azizian2021LastIterate}, strict variational stability \citep{mertikopoulos2018optimistic, mertikopoulos2019learning}, and unconstrained action set \citep{hsieh2022no}.
Another approach is to use a two-time scaling, i.e., fixing the strategies of the two players to obtain sufficient samples for accurate estimates of the expected value of the utility, e.g., \citet{wei2021last}.

\section{PRELIMINARIES}
\subsection{Two-Player Zero-Sum Normal-Form Game}
A two-player normal-form game is defined as $\langle A_1, A_2, u_1, u_2\rangle$, where $A_i$ and $u_i:A_1\times A_2\to [-u_{\max}, u_{\max}]$ denote the finite action set for each player $i\in \{1,2\}$ and a utility function for player $i$, respectively.
In a two-player zero-sum normal-form game, there are conditions on the utility function: $u_1(a_1, a_2)=-u_2(a_1, a_2)$ for all $a_1\in A_1$ and $a_2\in A_2$.
% Each player $i$ selects action $a_i\in A_i$ simultaneously.
% After the selection, each player $i$ receives utility $u_i(a_1, a_2)$.
We denote $\Delta(A_i)= \{p \in [0, 1]^{|A_i|} ~|~ \sum_{a\in A_i} p(a) = 1\}$ as a probability simplex on $A_i$ and $\pi_i\in \Delta(A_i)$ as a {\it mixed strategy} for player $i$.
Further, we denote by $\pi=(\pi_1, \pi_2)$ the {\it strategy profile}.
For a given strategy profile $\pi$, the expected value of the utility for player $i$ is written as follows $v_i^{\pi}=\mathbb{E}_{a\sim \pi}\left[u_i(a_1, a_2)\right]$.
We also define the conditional expected utility with action $a_i\in A_i$ as $q^{\pi}_i(a_i)=\mathbb{E}_{a_{-i}\sim \pi_{-i}}[u_i(a_i, a_{-i}) | a_i]$, where $-i$ represents the opponent to the player $i$.
We denote $q_i^{\pi}=(q_i^{\pi}(a))_{a\in A_i}$ as the conditional expected utility vector.

\subsection{Nash Equilibrium and Exploitability}
A {\it Nash equilibrium} \citep{nash1951non} is a widely used solution concept for games. 
In a Nash equilibrium, no player can improve his/her expected utility by deviating from his/her specified strategy.
In two-player zero-sum normal-form games, a strategy profile $\pi^{\ast}=(\pi_1^{\ast}, \pi_2^{\ast})$ is called a Nash equilibrium if for any $\pi_1\in \Delta(A_1)$ and $\pi_2\in \Delta(A_2)$,
\begin{align*}
    v_1^{\pi_1^{\ast}, \pi_2} \geq v_1^{\pi_1^{\ast}, \pi_2^{\ast}} \geq v_1^{\pi_1, \pi_2^{\ast}}.
\end{align*}
We denote the set of Nash equilibria by $\Pi^{\ast}$.
An {\it $\epsilon$-Nash equilibrium} $(\pi_1, \pi_2)$ is an approximation of a Nash equilibrium, which satisfies the following inequality:
\begin{align*}
    \max_{\tilde{\pi}_1\in \Delta(A_1)}v_1^{\tilde{\pi}_1, \pi_2} + \max_{\tilde{\pi}_2\in \Delta(A_2)}v_2^{\pi_1, \tilde{\pi}_2} &
    \leq \epsilon.
\end{align*}
Furthermore, we define $\mathrm{explt}(\pi)=\max_{\tilde{\pi}_1\in \Delta(A_1)}v_1^{\tilde{\pi}_1, \pi_2} + \max_{\tilde{\pi}_2\in \Delta(A_2)}v_2^{\pi_1, \tilde{\pi}_2}$ as {\it exploitability} of the strategy profile $\pi$.
Exploitability is widely used to assess how close $\pi$ is to Nash equilibrium in two-player zero-sum games and always takes a non-negative value \citep{johanson2011accelerating,johanson2012finding,lockhart2019computing,timbers2020approximate,abe2020off}. 
A strategy profile $\pi$ has exploitability of $0$ if and only if $\pi$ is a Nash equilibrium.

\subsection{Problem Setting}
% \begin{figure}
%     \centering
%     \includegraphics[width=0.9\linewidth]{figs/learning in games.pdf}
%     \caption{Procedure of repeated two-player zero-sum normal-form games.}
%     \label{fig:problem_setting}
% \end{figure}
% In this study, we consider a setting where the following process is repeated (see also Figure \ref{fig:problem_setting}):
In this study, we consider a setting where the following process is repeated:
1) At each iteration $t \in \mathbb{N}$, each player $i\in \{1,2\}$ determines the (mixed) strategy $\pi_i^t\in \Delta(A_i)$ based on the previously observed feedback;
2) Each player $i$ observes the new feedback $\hat{q}_i^{\pi^t}$ with respect to the gradient vector of the expected utility function $\nabla_{\pi_i^t}v_i^{\pi^t}=q_i^{\pi^t}$.
This study considers two feedback settings: {\it full feedback} and {\it noisy feedback}.
In the full feedback setting, each player $i$ observes the conditional expected utility vector $q_i^{\pi^t}$ as feedback, i.e., $\hat{q}_i^{\pi^t}=q_i^{\pi^t}$.
In the noisy feedback setting, at each iteration $t$, each player observes the noisy conditional expected utility vector
\begin{align*}
     \hat{q}_i^{\pi^t}(a) = {q}_i^{\pi^t}(a) + \xi^t(a) \quad \text{for } a \in A_i,
\end{align*}
where the sequence of the noise vectors $ (\xi^t(a))_{a \in A_i}$ is independent over $a$ and $t$. 
This type of noise-additive setting is standard in recent research \citep{Heliou2017Learning, bravo2018bandit, giannou2021convergence, giannou2021Survival}.

Multiplicative Weights Update (MWU) is a widely used algorithm for learning a Nash equilibrium.
In MWU, each player $i$ updates her strategy $\pi_i^t$ at iteration $t$ as follows:
\begin{align*}
    \pi_i^{t+1}(a) &= \frac{\pi_i^t(a)\exp\left(\eta_t \hat{q}_i^{\pi^t}(a)\right)}{\sum_{a'\in A_i}\pi_i^t(a')\exp\left(\eta_t \hat{q}_i^{\pi^t}(a')\right)},
\end{align*}
where $\eta_t>0$ is a learning rate.

\subsection{Other Notations}
We denote the interior of $\Delta(A_i)$ by $\Delta^{\circ}(A_i) = \{p \in \Delta(A_i) ~|~ \forall a\in A_i, ~p(a) > 0\}$.
% For a strictly convex and continuously differentiable function $\psi$, the associated {\it Bregman divergence} is defined as $D_{\psi}(x, x')=\psi(x) - \psi(x') - \langle \nabla \psi(x'), x-x'\rangle$.
The Kullback-Leibler divergence is defined by $\mathrm{KL}(x, y)=\sum_i x_i\ln \frac{x_i}{y_i}$.
Besides, with a slight abuse of notation, we denote the sum of Kullback-Leibler divergences as $\mathrm{KL}(\pi, \pi')=\sum_{i=1}^2\mathrm{KL}(\pi_i, \pi_i')$.

\section{MUTANT MWU}
\label{sec:m2wu}
This section proposes a mutant Multiplicative Weights Update (M2WU) algorithm.
M2WU is a variant of MWU, which adds a mutation (perturbation) term to the gradient vector.
Specifically, M2WU updates each player's strategy by the following update rule:
\begin{align}
    \label{eq:m2wu}
    \pi_i^{t+1}(a) &= \frac{\pi_i^t(a)\exp\left(\eta_t q_i^{\mu,t}(a)\right)}{\sum_{a'\in A_i}\pi_i^t(a')\exp\left(\eta_t q_i^{\mu,t}(a')\right)}, \\
    q_i^{\mu,t}(a) &= \hat{q}_i^{\pi^t}(a) + \frac{\mu}{\pi_i^t(a)}\left(r_i(a)-\pi_i^t(a)\right) \nonumber,
\end{align}
where $\mu \in (0,1]$ is the {\it mutation rate}, and $r_i\in \Delta^{\circ}(A_i)$ is the {\it reference strategy}.
We call $q_i^{\mu,t}=(q_i^{\mu,t}(a))_{a\in A_i}$ the {\it mutation gradient}.

The pseudo-code of M2WU is Algorithm~\ref{alg:m2wu} with $N=\infty$.

The mutation gradient $q_i^{\mu,t}$ is inspired by RMD, which is governed by the following ordinary differential equation:
\ifarxiv
\begin{align}
\label{eq:rmd}
\begin{aligned}
    \frac{d}{dt}\pi_i^t(a) = \pi_i^t(a)\left(q_i^{\pi^t}(a) - v_i^{\pi^t}\right) + \mu\left(r_i(a)-\pi_i^t(a)\right).
\end{aligned}  \tag{RMD}
\end{align}
\else
\begin{align}
\label{eq:rmd}
\begin{aligned}
    \frac{d}{dt}\pi_i^t(a) =& \pi_i^t(a)\left(q_i^{\pi^t}(a) - v_i^{\pi^t}\right) + \mu\left(r_i(a)-\pi_i^t(a)\right).
\end{aligned}  \tag{RMD}
\end{align}
\fi
RMD is the continuous-time version of M2WU and has been reported to stabilize the learning dynamics \citep{bomze:geb:1995,Bauer:2019}.
Intuitively, the mutation term $\mu\left(r_i(a)-\pi_i^t(a)\right)$ of RMD has the role of slightly shifting the trajectory of strategies from one of RD.
This allows the trajectory to escape from the cyclic orbits and allows it to converge to an approximate Nash equilibrium as a stationary point of RMD.
Figure \ref{fig:learningdynamics} illustrates the trajectories of RD and RMD with $\mu\in\{0.01, 0.1, 1.0\}$ in a biased version of the Rock-Paper-Scissors game.
From Figure \ref{fig:FTRL-trajectory}, the trajectory of RD cycles and fails to converge to a Nash equilibrium since the equilibrium is a mixed strategy with full support.
On the other hand, as depicted in Figures~\ref{fig:M-FTRL-trajectory-0.01}-\ref{fig:M-FTRL-trajectory-1.0}, RMD's trajectory converges to a unique stationary point in this game.
In fact, \citet{abe2022mutationdriven} proved this convergence property of RMD for any two-player zero-sum normal-form games.

We note that M2WU with a constant learning rate can be viewed as an instantiation of the discrete-time Mutant FTRL algorithm with entropy regularization \citep{abe2022mutationdriven}.
The squared $\ell^2$-norm regularization, as used in OGDA \citep{daskalakis2018limit,wei2020linear}, cannot be used with the mutation term since it allows strategies to run into the boundary of $\Delta(A_i)$ where the mutation term is undefined.
It is interesting future work to find appropriate mutation terms for other regularizers as well.

\section{CONVERGENCE TO AN APPROXIMATE NASH EQUILIBRIUM}
\label{sec:convergence}
This section mainly shows that the updated strategy profile $\pi^t$ converges to a stationary point of \eqref{eq:rmd}.
We denote the stationary point of (\ref{eq:rmd}) with fixed $\mu$ and $r=\left(r_i\right)_{i=1}^2$ by $\pi^{\mu,r}$.

\begin{figure}[t!]
\begin{algorithm}[H]
    \caption{M2WU for player $i$. The algorithm with $N=\infty$ corresponds to M2WU with a fixed reference strategy.}
    \label{alg:m2wu}
    \begin{algorithmic}[1]
    \REQUIRE{Learning rate sequence $\{\eta_t\}_{t \ge 0}$, mutation rate $\mu$, update frequency $N$, initial strategy $\pi_i^0$, initial reference strategy $r_i^0$}
    \STATE $k\gets 0, ~\tau \gets 0$
    \FOR{$t=0,1,2,\cdots$}
        \STATE Observe the (noisy) gradient vector $\hat{q}_i^{\pi^t}$.
        \FOR{$a\in A_i$}
            \STATE Compute next action probability $\pi_i^{t+1}(a)$ by $\pi_i^{t+1}(a) = \frac{\pi_i^t(a)\exp\left(\eta_t q_i^{\mu,t}(a)\right)}{\sum_{a'\in A_i}\pi_i^t(a')\exp\left(\eta_t q_i^{\mu,t}(a')\right)},$
            \STATE where $q_i^{\mu,t}(a) = \hat{q}_i^{\pi^t}(a) + \frac{\mu}{\pi_i^t(a)}\left(r_i^k(a)-\pi_i^t(a)\right)$
        \ENDFOR
        \STATE $\tau \gets \tau + 1$
        \IF{$\tau=N$}
            \STATE $k\gets k+1, ~\tau\gets 0$
            \STATE $r_i^k\gets \pi_i^t$
        \ENDIF
    \ENDFOR
    \end{algorithmic}
\end{algorithm}
\end{figure}

\subsection{Full Feedback Setting}
First, we establish the last-iterate convergence rate of M2WU with {\it full feedback}.
Recall that in the full feedback setting, each player $i$ observes the conditional expected utility vector $\hat{q}_i^{\pi^t}=q_i^{\pi^t}$ as feedback.
The following convergence result for M2WU with a constant learning rate $\eta_t=\eta$ is obtained in the full feedback setting:
\begin{theorem}
\label{thm:kl_convergence}
Let $\pi^{\mu,r}\in \prod_{i=1}^2 \Delta(A_i)$ be a stationary point of (\ref{eq:rmd}).
If we use the constant learning rate sequence in M2WU, $\forall t \ge 0 : \eta_t = \eta  \in (0,\min(\frac{\mu\alpha}{\mu^2\beta+\gamma},\zeta))$, the strategy $\pi^t$ updated by M2WU satisfies that for any initial strategy profile $\pi^{0}\in \prod_{i=1}^2\Delta^{\circ}(A_i)$ and $t\geq 0$:
\begin{align*}
    \mathrm{KL}(\pi^{\mu,r}, \pi^t) \leq \mathrm{KL}(\pi^{\mu,r}, \pi^0) (1- \eta(\mu\alpha - \eta(\mu^2\beta + \gamma)))^t,
\end{align*}
where $\alpha, \beta, \gamma$, and $\zeta$ are constants that depend only on $\pi_0$, $\pi^{\mu,r}$, and $r$.
\end{theorem}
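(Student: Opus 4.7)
The plan is to treat $\mathrm{KL}(\pi^{\mu,r}, \pi^t)$ as a Lyapunov function and establish a multiplicative one-step contraction of the form $\mathrm{KL}(\pi^{\mu,r}, \pi^{t+1}) \le (1-\eta(\mu\alpha - \eta(\mu^2\beta+\gamma)))\,\mathrm{KL}(\pi^{\mu,r}, \pi^t)$; iterating this bound immediately yields the theorem. Since the M2WU step is an entropic mirror-descent step driven by the gradient $q^{\mu,t}$, the standard three-point identity for the KL Bregman divergence gives
\begin{align*}
\mathrm{KL}(\pi^{\mu,r}, \pi^{t+1}) = \mathrm{KL}(\pi^{\mu,r}, \pi^t) - \eta\sum_i\langle q_i^{\mu,t}, \pi_i^{\mu,r} - \pi_i^{t+1}\rangle - \mathrm{KL}(\pi^{t+1}, \pi^t).
\end{align*}
I would split $\langle q_i^{\mu,t}, \pi_i^{\mu,r}-\pi_i^{t+1}\rangle$ into a \emph{progress} term at $\pi_i^t$ and a \emph{drift} correction $\langle q_i^{\mu,t}, \pi_i^t-\pi_i^{t+1}\rangle$, and analyse each separately.

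For the progress term, two structural facts combine. First, the bilinear zero-sum form $q_1^{\pi}=A\pi_2$, $q_2^{\pi}=-A^\top\pi_1$ gives $\sum_i\langle q_i^{\pi^t}-q_i^{\pi^{\mu,r}}, \pi_i^t-\pi_i^{\mu,r}\rangle=0$, reducing $\sum_i\langle q_i^{\pi^t},\pi_i^{\mu,r}-\pi_i^t\rangle$ to the same expression evaluated at $\pi^{\mu,r}$. Second, the stationarity of \eqref{eq:rmd} rewrites as $q_i^{\pi^{\mu,r}}(a)-v_i^{\pi^{\mu,r}}=\mu(\pi_i^{\mu,r}(a)-r_i(a))/\pi_i^{\mu,r}(a)$. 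Substituting (using that constants along $\mathbf{1}$ are harmless because both $\pi_i^{\mu,r}$ and $\pi_i^t$ sum to $1$) and combining with the explicit mutation contribution $\mu\langle(r_i-\pi_i^t)/\pi_i^t,\pi_i^{\mu,r}-\pi_i^t\rangle$ telescopes algebraically to
\begin{align*}
\sum_i\langle q_i^{\mu,t}, \pi_i^{\mu,r} - \pi_i^t\rangle = \mu\sum_{i,a} r_i(a)\,\frac{(\pi_i^{\mu,r}(a)-\pi_i^t(a))^2}{\pi_i^{\mu,r}(a)\,\pi_i^t(a)}.
\end{align*}
This is the mutation-induced analogue of strong monotonicity. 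The chi-squared bound $\sum_a (p_a-q_a)^2/q_a\ge\mathrm{KL}(p,q)$, together with $\alpha:=\min_{i,a}r_i(a)/\pi_i^{\mu,r}(a)>0$, lower-bounds it by $\mu\alpha\,\mathrm{KL}(\pi^{\mu,r},\pi^t)$.

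The drift term requires the crucial observation that the softmax is invariant under additive shifts of its argument, so $q_i^{\mu,t}$ may be replaced by $\tilde q_i^{\mu,t}:=q_i^{\mu,t}-v_i^{\pi^{\mu,r}}\mathbf{1}$ in every inner product against a difference of probability vectors. Using the same stationarity identity one finds $\tilde q_i^{\mu,t}(a)=[q_i^{\pi^t}(a)-q_i^{\pi^{\mu,r}}(a)]+\mu r_i(a)(\pi_i^{\mu,r}(a)-\pi_i^t(a))/(\pi_i^t(a)\pi_i^{\mu,r}(a))$, so $\|\tilde q^{\mu,t}\|_\infty$ is bounded by a constant times $\|\pi^t-\pi^{\mu,r}\|_\infty$. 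Pinsker's inequality then yields $\|\tilde q^{\mu,t}\|_\infty^2\le(\mu^2\beta+\gamma)\,\mathrm{KL}(\pi^{\mu,r},\pi^t)$ for appropriate $\beta,\gamma$ depending on $u_{\max}$, $r$, and a uniform lower bound on $\min_{i,a}\pi_i^t(a)$. Applying Young's inequality to $-\eta\langle\tilde q^{\mu,t},\pi^t-\pi^{t+1}\rangle-\mathrm{KL}(\pi^{t+1},\pi^t)$ together with $\mathrm{KL}(\pi^{t+1},\pi^t)\ge\tfrac12\|\pi^{t+1}-\pi^t\|_1^2$ absorbs the drift into $\tfrac{\eta^2}{2}\|\tilde q^{\mu,t}\|_\infty^2$, giving the multiplicative residual $\eta^2(\mu^2\beta+\gamma)\,\mathrm{KL}(\pi^{\mu,r},\pi^t)/2$ and the desired contraction.

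The main obstacle is the circular dependence on $\min_{i,a,t}\pi_i^t(a)$: the bound on $\|\tilde q^{\mu,t}\|_\infty^2$ requires a uniform positive lower bound on iterate coordinates, but such a bound can only be secured once the contraction itself is established. I would resolve this by induction, choosing $\zeta$ small enough that the one-step estimate already guarantees $\mathrm{KL}(\pi^{\mu,r},\pi^{t+1})\le\mathrm{KL}(\pi^{\mu,r},\pi^0)$; then every iterate stays in the compact sublevel set $\{\pi\in\prod_i\Delta^\circ(A_i):\mathrm{KL}(\pi^{\mu,r},\pi)\le\mathrm{KL}(\pi^{\mu,r},\pi^0)\}$, whose elements are uniformly bounded away from the simplex boundary by a constant determined by $\pi^0$ and $\pi^{\mu,r}$. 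Calibrating $\zeta$ and the constants $\beta,\gamma$ so that the same values appear in both the a priori bound on $\|\tilde q^{\mu,t}\|_\infty$ and the final rate while keeping $\mu\alpha-\eta(\mu^2\beta+\gamma)>0$ is the delicate bookkeeping the proof must execute.
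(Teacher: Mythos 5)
Your proposal is correct and takes essentially the same route as the paper: the same one-step KL decomposition, the same stationarity identity turning the progress term into $\mu\sum_{i,a}r_i(a)\,(\pi_i^t(a)-\pi_i^{\mu,r}(a))^2/(\pi_i^{\mu,r}(a)\pi_i^t(a))$, the same chi-square/Pinsker estimates producing $\alpha,\beta,\gamma$, and the same induction keeping the iterates in the sublevel set $\{\pi:\mathrm{KL}(\pi^{\mu,r},\pi)\le\mathrm{KL}(\pi^{\mu,r},\pi^0)\}$ so that the coordinates stay above a constant $\rho$. The only cosmetic difference is how the second-order term is handled: you absorb the drift via Young's inequality and Pinsker applied to $\mathrm{KL}(\pi^{t+1},\pi^t)$, while the paper bounds $\mathrm{KL}(\pi^t,\pi^{t+1})$ directly using Jensen and $e^x\le 1+x+x^2$, which is where its constant $\zeta$ enters.
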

This result means that for a fixed $\mu$ and $r$, $\pi^t$ converges to $\pi^{\mu,r}$ exponentially fast.
From this theorem, $\pi^t$ converges to a $2\mu$-Nash equilibrium because $\pi^{\mu,r}$ is a $2\mu$-Nash equilibrium \citep{Bauer:2019}:
\begin{corollary}
\label{cor:exploitability_convergence}
For any constant learning rate $\eta_t = \eta \in (0,\min(\frac{\mu\alpha}{\mu^2\beta+\gamma},\zeta))$, the exploitability for M2WU is bounded as:
\ifarxiv
\begin{align*}
    \mathrm{explt}(\pi^t) &\leq \mathrm{explt}(\pi^{\mu,r}) + 2u_{\max}\sqrt{\mathrm{KL}(\pi^{\mu,r}, \pi^0)}(1 - C)^{\frac{t}{2}} \\
    &\leq 2\mu + 2u_{\max}\sqrt{\mathrm{KL}(\pi^{\mu,r}, \pi^0)}(1 - C)^{\frac{t}{2}},
\end{align*}
\else
\begin{align*}
    \mathrm{explt}(\pi^t) &\leq \mathrm{explt}(\pi^{\mu,r}) + 2u_{\max}\sqrt{\mathrm{KL}(\pi^{\mu,r}, \pi^0)}(1 - C)^{\frac{t}{2}} \\
    &\leq 2\mu + 2u_{\max}\sqrt{\mathrm{KL}(\pi^{\mu,r}, \pi^0)}(1 - C)^{\frac{t}{2}},
\end{align*}
\fi
where $C=\eta(\mu\alpha - \eta(\mu^2\beta + \gamma))$, and $\alpha, \beta, \gamma$ and $\zeta$ are the same constants used in Theorem \ref{thm:kl_convergence}.
\end{corollary}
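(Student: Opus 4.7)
The plan is to reduce the corollary to Theorem \ref{thm:kl_convergence} by establishing a Lipschitz-type bound of exploitability with respect to the $\ell^1$-distance on strategy profiles, then converting the KL bound from Theorem \ref{thm:kl_convergence} into an $\ell^1$ bound via Pinsker's inequality.

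First, I would prove the auxiliary perturbation bound
\begin{align*}
\bigl|\mathrm{explt}(\pi^t) - \mathrm{explt}(\pi^{\mu,r})\bigr|
\leq u_{\max}\bigl(\|\pi_1^t-\pi_1^{\mu,r}\|_1 + \|\pi_2^t-\pi_2^{\mu,r}\|_1\bigr).
\end{align*}
The key observation is that $v_1^{\tilde{\pi}_1,\pi_2}$ is linear in $\pi_2$ with coefficients in $[-u_{\max},u_{\max}]$, so
\begin{align*}
\Bigl|\max_{\tilde{\pi}_1}v_1^{\tilde{\pi}_1,\pi_2^t}-\max_{\tilde{\pi}_1}v_1^{\tilde{\pi}_1,\pi_2^{\mu,r}}\Bigr|
\leq \max_{\tilde{\pi}_1}\bigl|v_1^{\tilde{\pi}_1,\pi_2^t}-v_1^{\tilde{\pi}_1,\pi_2^{\mu,r}}\bigr|
\leq u_{\max}\|\pi_2^t-\pi_2^{\mu,r}\|_1,
\end{align*}
and symmetrically for the second summand of $\mathrm{explt}$. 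Adding the two inequalities yields the displayed bound.

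Next, I would apply Pinsker's inequality, $\|\pi_i^t-\pi_i^{\mu,r}\|_1 \leq \sqrt{2\,\mathrm{KL}(\pi_i^{\mu,r},\pi_i^t)}$, together with $\sqrt{a}+\sqrt{b}\leq \sqrt{2(a+b)}$, to obtain
\begin{align*}
\|\pi_1^t-\pi_1^{\mu,r}\|_1 + \|\pi_2^t-\pi_2^{\mu,r}\|_1
\leq 2\sqrt{\mathrm{KL}(\pi_1^{\mu,r},\pi_1^t)+\mathrm{KL}(\pi_2^{\mu,r},\pi_2^t)}
= 2\sqrt{\mathrm{KL}(\pi^{\mu,r},\pi^t)},
\end{align*}
using the convention $\mathrm{KL}(\pi,\pi')=\sum_i\mathrm{KL}(\pi_i,\pi_i')$ introduced in the preliminaries. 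Combining with the Lipschitz bound gives $|\mathrm{explt}(\pi^t)-\mathrm{explt}(\pi^{\mu,r})|\leq 2u_{\max}\sqrt{\mathrm{KL}(\pi^{\mu,r},\pi^t)}$.

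Plugging in Theorem \ref{thm:kl_convergence} with $C=\eta(\mu\alpha-\eta(\mu^2\beta+\gamma))$ gives the first inequality of the corollary after rearranging. The second inequality then follows immediately from the known fact \citep{Bauer:2019} that any stationary point $\pi^{\mu,r}$ of (\ref{eq:rmd}) is a $2\mu$-Nash equilibrium, so $\mathrm{explt}(\pi^{\mu,r})\leq 2\mu$. The only subtle point is the constant: one must be careful to get $2u_{\max}$ rather than a looser $2\sqrt{2}\,u_{\max}$, and the use of $\sqrt{a}+\sqrt{b}\leq\sqrt{2(a+b)}$ combined with the factor $\sqrt{2}$ from Pinsker is what makes the constants match exactly. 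Everything else is routine.
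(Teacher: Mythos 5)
Your proposal is correct and follows essentially the same route as the paper's proof: bounding the gap $\mathrm{explt}(\pi^t)-\mathrm{explt}(\pi^{\mu,r})$ by the max of utility differences (bilinearity/H\"{o}lder, giving the $u_{\max}\,\ell^1$ Lipschitz bound), converting to KL via Pinsker's inequality together with $\sqrt{a}+\sqrt{b}\leq\sqrt{2(a+b)}$, invoking Theorem \ref{thm:kl_convergence}, and finally using the $2\mu$-Nash property of the stationary point from \citet{Bauer:2019} (which the paper re-derives from their Lemma 3.5 via $q_i^{\pi^{\mu,r}}(a_i)-v_i^{\pi^{\mu,r}}\leq\mu$ and the zero-sum condition). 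The constants match exactly as you note, so nothing is missing.
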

The proof of this corollary is shown in Appendix \ref{sec:appendix_proof_exploitability_convergence}.
We note that from the upper bound on the learning rate $\frac{\mu\alpha}{\mu^2\beta + \gamma}$ in Theorem \ref{thm:kl_convergence} and Corollary \ref{cor:exploitability_convergence}, $\eta$ should decrease in proportion to the decrease of $\mu$.
We will demonstrate this fact empirically in Figure \ref{fig:compare_mu_eta} in Section \ref{sec:experiments}.

\subsubsection{Proof Sketch of Theorem \ref{thm:kl_convergence}}
We sketch below the proof of Theorem \ref{thm:kl_convergence}.
Complete proofs for the theorem and associated lemmas are presented in Appendix \ref{sec:appendix_proof_kl_convergence}.

\paragraph{(1) Decomposing Single-Step Variation of $\mathrm{KL}(\pi^{\mu,r},\cdot)$.}
First, we derive the following difference equation for the Kullback-Leibler divergence between $\pi^{\mu,r}$ and $\pi^t$:
\ifarxiv
\begin{align}
\label{eq:kl_diff_decomposition}
    \mathrm{KL}(\pi^{\mu,r}, \pi^{t+1}) - \mathrm{KL}(\pi^{\mu,r}, \pi^t) = \eta \underbrace{\sum_{i=1}^2\left(v_i^{\pi_i^t, \pi_{-i}^{\mu,r}} + \mu - \mu\sum_{a\in A_i}r_i(a)\frac{\pi_i^{\mu,r}(a)}{\pi_i^t(a)}\right)}_{(\mathrm{A})} + \underbrace{\mathrm{KL}(\pi^t, \pi^{t+1})}_{(\mathrm{B})}.
\end{align}
\else
\begin{align}
    &\mathrm{KL}(\pi^{\mu,r}, \pi^{t+1}) - \mathrm{KL}(\pi^{\mu,r}, \pi^t) =  \label{eq:kl_diff_decomposition}
    \\
    & \eta \underbrace{\sum_{i=1}^2\!\left(\!v_i^{\pi_i^t, \pi_{-i}^{\mu,r}} \!+\! \mu \!-\! \mu\sum_{a\in A_i}r_i(a)\frac{\pi_i^{\mu,r}(a)}{\pi_i^t(a)}\!\right)\!}_{(\mathrm{A})} \!+ \underbrace{\mathrm{KL}(\pi^t, \pi^{t+1})}_{(\mathrm{B})}.\nonumber
\end{align}
\fi
Equation \eqref{eq:kl_diff_decomposition} stems from the fact that for any $\pi\in \prod_{i=1}^2\Delta(A_i)$, $ \mathrm{KL}(\pi, \pi^t) = \sum_{i=1}^2(\eta\langle \sum_{s=1}^{t-1} q_i^{\mu,s}, \pi_i^t - \pi_i\rangle - \psi_i(\pi_i^t) + \psi_i(\pi_i))$, where $\psi_i(p)=\sum_{a\in A_i}p(a)\ln p(a)$.
For the details of the proof, see Appendix~\ref{sec:appendix_proof_kl_convergence}.
Hereafter, we quantify the terms (A) and (B), respectively.

\paragraph{(2) Equivalence Notation of (A) in Quasi-Metric Form.}
First, we prove that the term (A) can be rewritten by the (pseudo) metric between $\pi^{\mu,r}$ and $\pi^t$.
\begin{lemma}
\label{lem:rmd_divergence}
Let $\pi^{\mu,r}\in \prod_{i=1}^2 \Delta(A_i)$ be a stationary point of (\ref{eq:rmd}).
Then, $\pi^t$ updated by M2WU satisfies that:
\ifarxiv
\begin{align*}
    \sum_{i=1}^2\left(v_i^{\pi_i^t, \pi_{-i}^{\mu,r}} + \mu - \mu\sum_{a\in A_i}r_i(a)\frac{\pi_i^{\mu,r}(a)}{\pi_i^t(a)}\right) = -\mu \sum_{i=1}^2\sum_{a\in A_i}r_i(a)\left(\sqrt{\frac{\pi_i^t(a)}{\pi_i^{\mu,r}(a)}} - \sqrt{\frac{\pi_i^{\mu,r}(a)}{\pi_i^t(a)}}\right)^2.
\end{align*}
\else
\begin{align*}
    &\sum_{i=1}^2\left(v_i^{\pi_i^t, \pi_{-i}^{\mu,r}} + \mu - \mu\sum_{a\in A_i}r_i(a)\frac{\pi_i^{\mu,r}(a)}{\pi_i^t(a)}\right) \\
    &= -\mu \sum_{i=1}^2\sum_{a\in A_i}r_i(a)\left(\sqrt{\frac{\pi_i^t(a)}{\pi_i^{\mu,r}(a)}} - \sqrt{\frac{\pi_i^{\mu,r}(a)}{\pi_i^t(a)}}\right)^2.
\end{align*}
\fi
\end{lemma}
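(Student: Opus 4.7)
The plan is a direct algebraic manipulation starting from the stationarity condition for \eqref{eq:rmd}. Setting $\frac{d}{dt}\pi_i^t(a)=0$ at $\pi=\pi^{\mu,r}$ and dividing by $\pi_i^{\mu,r}(a)>0$ gives the pointwise identity
\[
q_i^{\pi^{\mu,r}}(a) \;=\; v_i^{\pi^{\mu,r}} \;-\; \mu\frac{r_i(a)}{\pi_i^{\mu,r}(a)} \;+\; \mu.
\]
This is the only structural input we need from the definition of $\pi^{\mu,r}$, and it neatly isolates the $\frac{r_i(a)}{\pi_i^{\mu,r}(a)}$ ratio that we eventually want to pair up with the $\frac{\pi_i^{\mu,r}(a)}{\pi_i^t(a)}$ ratio on the left-hand side of the lemma.

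Next I would exploit the bilinearity of the expected utility. Since $q_i^{\pi^{\mu,r}}$ depends only on $\pi_{-i}^{\mu,r}$, we have $v_i^{\pi_i^t,\pi_{-i}^{\mu,r}}=\sum_{a\in A_i}\pi_i^t(a)\,q_i^{\pi^{\mu,r}}(a)$. Substituting the stationarity formula and using $\sum_a \pi_i^t(a)=1$ yields
\[
v_i^{\pi_i^t,\pi_{-i}^{\mu,r}} \;=\; v_i^{\pi^{\mu,r}} \;+\; \mu \;-\; \mu\sum_{a\in A_i}r_i(a)\frac{\pi_i^t(a)}{\pi_i^{\mu,r}(a)}.
\]
Adding $\mu-\mu\sum_a r_i(a)\frac{\pi_i^{\mu,r}(a)}{\pi_i^t(a)}$ and summing over $i\in\{1,2\}$, the zero-sum condition kills $\sum_i v_i^{\pi^{\mu,r}}=0$, and the constant terms combine through $\sum_a r_i(a)=1$ into $4\mu = 2\mu\sum_i\sum_a r_i(a)$. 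What remains is
\[
-\mu\sum_{i=1}^{2}\sum_{a\in A_i} r_i(a)\!\left(\frac{\pi_i^t(a)}{\pi_i^{\mu,r}(a)} + \frac{\pi_i^{\mu,r}(a)}{\pi_i^t(a)} - 2\right).
\]

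Finally, I would invoke the elementary identity $x + x^{-1} - 2 = (\sqrt{x}-\sqrt{x^{-1}})^2$ with $x=\pi_i^t(a)/\pi_i^{\mu,r}(a)>0$ (well-defined because $\pi_i^t\in\Delta^\circ(A_i)$ along the M2WU trajectory and $\pi_i^{\mu,r}\in\Delta^\circ(A_i)$, as is standard for stationary points of \eqref{eq:rmd} with $r_i\in\Delta^\circ(A_i)$). This converts the expression into the claimed form and completes the proof. There is really no hard step; the only subtlety is to recognize that the zero-sum assumption is precisely what is needed to cancel the $v_i^{\pi^{\mu,r}}$ terms after summing over players, and that the constants assemble to $-2$ exactly because $r_i$ is a probability vector.
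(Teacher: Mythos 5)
Your proof is correct and follows essentially the same route as the paper: the paper invokes Lemma 5.6 of \citet{abe2022mutationdriven} to obtain the identity $v_i^{\pi_i^t,\pi_{-i}^{\mu,r}} = v_i^{\pi^{\mu,r}} + \mu - \mu\sum_{a}r_i(a)\frac{\pi_i^t(a)}{\pi_i^{\mu,r}(a)}$, then cancels $\sum_i v_i^{\pi^{\mu,r}}=0$ by zero-sum and applies $x+x^{-1}-2=(\sqrt{x}-\sqrt{x^{-1}})^2$, exactly as you do. The only difference is that you re-derive that cited identity inline from the stationarity condition of \eqref{eq:rmd} (using interiority of $\pi^{\mu,r}$ and bilinearity), which makes the argument self-contained but not substantively different.
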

This result can be shown by using Lemma 5.6 in \citet{abe2022mutationdriven}.

\paragraph{(3) Quasi-Metric Upper Bound on the Term (B).}
Next, we upper bound the Kullback-Leibler divergence between $\pi^t$ and $\pi^{t+1}$ by the (pseudo) metric between $\pi^{\mu,r}$ and $\pi^t$:
\begin{lemma}
\label{lem:kl_path_ub}
For any fixed learning rate $\eta_t = \eta \in (0, \zeta)$, M2WU ensures for any $t\geq 0$:
\ifarxiv
\begin{align*}
    \mathrm{KL}(\pi^t, \pi^{t+1}) \leq 8 \eta^2u_{\max}^2\sum_{i=1}^2 \|\pi_i^t - \pi_i^{\mu,r}\|_1^2 + 8 \eta^2 \mu^2\sum_{i=1}^2 \sum_{a\in A_i}r_i(a)^2\left(\frac{1}{\pi_i^{\mu,r}(a)} - \frac{1}{\pi_i^t(a)}\right)^2,
\end{align*}
\else
\begin{align*}
    &\mathrm{KL}(\pi^t, \pi^{t+1}) \leq 8 \eta^2u_{\max}^2\sum_{i=1}^2 \|\pi_i^t - \pi_i^{\mu,r}\|_1^2 \\
    & + 8 \eta^2 \mu^2\sum_{i=1}^2 \sum_{a\in A_i}r_i(a)^2\left(\frac{1}{\pi_i^{\mu,r}(a)} - \frac{1}{\pi_i^t(a)}\right)^2,
\end{align*}
\fi
where $\zeta$ is the same constant in Theorem \ref{thm:kl_convergence}.
\end{lemma}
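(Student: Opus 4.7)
The approach is to bound the single-step KL divergence by the centered second moment of the mutation gradient under $\pi^t$ and then use stationarity of $\pi^{\mu,r}$ to recenter that gradient at $v_i^{\pi^{\mu,r}}$. Starting from the M2WU update $\pi_i^{t+1}(a) \propto \pi_i^t(a)\exp(\eta q_i^{\mu,t}(a))$, straightforward algebra yields
\[
\mathrm{KL}(\pi^t,\pi^{t+1}) = \sum_{i=1}^{2}\ln\!\sum_{a\in A_i}\pi_i^t(a)\exp\!\bigl(\eta(q_i^{\mu,t}(a)-\bar q_i^t)\bigr),
\]
where $\bar q_i^t := \langle \pi_i^t, q_i^{\mu,t}\rangle$. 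The choice of $\zeta$ will be made so that $\eta\,|q_i^{\mu,t}(a)-\bar q_i^t| \leq 1$ uniformly along the trajectory; an inequality of the form $e^z \leq 1+z+Cz^2$ on $|z|\leq 1$ together with $\ln(1+x)\leq x$ then gives
\[
\mathrm{KL}(\pi^t,\pi^{t+1}) \leq C\eta^2 \sum_{i=1}^{2}\sum_{a\in A_i}\pi_i^t(a)\bigl(q_i^{\mu,t}(a)-\bar q_i^t\bigr)^2 \leq C\eta^2\sum_{i=1}^{2}\sum_{a\in A_i}\pi_i^t(a)\bigl(q_i^{\mu,t}(a)-v_i^{\pi^{\mu,r}}\bigr)^2,
\]
the last step exploiting that the variance of $q_i^{\mu,t}$ under $\pi_i^t$ is minimized at the mean, so replacing $\bar q_i^t$ by any other centering only enlarges the sum.

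Next I would exploit the stationarity of $\pi^{\mu,r}$ for \eqref{eq:rmd}. Dividing the zero-drift condition by $\pi_i^{\mu,r}(a)$ gives $q_i^{\pi^{\mu,r}}(a) + \mu r_i(a)/\pi_i^{\mu,r}(a) - \mu = v_i^{\pi^{\mu,r}}$, so the definition of $q_i^{\mu,t}$ yields the clean decomposition
\[
q_i^{\mu,t}(a)-v_i^{\pi^{\mu,r}} = \bigl(q_i^{\pi^t}(a)-q_i^{\pi^{\mu,r}}(a)\bigr) + \mu\, r_i(a)\!\left(\frac{1}{\pi_i^t(a)}-\frac{1}{\pi_i^{\mu,r}(a)}\right).
\]
Applying $(x+y)^2\leq 2x^2+2y^2$, bounding the first piece by $|q_i^{\pi^t}(a)-q_i^{\pi^{\mu,r}}(a)|\leq u_{\max}\|\pi_{-i}^t-\pi_{-i}^{\mu,r}\|_1$ (a direct consequence of $|u_i|\leq u_{\max}$), and absorbing $\pi_i^t(a)\leq 1$ in the mutation piece, a summation over $i$ (using $\sum_i\|\pi_{-i}^t-\pi_{-i}^{\mu,r}\|_1^2 = \sum_i\|\pi_i^t-\pi_i^{\mu,r}\|_1^2$) produces the two advertised terms. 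A conservative choice of $C$ in the exponential inequality lets the numerical constants combine into the stated factor $8$.

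The main obstacle, and the reason one must restrict to $\eta<\zeta$, is justifying the uniform bound $\eta|q_i^{\mu,t}(a)-\bar q_i^t|\leq 1$. The mutation term $\mu r_i(a)/\pi_i^t(a)$ blows up as $\pi_i^t(a)\to 0$, so one has to show that the iterates remain in a compact subset of $\prod_i \Delta^{\circ}(A_i)$ bounded away from the simplex boundary by a quantity depending on $\pi^0$, $\pi^{\mu,r}$, and $r$. This can be argued by a small-step invariance argument: since the mutation term pushes the dynamics toward $r\in\prod_i\Delta^{\circ}(A_i)$, for $\eta$ sufficiently small (precisely, below some $\zeta$ depending on the starting support and on $r$) the coordinates cannot collapse to zero, and $\zeta$ is chosen to guarantee the required uniform coordinate lower bound. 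With this invariance in hand, the computation above closes the lemma.
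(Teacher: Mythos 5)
Your computational core is correct and is essentially the paper's argument in lightly different clothing. The paper also reduces $\mathrm{KL}(\pi^t,\pi^{t+1})$ to $\eta^2$ times a second moment of the mutation gradient: it uses Jensen to pass to a symmetrized double sum over action pairs $(a,a')$, applies $\exp(x)\le 1+x+x^2$ for $x\le 1$ (the linear term cancels by symmetry), and then invokes the same stationarity identity $q_i^{\pi^{\mu,r}}(a)+\mu r_i(a)/\pi_i^{\mu,r}(a)-\mu=v_i^{\pi^{\mu,r}}$, the bound $(x+y)^2\le 2(x^2+y^2)$, H\"older with $|u_i|\le u_{\max}$, and $\pi_i^t(a)\le 1$ to produce exactly the two advertised terms with factor $8$. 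Your exact identity $\mathrm{KL}(\pi^t,\pi^{t+1})=\sum_i\ln\sum_a\pi_i^t(a)\exp\bigl(\eta(q_i^{\mu,t}(a)-\bar q_i^t)\bigr)$ together with the variance-minimization recentering at $v_i^{\pi^{\mu,r}}$ is a slightly tighter route to the same intermediate bound, and your constants indeed close below $8$; this part is fine.

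The genuine gap is in your last paragraph, i.e.\ in how you justify $\eta\,|q_i^{\mu,t}(a)-\bar q_i^t|\le 1$ and what $\zeta$ is. Your proposed mechanism --- ``the mutation term pushes the dynamics toward $r$, so for small $\eta$ the coordinates cannot collapse'' --- is not substantiated (you give no quantitative lower bound $D$ and no proof of the claimed invariance), and it is also not how the constant $\zeta$ of Theorem~\ref{thm:kl_convergence} is actually defined. In the paper, $\rho=\min_{\pi\in\Omega^{\mu,r}}\min_{i,a}\pi_i(a)$ with $\Omega^{\mu,r}=\{\pi:\mathrm{KL}(\pi^{\mu,r},\pi)\le\mathrm{KL}(\pi^{\mu,r},\pi^0)\}$, and $\zeta=1/(2u_{\max}+\rho^{-1}\max_{i,a}r_i(a))$. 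The coordinate lower bound $\pi_i^t(a)\ge\rho$ is not obtained by any boundary-repulsion argument; it comes for free from the induction in the proof of Theorem~\ref{thm:kl_convergence}: assuming $\pi^t\in\Omega^{\mu,r}$, the exponent bound holds, this lemma applies, and together with Lemma~\ref{lem:rmd_divergence} it gives $\mathrm{KL}(\pi^{\mu,r},\pi^{t+1})\le\mathrm{KL}(\pi^{\mu,r},\pi^t)$, hence $\pi^{t+1}\in\Omega^{\mu,r}$ and the bound propagates. So the lemma is proved (and needed) only on the Lyapunov sublevel set, with the KL decrease and the staying-away-from-the-boundary property established jointly, rather than via a separate mutation-driven invariance; to make your sketch a proof you would either have to carry out that (nontrivial) invariance argument explicitly or, as the paper does, fold the lemma into the sublevel-set induction and define $\zeta$ through $\rho$.
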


\paragraph{(4) Putting It All Together.}
By combining (\ref{eq:kl_diff_decomposition}), Lemma \ref{lem:rmd_divergence}, and Lemma \ref{lem:kl_path_ub}, we get:
\ifarxiv
\begin{align*}
    \mathrm{KL}(\pi^{\mu,r}, \pi^{t+1}) - \mathrm{KL}(\pi^{\mu,r}, \pi^t) \leq& - \eta \mu \sum_{i=1}^2\sum_{a\in A_i}r_i(a)\left(\sqrt{\frac{\pi_i^t(a)}{\pi_i^{\mu,r}(a)}} - \sqrt{\frac{\pi_i^{\mu,r}(a)}{\pi_i^t(a)}}\right)^2 \\
    &+ 8 \eta^2 \mu^2\sum_{i=1}^2 \sum_{a\in A_i}r_i(a)^2\left(\frac{1}{\pi_i^{\mu,r}(a)} - \frac{1}{\pi_i^t(a)}\right)^2 \\
    &+ 8 \eta^2 u_{\max}^2\sum_{i=1}^2 \|\pi_i^t - \pi_i^{\mu,r}\|_1^2.
\end{align*}
\else
\begin{align*}
    \mathrm{KL}&(\pi^{\mu,r}, \pi^{t+1}) - \mathrm{KL}(\pi^{\mu,r}, \pi^t) \\
    \leq& - \eta \mu \sum_{i=1}^2\sum_{a\in A_i}r_i(a)\left(\sqrt{\frac{\pi_i^t(a)}{\pi_i^{\mu,r}(a)}} - \sqrt{\frac{\pi_i^{\mu,r}(a)}{\pi_i^t(a)}}\right)^2 \\
    &+ 8 \eta^2 \mu^2\sum_{i=1}^2 \sum_{a\in A_i}r_i(a)^2\left(\frac{1}{\pi_i^{\mu,r}(a)} - \frac{1}{\pi_i^t(a)}\right)^2 \\
    &+ 8 \eta^2 u_{\max}^2\sum_{i=1}^2 \|\pi_i^t - \pi_i^{\mu,r}\|_1^2.
\end{align*}
\fi

From Pinsker's inequality \citep{Tsybakov:1315296}, we can upper bound $\sum_{i=1}^2\sum_{a\in A_i}r_i(a)^2\left(\frac{1}{\pi_i^{\mu,r}(a)} - \frac{1}{\pi_i^t(a)}\right)^2$ and $\sum_{i=1}^2\|\pi_i^t - \pi_i^{\mu,r}\|_1^2$ by $\mathrm{KL}(\pi^{\mu,r}, \pi^t)$, respectively.
Furthermore, from Jensen's inequality, we can lower bound $\sum_{i=1}^2\sum_{a\in A_i}r_i(a)\left(\sqrt{\frac{\pi_i^t(a)}{\pi_i^{\mu,r}(a)}} - \sqrt{\frac{\pi_i^{\mu,r}(a)}{\pi_i^t(a)}}\right)^2$ by $\mathrm{KL}(\pi^{\mu,r}, \pi^t)$.
Therefore, for $\eta\in (0, \frac{\mu\alpha}{\mu^2\beta + \gamma})$, we have:
\begin{align*}
    % &\mathrm{KL}(\pi^{\mu,r}, \pi^{t+1}) - \mathrm{KL}(\pi^{\mu,r}, \pi^t) \\
    % &~\leq -\eta(\mu\alpha - \eta(\mu^2\beta + \gamma)) \mathrm{KL}(\pi^{\mu,r}, \pi^t).
    \!\mathrm{KL}(\pi^{\mu,r}, \pi^{t+1})\!\leq\! \left(1 \!-\!\eta(\mu\alpha \!-\! \eta(\mu^2\beta + \gamma))\right) \!\mathrm{KL}(\pi^{\mu,r}, \pi^t).
\end{align*}
% This means that $\mathrm{KL}(\pi^{\mu,r}, \pi^t)$ monotonically decreases unless $\pi^t=\pi^{\mu,r}$.
% By transposing the term of $\mathrm{KL}(\pi^{\mu,r}, \pi^{t+1})$ from the left-hand side to right-hand side, we get $\mathrm{KL}(\pi^{\mu,r}, \pi^{t+1})\leq \left(1-\eta(\mu\alpha - \eta(\mu^2\beta + \gamma))\right) \mathrm{KL}(\pi^{\mu,r}, \pi^t)$.
Thus, by mathematical induction, the statement of the theorem is concluded. \qed

% \begin{theorem}
% \label{thm:kl_diff_ub}
% Let $\pi^{\mu,r}\in \prod_{i=1}^2 \Delta(A_i)$ be a stationary point of (\ref{eq:rmd}).
% Then, $\pi^t$ updated by M-FTRL satisfies that:
% \begin{align*}
%     &\mathrm{KL}(\pi^{\mu,r}, \pi^{t+1}) - \mathrm{KL}(\pi^{\mu,r}, \pi^t) \\
%     &\leq - \eta \mu \sum_{i=1}^2\sum_{a\in A_i}r_i(a)\left(\sqrt{\frac{\pi_i^t(a)}{\pi_i^{\mu,r}(a)}} - \sqrt{\frac{\pi_i^{\mu,r}(a)}{\pi_i^t(a)}}\right)^2 + 8 \eta^2 \left(\mu^2\sum_{i=1}^2 \sum_{a\in A_i}r_i(a)^2\left(\frac{1}{\pi_i^{\mu,r}(a)} - \frac{1}{\pi_i^t(a)}\right)^2 + u_{\max}^2\sum_{i=1}^2 \|\pi_i^t - \pi_i^{\mu,r}\|_1^2\right).
% \end{align*}
% \end{theorem}

\subsection{Noisy Feedback Setting}
Next, we consider a {\it noisy feedback} setting, where each player's observation is affected by noise. We assume the following mild condition on the noise distribution.
Let $\mathcal{F}_t$ be the $\sigma$-algebra generated by the random sequence $(\pi_i^s, (\hat{q}_i^{\pi^s}(a))_{a \in A_i})_{s = 1, \ldots, t}$.
\begin{assumption}\label{asm:noise_semibandit}
For all player $i \in \{1,2\}$, the noise process $ (\xi^t(a))_{a \in A_i}$ satisfies the following two conditions:
\begin{itemize}[nolistsep, topsep=-4pt]
\setlength{\itemsep}{0pt} \setlength{\parskip}{0pt}%
    \item[(i)] Zero-mean: $ \mathbb{E}[\xi^t(a) \mid \mathcal{F}_{t-1}] = 0 , \forall a \in A_i, \forall t \ge 1,$ almost surely.
    \item[(ii)] Moderate tails: For any $x>0$, $\mathbb{P}[|\xi^t(a)|^2 \ge x \mid \mathcal{F}_{t-1}] \le C/x^\kappa, \forall a \in A_i, \forall t \ge 1,$ almost surely, with some constants $C>0$ and $\kappa > 2$.
\end{itemize}
\end{assumption}
Assumption~\ref{asm:noise_semibandit} (i) means that the observation is unbiased: $ \mathbb{E}[ \hat{q}_i^{\pi^t}(a)  \mid \mathcal{F}_{t-1}] = q_i^{\pi^t}(a)$, almost surely.  Assumption~\ref{asm:noise_semibandit} (ii) is a relatively weak assumption on the noise and is satisfied by a wide range of distributions, including bounded, sub-Gaussian, and sub-exponential distributions \citep{Heliou2017Learning}. Assumption~\ref{asm:noise_semibandit} (ii) also implies that the variance of the noise is upper bounded by some constant.
The following convergence results are obtained in the noisy feedback setting. 

\begin{theorem}\label{thm:conv_semibandit}
Suppose that there exists a constant $D>0$ such that $\pi_i^t(a_i) > D$ for all $i\in \{1,2\}$, $a_i\in A_i$, and $t\geq 0$.
Under Assumption~\ref{asm:noise_semibandit}, the strategy $\pi^t$ updated by M2WU with the step size $\eta_t \propto t^{-\lambda}$ for some constant $\lambda \in (1/\kappa, 1]$ converges to the stationary point $\pi^{\mu,r}$ almost surely. 
\end{theorem}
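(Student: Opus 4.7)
The plan is to treat M2WU under noisy feedback as a stochastic approximation of \eqref{eq:rmd} and show that the Lyapunov function $V_t := \mathrm{KL}(\pi^{\mu,r},\pi^t)$ tends to zero almost surely via a Robbins--Siegmund argument. First I would rerun the decomposition of Theorem \ref{thm:kl_convergence} with $\hat{q}_i^{\pi^t} = q_i^{\pi^t} + \xi^t$. Because the noise enters the mutation gradient linearly, the Bregman-divergence identity behind \eqref{eq:kl_diff_decomposition} produces
\begin{align*}
V_{t+1} - V_t = \eta_t\, (\mathrm{A})_t \;+\; \eta_t N_t \;+\; \mathrm{KL}(\pi^t, \pi^{t+1}),
\end{align*}
where $(\mathrm{A})_t$ is exactly the term controlled by Lemma \ref{lem:rmd_divergence} (a negative quasi-metric in $(\pi^{\mu,r},\pi^t)$) and $N_t := \sum_i \langle \xi^t, \pi_i^t - \pi_i^{\mu,r}\rangle$ is a martingale difference with respect to $\mathcal{F}_{t-1}$, since $\pi^t$ is $\mathcal{F}_{t-1}$-measurable and Assumption \ref{asm:noise_semibandit}(i) gives $\mathbb{E}[\xi^t \mid \mathcal{F}_{t-1}] = 0$. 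The path-length term is handled by reusing Lemma \ref{lem:kl_path_ub}, except that $\xi^t$ adds an extra $O(\eta_t^2\|\xi^t\|^2)$ contribution; the positivity hypothesis $\pi_i^t(a) > D$ keeps every mutation coefficient $\mu/\pi_i^t(a)$ uniformly bounded, so the same Pinsker/Jensen steps used to prove Theorem \ref{thm:kl_convergence} now yield a recursion of the form
\begin{align*}
V_{t+1} \leq (1 - c\,\eta_t)\,V_t \;+\; \eta_t N_t \;+\; \eta_t^2\bigl(K_1 + K_2 \|\xi^t\|^2\bigr),
\end{align*}
with deterministic constants $c, K_1, K_2 > 0$ depending only on $\mu, r, D$, and $u_{\max}$.

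Second, I would control the two stochastic remainders by exploiting the moment hypothesis. Assumption \ref{asm:noise_semibandit}(ii) with $\kappa > 2$ yields $\mathbb{E}[\|\xi^t\|^p \mid \mathcal{F}_{t-1}] \leq C_p$ for every $p < \kappa$, so I would pick $p$ with $1/\lambda < p < \kappa$ (nonempty because $\lambda > 1/\kappa$). Chow's $L^p$ convergence theorem for martingale differences with bounded conditional $p$-th moments, combined with $\sum_t \eta_t^p < \infty$, gives $\sum_t \eta_t N_t$ convergent almost surely; a parallel application to $\|\xi^t\|^2$ (using the finite $p/2$-th moment and $p\lambda > 1$) shows $\sum_t \eta_t^2 \|\xi^t\|^2 < \infty$ almost surely. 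In the easier regime $\lambda > 1/2$ the crude $\sum_t \eta_t^2 < \infty$ bound suffices for the second term.

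Finally, subtracting the convergent martingale tail and adding the summable noise-squared term turns the recursion into the hypotheses of the Robbins--Siegmund almost-supermartingale theorem, which yields $V_t \to V_\infty$ almost surely together with $\sum_t c\,\eta_t V_t < \infty$ almost surely; since $\lambda \leq 1$ gives $\sum_t \eta_t = \infty$, the only possibility is $V_\infty = 0$, and Pinsker's inequality upgrades this to $\pi^t \to \pi^{\mu,r}$ almost surely. The main technical obstacle I anticipate is the regime $\lambda \in (1/\kappa, 1/2]$, where $\sum_t \eta_t^2$ can diverge; there one cannot rely on a naive $L^2$ bound and must genuinely invoke the $\kappa$-th moment control of Assumption \ref{asm:noise_semibandit}(ii) with Chow-type martingale convergence to guarantee almost-sure summability of both $\sum_t \eta_t N_t$ and $\sum_t \eta_t^2 \|\xi^t\|^2$ simultaneously.
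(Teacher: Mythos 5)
Your reduction to a Robbins--Siegmund recursion is a genuinely different route from the paper (which never touches $\mathrm{KL}(\pi^{\mu,r},\pi^t)$ in discrete time with noise; it shows the iterates form an approximate Robbins--Monro scheme via a Taylor expansion of the logit map, controls the second-order remainder by Borel--Cantelli using Assumption~\ref{asm:noise_semibandit}(ii), and then invokes Benaïm's asymptotic-pseudo-trajectory theory together with the strict Lyapunov function and uniqueness of the stationary point from \citet{abe2022mutationdriven}). Your route does work, with minor repairs, in the regime $\lambda\in(1/2,1]$: the noise enters the Bregman identity linearly so $N_t$ is indeed a martingale difference, and although Lemma~\ref{lem:kl_path_ub} cannot be reused verbatim (its proof needs the exponent bounded by $1$, which fails for unbounded noise), a Hoeffding/log-sum-exp bound on $\mathrm{KL}(\pi^t,\pi^{t+1})$ together with $\pi_i^t(a)>D$ gives your $\eta_t^2(K_1+K_2\|\xi^t\|^2)$ term; then taking conditional expectations and applying Robbins--Siegmund with $b_t=K_3\eta_t^2$ summable finishes that case without any separate martingale-series analysis.

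The genuine gap is the regime $\lambda\in(1/\kappa,1/2]$, which is precisely the novel content of the theorem (nearly constant step sizes for light-tailed noise), and your proposed fix there does not work. First, $\sum_t \eta_t^2\|\xi^t\|^2$ is a series of \emph{nonnegative} terms whose conditional means are of order $\eta_t^2$; when $\sum_t\eta_t^2=\infty$ and the noise has nonvanishing second moment (e.g.\ the Gaussian noise of Section~\ref{sec:exp_noisy_feedback}), a three-series/Borel--Cantelli argument shows this sum diverges almost surely -- no martingale theorem can make it finite, because it is not a centered object. Second, the Chow-type implication you invoke, ``bounded conditional $p$-th moments plus $\sum_t\eta_t^p<\infty$ implies $\sum_t\eta_t N_t$ converges a.s.,'' is only a theorem for $p\in[1,2]$; for the $p>2$ you would need when $\lambda\le 1/2$ it is false (take $N_t$ i.i.d.\ Rademacher and $\eta_t=t^{-1/2}$: all conditional moments are bounded and $\sum_t\eta_t^p<\infty$ for $p>2$, yet $\sum_t\eta_t N_t$ diverges a.s.\ since the conditional variances are not summable). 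So as written your argument proves the theorem only for $\lambda>1/2$. To cover $\lambda$ close to $1/\kappa$ one must exploit the fact that the noise \emph{averages out over time windows} rather than being absorbed term-by-term -- e.g.\ block-wise maximal inequalities with high moments, which is exactly what the Benaïm machinery (Proposition~4.2 and Corollary~6.6 of \citet{benaim1999dynamics}) packages and what the paper's proof of Lemma~\ref{lem:asymp_pseudo_trajec} relies on; a direct supermartingale argument on $V_t$ with per-step bounds cannot get there, since the per-step noise contribution $\Theta(\eta_t^2)$ is genuinely non-summable in that regime.
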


For the noise process that can take an arbitrarily large value of $\kappa$, the value of $\lambda$ can be arbitrarily close to $0$. 
This suggests that learning is possible with a nearly-constant learning rate sequence for sub-Gaussian, sub-exponential, and bounded distributions. The proof of Theorem~\ref{thm:conv_semibandit} is based on the method of stochastic approximation \citep{benaim1999dynamics,borkar2009stochastic}.
Here, we only present a sketch of the proof of Theorem~\ref{thm:conv_semibandit}.
A complete proof is presented in Appendix \ref{sec:appendix_proof_conv_semibandit}.

\subsubsection{Proof Sketch of Theorem~\ref{thm:conv_semibandit}}

First, we show that the update rule of the strategy $\pi_i^{t}$ by M2WU is an approximate Robbins-Monro algorithm \citep{robbins1951stochastic,benaim1999dynamics}. 
For this purpose, we use Taylor's theorem to rewrite the strategy update formula as the following equation.
$$
\pi^{t+1}_i(a_i)  =  \pi^{t}_i(a_i) + \eta_t (F(\pi_i^t) + U_t + \hat{\phi}_t).
$$
It is relatively easy to check that $ F(\pi_i^t)$ is a continuous function and $U_t$ is a martingale difference sequence. 
The fact that the Hessian of the logit function is bounded by a constant indicates that $\hat{\phi}_t$ is on the order of $\eta_t \|\hat{q}_i^{\mu, t} \|_2^2$. 
Using Assumption~\ref{asm:noise_semibandit}, we can show that $\phi_t = \mathcal{O}(\eta_t t^p) \to 0$, almost surely, where $p \in (0, 1/\kappa) $. Thus, we can conclude that $ \{ \pi^t_i\}_{t\ge 1}$ is an approximate Robbins-Monro algorithm and an asymptotic pseudo-trajectory of the replicator mutator dynamics \eqref{eq:rmd}. 

From Theorem~5.2 in \citet{abe2022mutationdriven}, there exists a strict Lyapunov function of \eqref{eq:rmd}, and the stationary point of \eqref{eq:rmd} is unique. 
These conditions allow us to apply the results of \citet{benaim1999dynamics} and conclude that  $\pi_i^{t}$ converges to the stationary point almost surely. \qed

\begin{remark}[Comparison to optimistic algorithms]
\label{rm:comparison_to_omwu}
\normalfont
In the previous work on optimistic algorithms such as OMWU and OGDA \citep{mertikopoulos2018optimistic,wei2020linear}, the proofs for the last-iterate convergence depend heavily on the path length of the gradient vectors $\sum_{t=1}^T\sum_{i=1}^2\|q_i^{\pi^t} - q_i^{\pi^{t-1}}\|^2$.
In the full feedback setting, this term can be canceled out by the path length of strategy profiles $\sum_{t=1}^T\sum_{i=1}^2(-1/\eta_t^2)\|\pi_i^t - \pi_i^{t-1}\|^2$ with the universal constant learning rate.
However, in the noisy feedback setting, the term $\sum_{t=1}^T\sum_{i=1}^2\|\hat{q}_i^{\pi^t} - \hat{q}_i^{\pi^{t-1}}\|^2$ appears instead of $\sum_{t=1}^T\sum_{i=1}^2\|q_i^{\pi^t} - q_i^{\pi^{t-1}}\|^2$, and it would grow linearly in $T$ even if $\pi^t$ is fixed.
Therefore, providing the last-iterate convergence results for optimistic algorithms with noisy feedback is challenging.
In contrast, the proof of last-iterate convergence with M2WU does not rely on the path length of the gradient vectors.
Specifically, it exploits the existence of continuous-time dynamics \eqref{eq:rmd} for M2WU and its Lyapunov function.
\end{remark}

\ifarxiv
\else
\begin{figure*}[t!]
    \centering
    \begin{minipage}[t]{0.24\textwidth}
        \centering
        \includegraphics[width=1.0\linewidth]{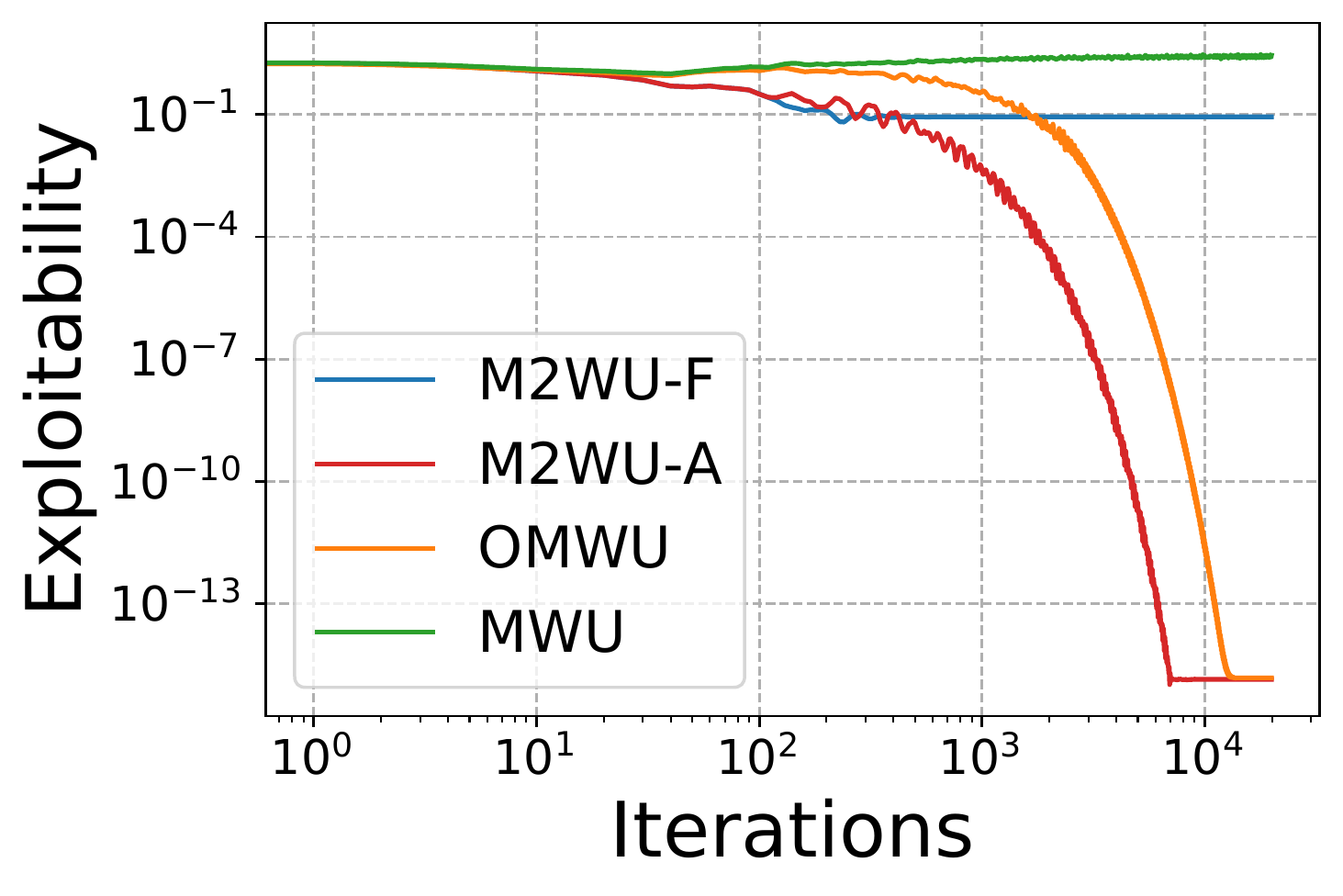}
        \subcaption{BRPS}
    \end{minipage}
    % \begin{minipage}[t]{0.24\textwidth}
    %     \centering
    %     \includegraphics[width=1.1\linewidth]{}
    % \end{minipage}
    \begin{minipage}[t]{0.24\textwidth}
        \centering
        \includegraphics[width=1.0\linewidth]{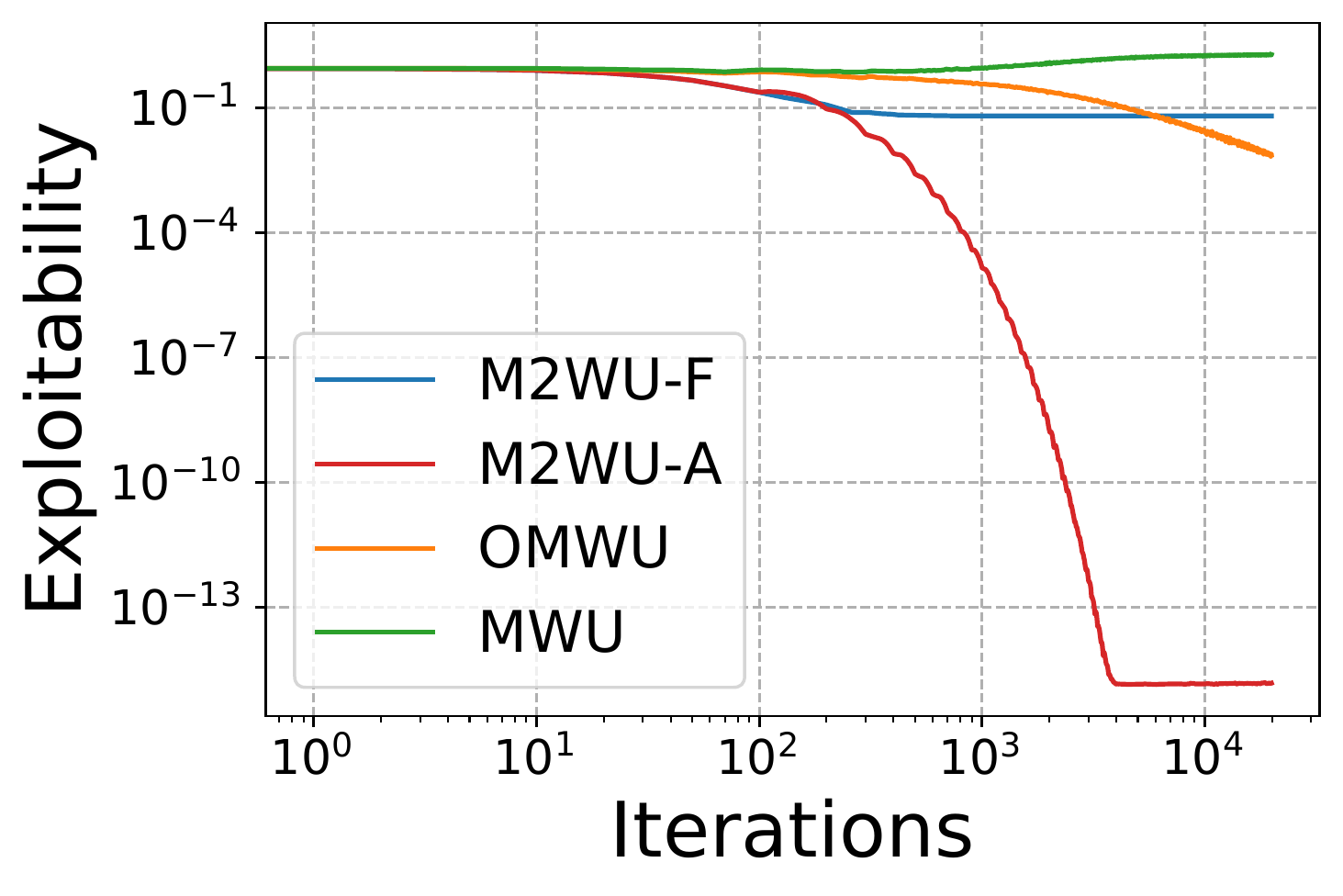}
        \subcaption{M-Ne}
    \end{minipage}
    \begin{minipage}[t]{0.24\textwidth}
        \centering
        \includegraphics[width=1.0\linewidth]{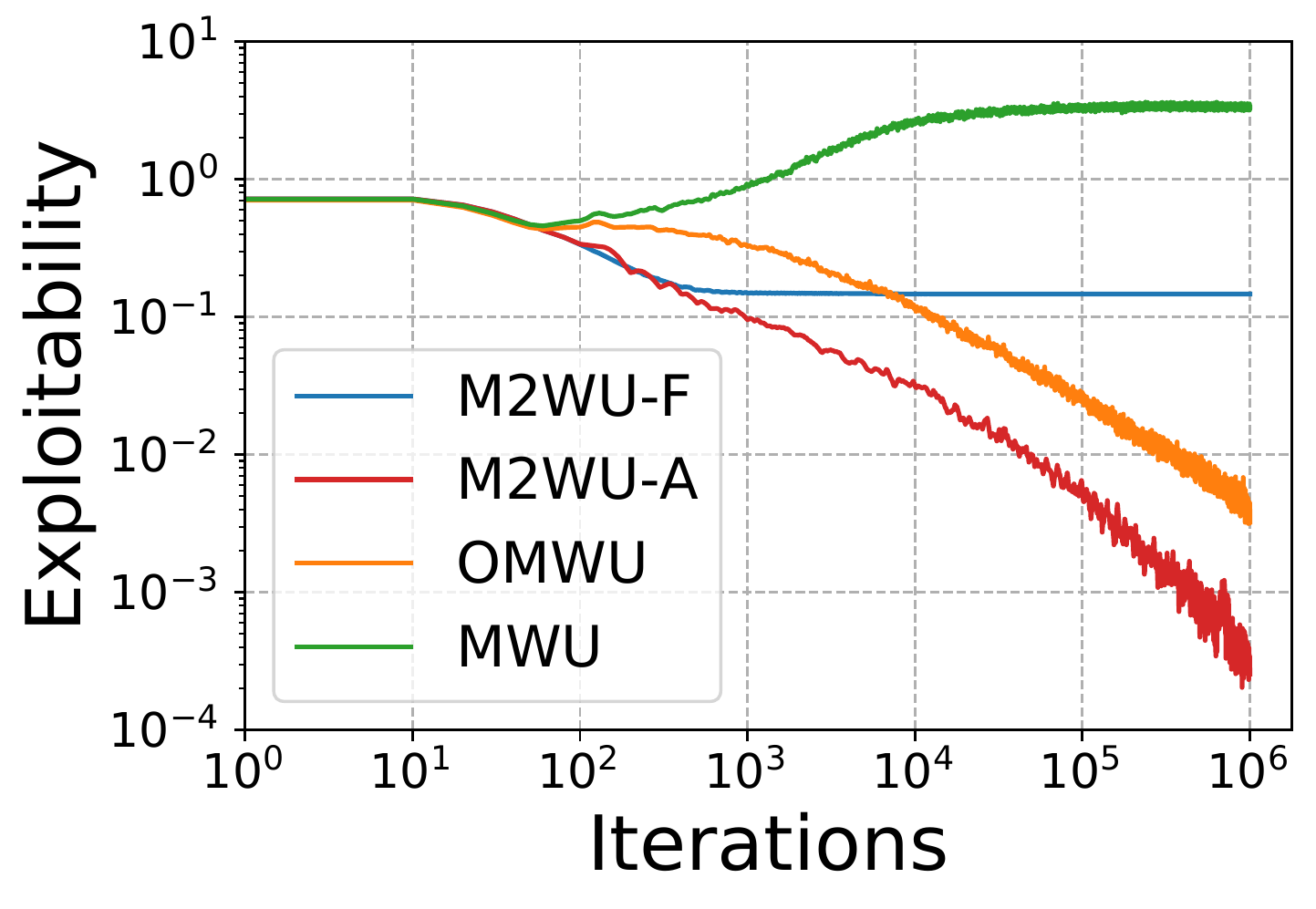}
        \subcaption{Random utility ($25\times 25$)}
    \end{minipage}
    \begin{minipage}[t]{0.24\textwidth}
        \centering
        \includegraphics[width=1.0\linewidth]{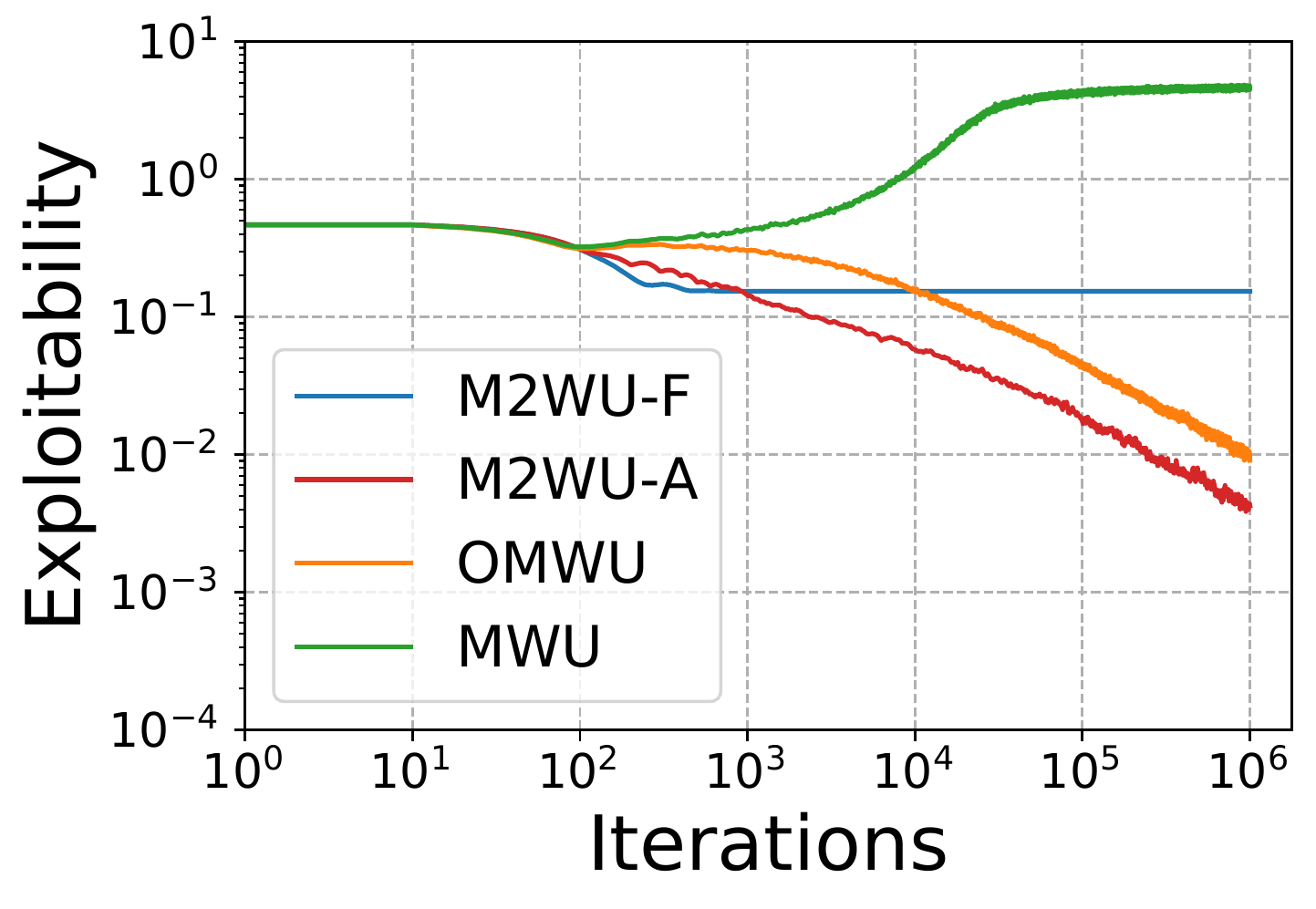}
        \subcaption{Random utility ($100\times 100$)}
    \end{minipage}
    \caption{
    Exploitability of $\pi^t$ for M2WU, MWU, and OMWU with full feedback.
    }
    \label{fig:exploitability_full}
\end{figure*}
\fi

\section{CONVERGENCE TO AN EXACT NASH EQUILIBRIUM}
Sections \ref{sec:m2wu} and \ref{sec:convergence} presented the M2WU with a fixed reference strategy profile $r$ and its convergence results.
As shown in Corollary \ref{cor:exploitability_convergence}, $\pi^t$ updated by M2WU converges to the stationary point $\pi^{\mu, r}$. 
Therefore, if the exploitability of the stationary point $\pi^{\mu, r}$ goes to zero, the exact Nash equilibrium of the original game can be obtained.
To this end, we control the exploitability of $\pi^{\mu,r}$ by adapting the reference strategy $r$.
That is, we copy the updated strategy profile $\pi^t$ to the reference profile $r$ every $N$ iterations.
This technique is similar to the direct convergence method by \citet{perolat2021poincare}.
The pseudo-code of M2WU with this technique corresponds to Algorithm~\ref{alg:m2wu} with finite~$N$.

Let us define $r^k$ as the $k$-th reference strategy profile.
From Theorem \ref{thm:kl_convergence}, $\pi^t$ converges to $\pi^{\mu, r}$ when $N$ is set to a sufficiently large value.
In this case, the following reference strategy $r^{k+1}$ is set to the stationary point $\pi^{\mu, r^k}$ of the (\ref{eq:rmd}) dynamics with reference strategy $r^k$.
In the remaining part of this section, we show that the sequence of stationary points $\{\pi^{\mu,r^k}\}_{k\geq 0}=\{r^k\}_{k\geq 1}$ converges to $\Pi^{\ast}$ of the original game.
\begin{theorem}
\label{thm:direct_convergence}
For any start point $r^0\in \prod_{i=1}^2\Delta^{\circ}(A_i)$, the sequence of stationary points $\{\pi^{\mu,r^k}\}_{k\geq 0}=\{r^k\}_{k\geq 1}$ converges to the set of equilibria $\Pi^{\ast}$ of the original game.
\end{theorem}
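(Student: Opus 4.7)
The plan is to use the Kullback--Leibler divergence from an arbitrary $\pi^{\ast}\in\Pi^{\ast}$ as a Lyapunov function for the discrete sequence $\{r^k\}$. Since $r^{k+1}=\pi^{\mu,r^k}$, the one-step drift is
\begin{align*}
\mathrm{KL}(\pi^{\ast},r^{k+1})-\mathrm{KL}(\pi^{\ast},r^k)=\sum_{i=1}^{2}\sum_{a\in A_i}\pi_i^{\ast}(a)\ln\frac{r_i^k(a)}{\pi_i^{\mu,r^k}(a)}.
\end{align*}
Rearranging the stationarity condition of (\ref{eq:rmd}) gives the exact identity $r_i^k(a)/\pi_i^{\mu,r^k}(a)=1+(v_i^{\pi^{\mu,r^k}}-q_i^{\pi^{\mu,r^k}}(a))/\mu$, so I would apply $\ln(1+x)\le x$. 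Summing over $i$ and exploiting the zero-sum identity $v_1^{\pi}+v_2^{\pi}=0$ together with the saddle-point guarantees $v_1^{\pi_1^{\ast},\pi_2}\ge v^{\ast}\ge v_1^{\pi_1,\pi_2^{\ast}}$ (where $v^{\ast}$ is the game value) yields the F\'ejer-type drop
\begin{align*}
\mathrm{KL}(\pi^{\ast},r^{k+1})-\mathrm{KL}(\pi^{\ast},r^k)\le -\frac{1}{\mu}\bigl(v_1^{\pi_1^{\ast},\pi_2^{\mu,r^k}}-v_1^{\pi_1^{\mu,r^k},\pi_2^{\ast}}\bigr)\le 0.
\end{align*}

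Since $\mathrm{KL}(\pi^{\ast},r^k)$ is thus non-increasing in $k$ and non-negative, it converges, and telescoping forces the non-negative summands $v_1^{\pi_1^{\ast},\pi_2^{\mu,r^k}}-v_1^{\pi_1^{\mu,r^k},\pi_2^{\ast}}$ to $0$. As each of the two pieces is sandwiched around $v^{\ast}$, this delivers $v_1^{\pi_1^{\ast},\pi_2^{\mu,r^k}}\to v^{\ast}$ and $v_1^{\pi_1^{\mu,r^k},\pi_2^{\ast}}\to v^{\ast}$ for every $\pi^{\ast}\in\Pi^{\ast}$.

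To pass from values to strategies I would extract cluster points. Compactness of $\prod_i\Delta(A_i)$ and continuity of $r\mapsto\pi^{\mu,r}$ (from the implicit function theorem applied to the stationarity equation) give that every subsequence admits a sub-subsequence with $r^{k_j}\to r^{\infty}$ and $r^{k_j+1}\to \pi^{\mu,r^{\infty}}=:\pi^{\infty}$. Using that $\mathrm{KL}(\pi^{\ast},r^k)$ has a single limit along the whole sequence, I obtain $\mathrm{KL}(\pi^{\ast},r^{\infty})=\mathrm{KL}(\pi^{\ast},\pi^{\infty})$ for every $\pi^{\ast}\in\Pi^{\ast}$, which means the drift inequality is tight at the cluster point. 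The equality case of $\ln(1+x)\le x$ combined with Jensen's inequality then forces the support-level indifference $q_i^{\pi^{\infty}}(a)=v_i^{\pi^{\infty}}$ for every $a\in\operatorname{supp}(\pi_i^{\ast})$ and every $\pi^{\ast}\in\Pi^{\ast}$. Combining this with $v_1^{\pi_1^{\ast},\pi_2^{\infty}}=v_1^{\pi_1^{\infty},\pi_2^{\ast}}=v^{\ast}$ and the interchangeability/convex-product structure of $\Pi^{\ast}$ in two-player zero-sum games, I would conclude $\pi^{\infty}\in\Pi^{\ast}$, and a standard argument promoting ``every cluster point is in $\Pi^{\ast}$'' then gives $d(r^k,\Pi^{\ast})\to 0$.

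The main obstacle is this last step. The bare Lyapunov drift only constrains the values that each player obtains against Nash strategies of the opponent, and in degenerate games this carries essentially no information about the limit point (for instance, in Rock--Paper--Scissors every strategy yields value $0$ against the uniform equilibrium, so the value conditions are trivially satisfied). Closing the argument therefore requires the sharper support-level indifference extracted from the tightness case of the logarithmic inequality, together with the saddle-point/interchangeability structure of zero-sum equilibria, to upgrade ``$\pi_1^{\infty}$ is a value-matching response to every $\pi_2^{\ast}\in\Pi_2^{\ast}$'' into genuine membership $\pi^{\infty}\in\Pi^{\ast}$.
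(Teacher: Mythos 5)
The first half of your argument is sound and closely parallels the paper's own computation: your drift identity and the stationarity substitution are exactly the ones behind Lemma~\ref{lem:kl_diff} (you use $\ln(1+x)\le x$ where the paper uses Jensen), and the conclusion that $\mathrm{KL}(\pi^{\ast},r^k)$ is non-increasing with a telescoping bound on the duality-gap-like terms is correct. The genuine gap is the final step, and it is not merely a technicality you could outsource to ``interchangeability''. The conditions you extract at a cluster point --- indifference $q_i^{\pi^{\infty}}(a)=v_i^{\pi^{\infty}}$ only for $a$ in the supports of equilibrium strategies, together with $v_1^{\pi_1^{\ast},\pi_2^{\infty}}=v_1^{\pi_1^{\infty},\pi_2^{\ast}}=v^{\ast}$ --- simply do not imply $\pi^{\infty}\in\Pi^{\ast}$. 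Concretely, take the $2\times 2$ game with $u_1(a,c)=u_1(a,d)=0$, $u_1(b,c)=-1$, $u_1(b,d)=1$: the value is $0$, $\Pi_1^{\ast}=\{a\}$, $\Pi_2^{\ast}=\{(p,1-p):p\ge 1/2\}$, and the profile $\pi^{\infty}=(a,d)$ satisfies every condition on your list (all relevant payoffs are $0$, and $\Pi^{\ast}$ is a product set, so interchangeability gives nothing extra), yet $(a,d)$ is not an equilibrium because player $1$ gains by deviating to $b$. What actually closes the argument in the paper is different information: zero drift forces the reference to coincide with the stationary point, $r=\pi^{\mu,r}$ (this is the content of the strict inequality in Lemma~\ref{lem:kl_diff}), and then the stationarity equation with an \emph{interior} reference strategy yields $q_i^{r}(a)=v_i^{r}$ for \emph{all} $a\in A_i$, not just on equilibrium supports, which is exactly the Nash property (Lemma~\ref{lem:recursive_point}). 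Your tightness analysis in fact also gives $r_i^{\infty}(a)=\pi_i^{\infty}(a)$ on equilibrium supports, but you neither state nor exploit this, and even with it you would still need to establish equality off the supports before the fixed-point/full-support argument can be invoked.

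There is also a secondary weakness in the cluster-point route itself: coordinates of $r^k$ outside the equilibrium supports are not protected by the bounded KL divergence, so a cluster point $r^{\infty}$ may lie on the boundary of the simplex, where $F$ is not defined and where Lemma~\ref{lem:rmd_continuity} (continuity of $F$ on $\prod_{i=1}^2\Delta^{\circ}(A_i)$) does not apply; appealing to the implicit function theorem would additionally require a nondegeneracy check you have not performed. The paper sidesteps limit-point extraction entirely: it uses $\min_{\pi^{\ast}\in\Pi^{\ast}}\mathrm{KL}(\pi^{\ast},r^k)$ as the Lyapunov quantity, proves strict one-step decrease off $\Pi^{\ast}$ (Lemma~\ref{lem:min_kl_diff}) and continuity of $F$, and then derives a contradiction from a uniform negative drift on a compact sublevel set. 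Your quantitative telescoped bound is a nice refinement, but, as you yourself observe, the quantity it controls is vacuous in degenerate games such as Rock--Paper--Scissors, so the missing full-support/fixed-point step is where the real work lies.
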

\ifarxiv
This result means that the exploitability of $\pi^{\mu,r^k}$ converges to $0$ because $\mathrm{explt}(\pi^{\mu,r^k})\leq \sum\limits_{i\in \{1,2\}}\max\limits_{\tilde{\pi}_i\in \Delta(A_i)}( v_i^{\tilde{\pi}_i, \pi_{-i}} - v_i^{\tilde{\pi}_i, \pi_{-i}^{\star}})\leq \mathcal{O}\left(\sqrt{\mathrm{KL}(\pi^{\star},\pi^{\mu,r^k}})\right)$, where $\pi^{\star}=\argmin_{\pi\in \Pi^{\ast}}\mathrm{KL}(\pi, \pi^{\mu,r^k})$.
\else
This result means that the exploitability of $\pi^{\mu,r^k}$ converges to $0$ because $\mathrm{explt}(\pi^{\mu,r^k})\leq \!\mathcal{O}\!\left(\!\sqrt{\mathrm{KL}(\pi^{\star},\pi^{\mu,r^k}})\!\right)$, where $\pi^{\star}=\argmin_{\pi\in \Pi^{\ast}}\mathrm{KL}(\pi, \pi^{\mu,r^k})$.
\fi
We explain the proof sketch here; the complete proof of Theorem~\ref{thm:direct_convergence} is in Appendix \ref{sec:appendix_proof_direct_convergence}.
\paragraph{Proof Sketch of Theorem~\ref{thm:direct_convergence}}
Let $F: \prod_{i=1}^2\Delta^{\circ}(A_i)\to \prod_{i=1}^2\Delta^{\circ}(A_i)$ be a function which maps the reference strategies $r$ to the associated stationary point $\pi^{\mu,r}$.
We also define $r^{k+1} = F(r^k)$, and $r^0\in \prod_{i=1}^2\Delta^{\circ}(A_i)$ as the starting reference strategy profile.
 % Note that the function $F$ is well-defined since there exists $\pi^{\mu,r}\in \prod_{i=1}^2\Delta^{\circ}(A_i)$ such that $\pi^{\mu,r}$ is a stationary point of (\ref{eq:rmd}), and $\pi^{\mu,r}$ is a unique stationary point \citep{Bauer:2019,abe2022mutationdriven}.
 Note that the function $F$ is well-defined because the stationary point $\pi^{\mu,r}$ of \eqref{eq:rmd} is unique for each $r \in \prod_{i=1}^2\Delta^{\circ}(A_i)$ \citep{abe2022mutationdriven}.
First, we prove that the distance between $\Pi^{\ast}$ and $r^k$ decreases monotonically as $k$ increases:
\begin{lemma}
\label{lem:min_kl_diff}
% Let $\Pi^{\ast}$ be the set of Nash equilibria of the original game.
For any $k\geq 0$, if $r^k\in \prod_{i=1}^2\Delta^{\circ}(A_i) \setminus \Pi^{\ast}$, then:
\begin{align*}
    \min_{\pi^{\ast}\in \Pi^{\ast}}\mathrm{KL}(\pi^{\ast}, r^{k+1}) < \min_{\pi^{\ast}\in \Pi^{\ast}}\mathrm{KL}(\pi^{\ast}, r^k).
\end{align*}
Otherwise, if $r^k\in \Pi^{\ast}$, then $r^{k+1}=r^k\in \Pi^{\ast}$.
\end{lemma}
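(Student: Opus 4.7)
My plan is to establish the inequality by substituting the stationarity equation of $r^{k+1}$ into the KL difference and lower-bounding the result via Jensen's inequality together with the zero-sum Nash property. The ``otherwise'' half is immediate: if $r^k \in \Pi^{\ast} \cap \prod_{i=1}^2 \Delta^{\circ}(A_i)$, then $r^k$ is a fully mixed Nash equilibrium, so $q_i^{r^k}(a) = v_i^{r^k}$ holds for every action. This makes $r^k$ itself a stationary point of \eqref{eq:rmd} with reference $r^k$, and by uniqueness of that stationary point \citep{abe2022mutationdriven}, $r^{k+1} = \pi^{\mu, r^k} = r^k \in \Pi^{\ast}$.

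Assume now $r^k \in \prod_{i=1}^2 \Delta^{\circ}(A_i) \setminus \Pi^{\ast}$. The stationarity condition $r^{k+1}_i(a)(q_i^{r^{k+1}}(a) - v_i^{r^{k+1}}) + \mu(r^k_i(a) - r^{k+1}_i(a)) = 0$ rearranges to $r^k_i(a)/r^{k+1}_i(a) = 1 + (v_i^{r^{k+1}} - q_i^{r^{k+1}}(a))/\mu$. Taking logarithms and averaging against any $\pi^{\ast}\in\Pi^{\ast}$ gives
\[
\mathrm{KL}(\pi^{\ast}, r^k) - \mathrm{KL}(\pi^{\ast}, r^{k+1}) = -\sum_{i=1}^2 \sum_{a \in A_i} \pi^{\ast}_i(a) \ln\!\left(1 + \frac{v_i^{r^{k+1}} - q_i^{r^{k+1}}(a)}{\mu}\right).
\]
Applying Jensen's inequality to the convex function $-\ln(1 + \cdot)$ produces the lower bound $\sum_i[-\ln(1 - y_i/\mu)]$, where $y_i := v_i^{\pi^{\ast}_i, r^{k+1}_{-i}} - v_i^{r^{k+1}}$. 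Combining $-\ln(1-y_i/\mu) \ge y_i/\mu$ with the zero-sum Nash inequality $y_1 + y_2 \ge 0$ (which follows from $v_i^{\pi^{\ast}_i, \pi_{-i}} \ge v_i^{\pi^{\ast}}$ for both $i$ and $\sum_i v_i^{\pi^{\ast}} = 0 = \sum_i v_i^{r^{k+1}}$) yields $\mathrm{KL}(\pi^{\ast}, r^{k+1}) \le \mathrm{KL}(\pi^{\ast}, r^k)$ for every $\pi^{\ast} \in \Pi^{\ast}$.

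For the strict inequality, fix $\pi^{\ast\ast} \in \argmin_{\pi^{\ast} \in \Pi^{\ast}} \mathrm{KL}(\pi^{\ast}, r^k)$. If any of the inequalities above is strict for $\pi^{\ast\ast}$, then $\mathrm{KL}(\pi^{\ast\ast}, r^{k+1}) < \mathrm{KL}(\pi^{\ast\ast}, r^k)$ and the minimum strictly decreases. Otherwise the equality cases of Jensen and of $-\ln(1-\cdot)\ge(\cdot)$ simultaneously hold, forcing $q_i^{r^{k+1}}(a) = v_i^{r^{k+1}}$ for every $a \in \mathrm{supp}(\pi^{\ast\ast}_i)$ and both $i$; the zero-sum identity then pins down $v_i^{r^{k+1}} = v_i^{\pi^{\ast\ast}}$, i.e.\ the game value. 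I plan to show that this constancy forces $r^{k+1}\in\Pi^{\ast}$ itself, via the max-min characterization of Nash strategies and the product structure of the equilibrium set. Once $r^{k+1}\in\Pi^{\ast}$, the bound $\min_{\pi^{\ast}} \mathrm{KL}(\pi^{\ast}, r^{k+1}) \le \mathrm{KL}(r^{k+1}, r^{k+1}) = 0 < \min_{\pi^{\ast}} \mathrm{KL}(\pi^{\ast}, r^k)$ (the last inequality because $r^k \notin \Pi^{\ast}$) closes the argument.

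The most delicate step is precisely this last implication: promoting ``$q_i^{r^{k+1}}$ is constant at the game value on $\mathrm{supp}(\pi^{\ast\ast}_i)$'' to ``$r^{k+1}_i$ is a best response to $r^{k+1}_{-i}$.'' The support of $\pi^{\ast\ast}$ may miss actions $a \notin \mathrm{supp}(\pi^{\ast\ast}_i)$ on which $q_i^{r^{k+1}}(a)$ exceeds the game value, in which case $r^{k+1}$ would fail to be Nash. I expect to rule this out by selecting $\pi^{\ast\ast}$ with maximal support among KL-minimizers, so that the indifference condition extends to every essential action of the game; the product structure of $\Pi^{\ast}$ in zero-sum games guarantees that such a maximally-supported Nash exists, and its projection property at $r^k$ couples the support choice to the equilibrium structure needed to conclude $r^{k+1}\in\Pi^{\ast}$.
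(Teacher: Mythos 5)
Your weak inequality $\mathrm{KL}(\pi^{\ast}, r^{k+1}) \le \mathrm{KL}(\pi^{\ast}, r^k)$ (which is essentially the same computation as the paper's: stationarity to rewrite the ratio, Jensen, then the Nash/zero-sum bound) and your treatment of the case $r^k\in\Pi^{\ast}$ are both correct; the latter, via full-support indifference and uniqueness of the stationary point, is even a bit more direct than the paper's contradiction argument. However, the strict inequality --- the actual content of the lemma --- is only a plan, and the plan as sketched does not go through. Your equality-case analysis yields only that $q_i^{r^{k+1}}(a)=v_i^{r^{k+1}}$ on $\mathrm{supp}(\pi^{\ast\ast}_i)$ with $v_i^{r^{k+1}}$ equal to the game value, which, as you yourself note, does not give $r^{k+1}\in\Pi^{\ast}$. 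The proposed repair is not available: since $r^k$ is interior, $\pi\mapsto\mathrm{KL}(\pi,r^k)$ is strictly convex on the convex compact set $\Pi^{\ast}$, so the KL-minimizer is unique and there is no ``maximal support among KL-minimizers'' to select. Running the argument instead with a maximal-support (relative-interior) equilibrium $\tilde{\pi}$ only yields $\mathrm{KL}(\tilde{\pi},r^{k+1})\le\mathrm{KL}(\tilde{\pi},r^k)$, which does not compare with $\min_{\pi^{\ast}}\mathrm{KL}(\pi^{\ast},r^k)$; and even indifference at the value on all essential actions would not force $r^{k+1}\in\Pi^{\ast}$, since nothing rules out a non-essential action $a'$ with $q_i^{r^{k+1}}(a')$ strictly above the value --- $r^{k+1}_{-i}$ is not known to be an optimal strategy. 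So the strictness step is a genuine gap.

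The paper closes this differently, with two auxiliary facts rather than an equality-case analysis of the payoff vector. Its Lemma \ref{lem:kl_diff} shows $\mathrm{KL}(\pi^{\ast},\pi^{\mu,r})<\mathrm{KL}(\pi^{\ast},r)$ for \emph{every} $\pi^{\ast}\in\Pi^{\ast}$ whenever $r\neq\pi^{\mu,r}$, locating the strictness in the strict concavity of $\ln$ in the very Jensen step you also use, before passing to the (non-strict) Nash bound; and its Lemma \ref{lem:recursive_point} shows that $r=\pi^{\mu,r}$ forces $r\in\Pi^{\ast}$, so when $r^k\notin\Pi^{\ast}$ the fixed-point case is excluded outright. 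Applying the strict inequality to $\pi^{\star}=\argmin_{\pi^{\ast}\in\Pi^{\ast}}\mathrm{KL}(\pi^{\ast},r^k)$ then gives the strict decrease of the minimum. If you wish to finish along your route, the missing ingredient is an argument pinning the equality case down to $r^k_i/r^{k+1}_i\equiv 1$ (i.e., $r^{k+1}=r^k$, which Lemma \ref{lem:recursive_point} then turns into the contradiction $r^k\in\Pi^{\ast}$), not a support-maximality selection.
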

We also show that $F(\cdot)$ is a continuous function:
\begin{lemma}
\label{lem:rmd_continuity}
Let $F(r): \prod_{i=1}^2\Delta^{\circ}(A_i)\to \prod_{i=1}^2\Delta^{\circ}(A_i)$ be a function which maps the reference strategies $r$ to the associated stationary point $\pi^{\mu,r}$ of (\ref{eq:rmd}).
Then, $F(\cdot)$ is a continuous function on $\prod_{i=1}^2\Delta^{\circ}(A_i)$.
\end{lemma}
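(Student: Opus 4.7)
The plan is to establish continuity by a subsequence argument exploiting the uniqueness of the stationary point of \eqref{eq:rmd}, as guaranteed by the strict Lyapunov function result of \citet{abe2022mutationdriven}. Fix $r \in \prod_{i=1}^2 \Delta^{\circ}(A_i)$, let $r^n \to r$ in $\prod_{i=1}^2 \Delta^{\circ}(A_i)$, and set $\pi^n := F(r^n)$. My goal is to show $\pi^n \to F(r)$.

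The central object is the stationarity equation obtained by setting the right-hand side of \eqref{eq:rmd} to zero:
\begin{align*}
\pi_i(a)\bigl(q_i^{\pi}(a) - v_i^{\pi}\bigr) = \mu\bigl(\pi_i(a) - r_i(a)\bigr), \quad \forall i \in \{1,2\}, \, a \in A_i.
\end{align*}
By definition, each $\pi^n$ satisfies this equation with $r$ replaced by $r^n$. Since $\{\pi^n\}$ lies in the compact set $\prod_i \Delta(A_i)$, any subsequence admits a further subsequence $\pi^{n_k} \to \bar{\pi}$. Because $q_i^{\pi}$ and $v_i^{\pi}$ depend continuously (in fact polynomially) on $\pi$, passing to the limit along $n_k$ yields
\begin{align*}
\bar{\pi}_i(a)\bigl(q_i^{\bar{\pi}}(a) - v_i^{\bar{\pi}}\bigr) = \mu\bigl(\bar{\pi}_i(a) - r_i(a)\bigr).
\end{align*}

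Next I would verify that $\bar{\pi} \in \prod_i \Delta^{\circ}(A_i)$: if some component $\bar{\pi}_i(a) = 0$, the left-hand side of the displayed equation vanishes while the right-hand side equals $-\mu r_i(a) < 0$, contradicting $r \in \prod_i \Delta^{\circ}(A_i)$. Hence $\bar{\pi}$ is a genuine stationary point of \eqref{eq:rmd} with reference $r$, and by the uniqueness part of Theorem 5.2 in \citet{abe2022mutationdriven}, $\bar{\pi} = \pi^{\mu,r} = F(r)$. Since every convergent subsequence of $\{\pi^n\}$ has the same limit $F(r)$ and the sequence lives in a compact set, the whole sequence converges: $F(r^n) \to F(r)$, establishing continuity.

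The main obstacle I anticipate is precisely the interiority of the limit $\bar{\pi}$; without it, the uniqueness statement cannot be invoked and a spurious boundary limit could in principle sneak in. Fortunately, the strictly positive mutation drift $\mu r_i(a) > 0$ at the boundary is exactly what rules this out via the short contradiction above, so the argument closes cleanly without needing any smoothness of $F$ beyond continuity or an appeal to an implicit function theorem.
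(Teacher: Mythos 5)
Your proposal is correct, but it takes a genuinely different route from the paper. You argue softly: stationary points solve the algebraic system $\pi_i(a)\bigl(q_i^{\pi}(a)-v_i^{\pi}\bigr)=\mu\bigl(\pi_i(a)-r_i(a)\bigr)$, the simplex is compact, the map $\pi\mapsto(q_i^{\pi},v_i^{\pi})$ is (bi)linear hence continuous, so any subsequential limit of $F(r^n)$ solves the system for the limit reference $r$; your boundary contradiction ($0=-\mu r_i(a)<0$) is exactly the right device to force the limit into the interior, after which uniqueness of the stationary point (the same fact from \citet{abe2022mutationdriven} the paper already invokes to make $F$ well-defined) identifies the limit as $F(r)$, and the standard subsequence principle upgrades this to convergence of the full sequence. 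The paper instead runs a quantitative Lyapunov computation: it differentiates $\mathrm{KL}(\pi^{\mu,r'},\pi^t)$ along the \eqref{eq:rmd} flow with reference $r$ started at the constant trajectory $\pi^t\equiv\pi^{\mu,r}$, splits the derivative into the quasi-metric term of Lemma~\ref{lem:rmd_divergence} plus a perturbation controlled by $\|r-r'\|_1$, and then chains Jensen and Pinsker inequalities into an explicit $\varepsilon$--$\delta$ bound of the form $\|\pi^{\mu,r'}-\pi^{\mu,r}\|_1\lesssim\sqrt{\|r-r'\|_1}$. What each approach buys: yours is shorter and more elementary, needing only continuity of the payoff operators, compactness, and uniqueness, but it is purely qualitative; the paper's proof is heavier but delivers an explicit (local H\"older-$1/2$-type) modulus of continuity, and it obtains interiority of $\pi^{\mu,r}$ by citation (Lemma C.2 of \citet{abe2022mutationdriven}) where you derive interiority of the limit directly from the stationarity equation. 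Since Theorem~\ref{thm:direct_convergence} only needs continuity of $F$, your argument suffices for the paper's purposes.
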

For these lemmas, we can use Lyapunov arguments to obtain a convergence result for $\{\pi^{\mu,r^k}\}_{k\geq 0}$. \qed

\begin{remark}
\normalfont
One might think that a simple annealing approach that gradually decrease mutation parameter $\mu$, leads dynamics to reach an exact Nash equilibrium. While this is indeed the case, Theorem~\ref{thm:kl_convergence} indicates that the learning rate $\eta$ must also be significantly reduced.  
The annealing approach would make the convergence speed very slow. 
% 
% Another way to guarantee the convergence to an exact Nash equilibrium could be an annealing approach, which gradually decreases $\mu$.
% However, as you can see from the dependence of the upper bound on $\eta$ in Theorem \ref{thm:kl_convergence}, $\eta$ should decrease in proportion to the decrease of $\mu$.
% Therefore, M2WU with the annealing approach would require tuning to further reduce $\eta$, leading to slower equilibrium learning.
\end{remark}

\section{EXPERIMENTS}
\label{sec:experiments}
We here abbreviate M2WU with a fixed reference strategy profile (Algorithm~\ref{alg:m2wu} with $N=\infty$) as M2WU-F, while M2WU \ with adaptive reference strategy profiles as M2WU-A. 
This section conducts a series of experiments to demonstrate how the four algorithms, i.e., MWU, OMWU, M2WU-F, and M2WU-A, including ours, perform.

We focus on four games: Biased Rock-Paper-Scissors (BRPS), Multiple Nash Equilibria (M-Ne), and two random utility games with 25 and 100 actions.
Note that we borrow the M-Ne game from \citet{wei2020linear}.
Tables~\ref{tab:biased-rps} and \ref{tab:m-eq} provide the payoff matrices for BRPS and M-Ne, respectively. 
\ifarxiv
\begin{table}[h!]
    \centering
    \begin{minipage}[t]{0.49\textwidth}
    \centering
    \caption{Biased RPS game matrix}
    \label{tab:biased-rps}
    \begin{tabular}{cccc}
    \hline
      & R  & P  & S  \\ \hline
    R & $0$  & $-1$  & $3$ \\
    P & $1$ & $0$  & $-1$ \\
    S & $-3$  & $1$ & $0$  \\ \hline
    \end{tabular}
    \end{minipage}
    \begin{minipage}[t]{0.49\textwidth}
    \centering
    \caption{M-Ne game matrix}
    \label{tab:m-eq}
    \begin{tabular}{cccccc}
    \hline
      & $y_1$  & $y_2$ & $y_3$ & $y_4$ & $y_5$  \\ \hline
    $x_1$ & $0$ & $1$ & $-1$ & $0$ & $0$ \\ 
    $x_2$ & $-1$ & $0$ & $1$ & $0$ & $0$ \\ 
    $x_3$ & $1$  & $-1$ & $0$ & $0$ & $0$  \\ 
    $x_4$ & $1$  & $-1$ & $0$ & $-2$ & $1$  \\ 
    $x_5$ & $1$  & $-1$ & $0$ & $1$ & $-2$  \\ \hline
    \end{tabular}
    \end{minipage}
\end{table}
\else
\begin{table}[h!]
    \centering
    \caption{Biased RPS game matrix}
    \label{tab:biased-rps}
    \begin{tabular}{cccc}
    \hline
      & R  & P  & S  \\ \hline
    R & $0$  & $-1$  & $3$ \\
    P & $1$ & $0$  & $-1$ \\
    S & $-3$  & $1$ & $0$  \\ \hline
    \end{tabular}
    \caption{M-Ne game matrix}
    \label{tab:m-eq}
    \begin{tabular}{cccccc}
    \hline
      & $y_1$  & $y_2$ & $y_3$ & $y_4$ & $y_5$  \\ \hline
    $x_1$ & $0$ & $1$ & $-1$ & $0$ & $0$ \\ 
    $x_2$ & $-1$ & $0$ & $1$ & $0$ & $0$ \\ 
    $x_3$ & $1$  & $-1$ & $0$ & $0$ & $0$  \\ 
    $x_4$ & $1$  & $-1$ & $0$ & $-2$ & $1$  \\ 
    $x_5$ & $1$  & $-1$ & $0$ & $1$ & $-2$  \\ \hline
    \end{tabular}
\end{table}
\fi

BRPS has the unique Nash equilibrium $\Pi^{\ast}_i = \{(1/5, 3/5, 1/5)\}$. M-Ne has the following set of Nash equilibria:
\begin{align*}
&\Pi^{\ast}_1 = \left\{\left(1/3, 1/3, 1/3, 0, 0\right) \right\},  \\
&\Pi^{\ast}_2 \!=\! \left\{ y\in \Delta^5 ~|~y_1=y_2=y_3;~ y_5/2\leq y_4\leq 2y_5\right\}.
\end{align*}

Let us proceed to random utility games to consider how our algorithms perform in relatively large games whose numbers of actions are 25 or 100. We draw each utility (or payoff) component from the standard Gaussian distribution $\mathcal{N}(0, 1)$ in an i.i.d. manner. 

For each game, we average exploitability over $100$ instances with different random seeds. We also set the initial strategy profile $\pi^0$ uniformly at random in $\prod_{i=1}^2\Delta^{\circ}(A_i)$ in each instance for BRPS and M-Ne with full feedback.
In other instances, the initial strategy is set to $(1/|A_i|)_{a\in A_i}$ for $i\in \{1,2\}$.

\ifarxiv
\begin{figure*}[t!]
    \centering
    \begin{minipage}[t]{0.24\textwidth}
        \centering
        \includegraphics[width=1.1\linewidth]{figs/full_feedback/brps_full_feedback_exploitability.pdf}
        \subcaption{BRPS}
    \end{minipage}
    % \begin{minipage}[t]{0.24\textwidth}
    %     \centering
    %     \includegraphics[width=1.1\linewidth]{figs/full_feedback/multiple_full_feedback_exploitability.pdf}
    % \end{minipage}
    \begin{minipage}[t]{0.24\textwidth}
        \centering
        \includegraphics[width=1.1\linewidth]{figs/full_feedback/multiple_size5_full_feedback_exploitability.pdf}
        \subcaption{M-Ne}
    \end{minipage}
    \begin{minipage}[t]{0.24\textwidth}
        \centering
        \includegraphics[width=1.1\linewidth]{figs/full_feedback/random_payoff_size_25_full_feedback_exploitability.pdf}
        \subcaption{Random utility ($25\times 25$)}
    \end{minipage}
    \begin{minipage}[t]{0.24\textwidth}
        \centering
        \includegraphics[width=1.1\linewidth]{figs/full_feedback/random_payoff_size_100_full_feedback_exploitability.pdf}
        \subcaption{Random utility ($100\times 100$)}
    \end{minipage}
    \caption{
    Exploitability of $\pi^t$ for M2WU, MWU, and OMWU with full feedback.
    }
    \label{fig:exploitability_full}
\end{figure*}
\begin{figure}[t!]
    \centering
    \includegraphics[width=0.375\linewidth]{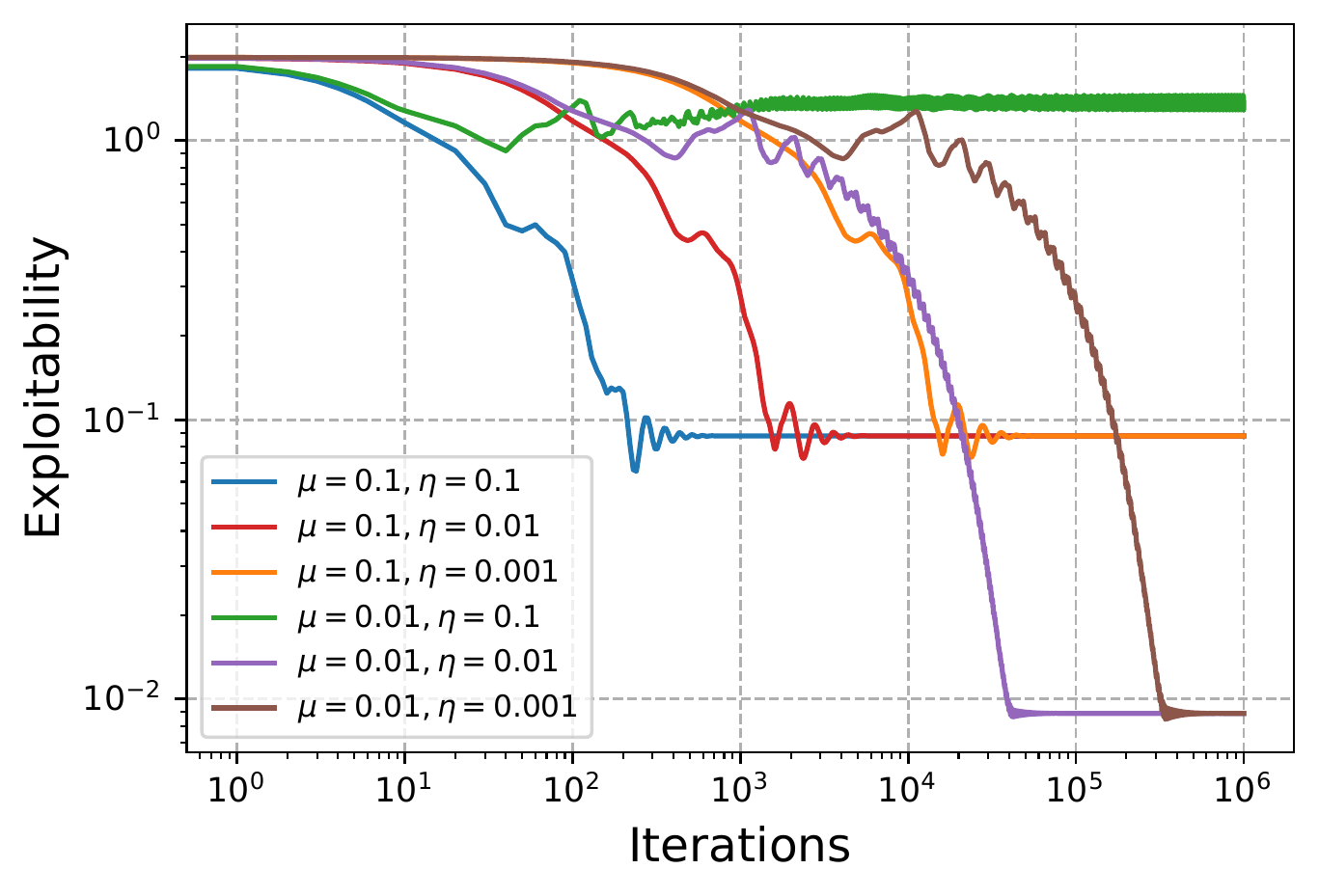}
    \caption{Exploitability of $\pi^t$ for M2WU-F with varying $\mu\in\{0.1, 0.01\}$ and $\eta\in \{0.1, 0.01, 0.001\}$ in BRPS with full feedback.}
    \label{fig:compare_mu_eta}
\end{figure}
\fi

\subsection{Full Feedback}
This section examines the algorithms with full feedback.
Unless noticed, we use the learning rate $\eta$ of $0.1$ and the mutation rate $\mu$ of $0.1$.
For M2WU-F, we also assume the reference strategy profile $r_i^0$ is fixed at $(1/|A_i|)_{a\in A_i}$.
For M2WU-A, we update it every $N=100$ iterations.

\ifarxiv
\else
\begin{figure}
    \centering
    \includegraphics[width=1.0\linewidth]{figs/full_feedback/brps_sa_full_feedback_exploitability.pdf}
    \caption{Exploitability of $\pi^t$ for M2WU-F with varying $\mu\in\{0.1, 0.01\}$ and $\eta\in \{0.1, 0.01, 0.001\}$ in BRPS with full feedback.}
    \label{fig:compare_mu_eta}
\end{figure}
\begin{figure*}[t!]
    \centering
    \begin{minipage}[t]{0.24\textwidth}
        \centering
        \includegraphics[width=1.0\linewidth]{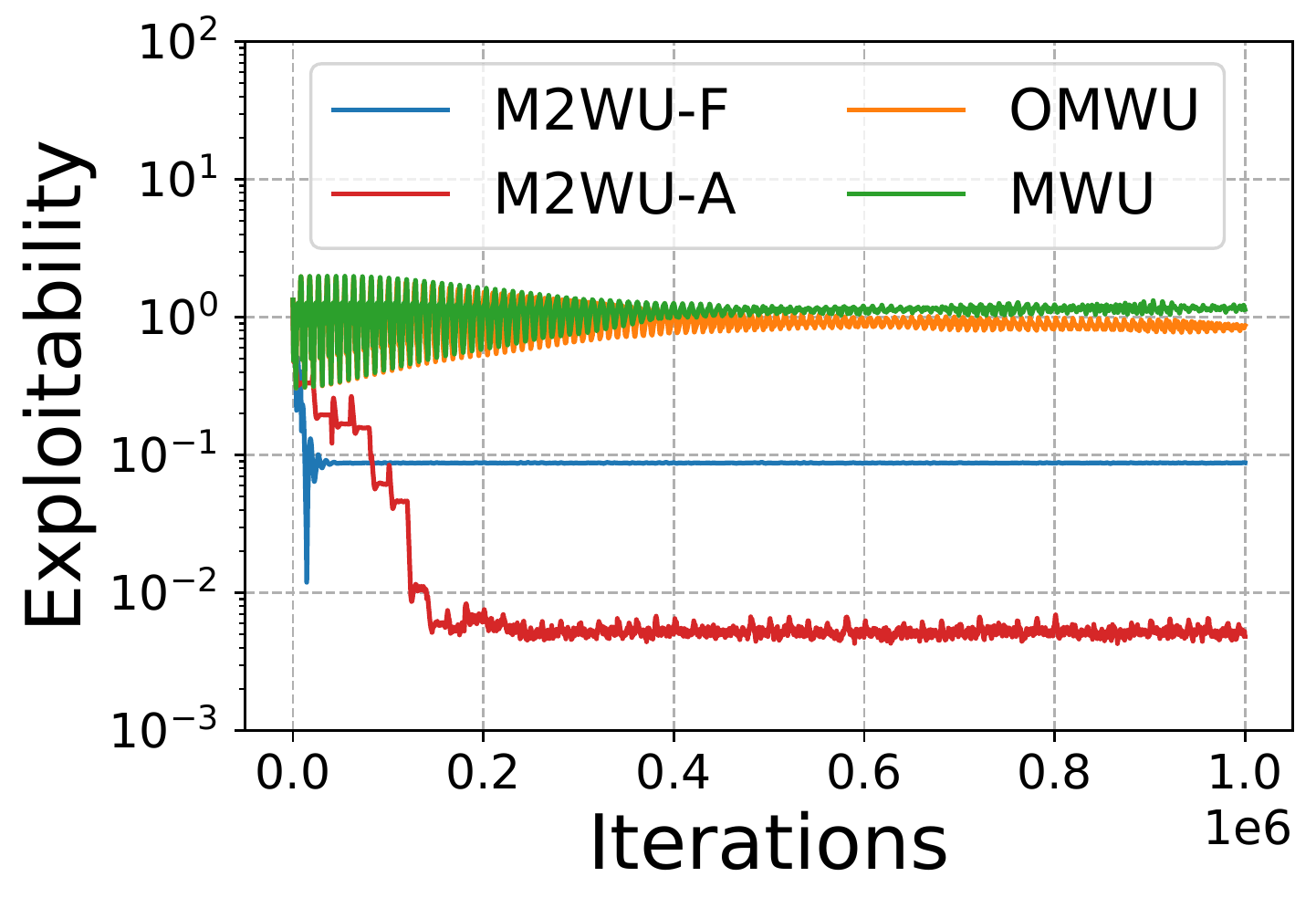}
        \subcaption{BRPS}
    \end{minipage}
    \begin{minipage}[t]{0.24\textwidth}
        \centering
        \includegraphics[width=1.0\linewidth]{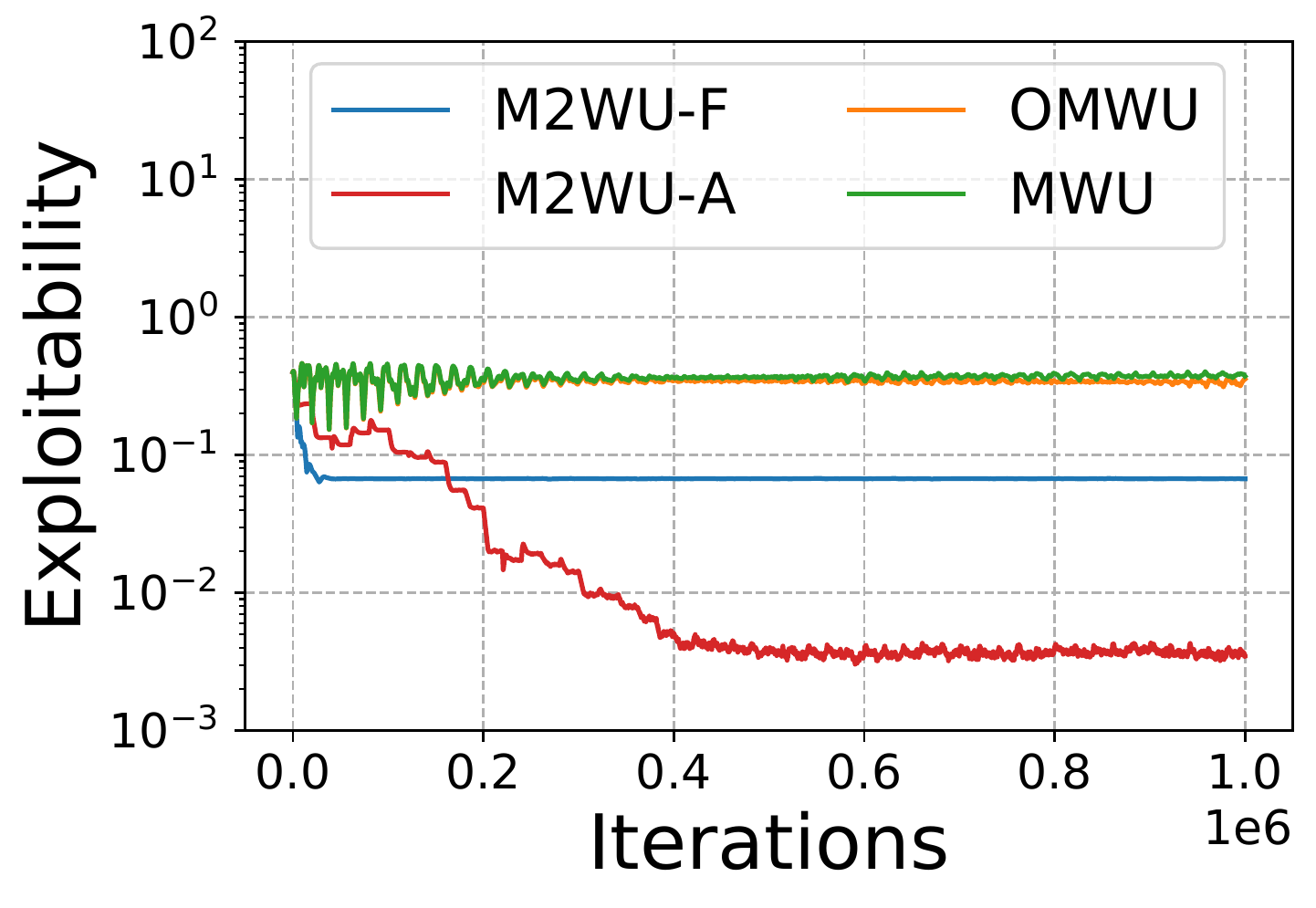}
        \subcaption{M-Ne}
    \end{minipage}
    \begin{minipage}[t]{0.24\textwidth}
        \centering
        \includegraphics[width=1.0\linewidth]{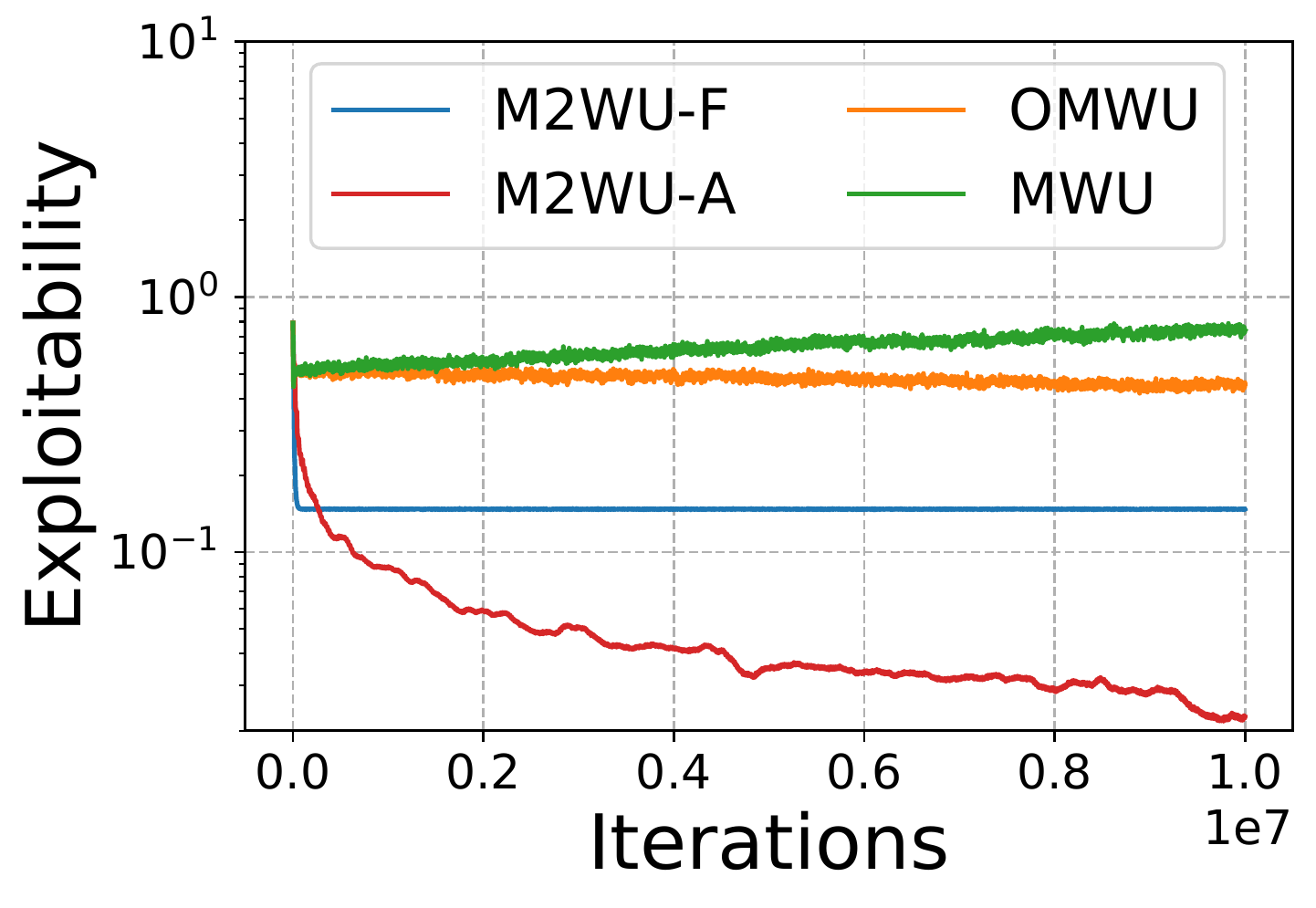}
        \subcaption{Random utility ($25\times 25$)}
    \end{minipage}
    \begin{minipage}[t]{0.24\textwidth}
        \centering
        \includegraphics[width=1.0\linewidth]{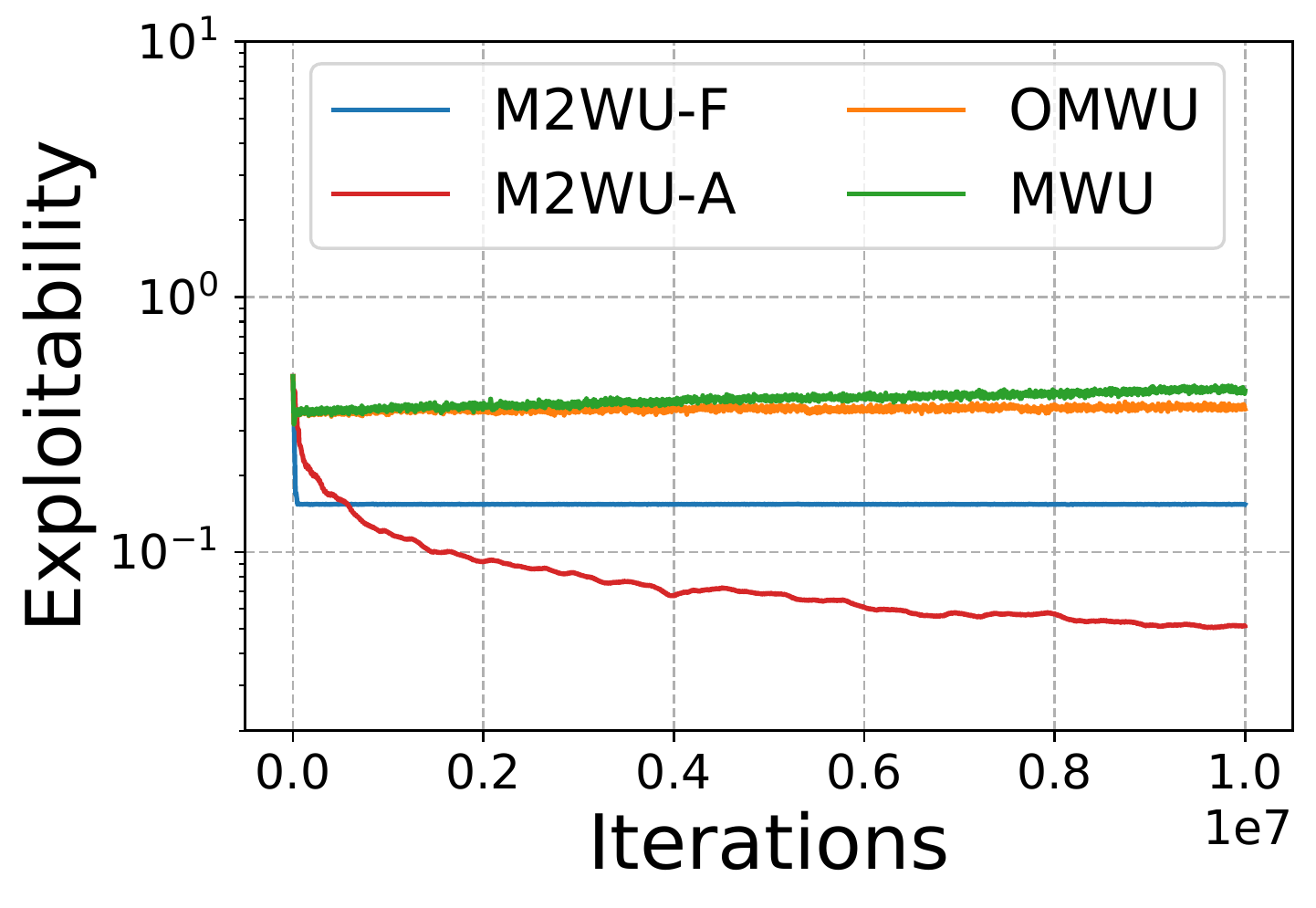}
        \subcaption{Random utility ($100\times 100$)}
    \end{minipage}
    \caption{
    Exploitability of $\pi^t$ for M2WU, MWU, and OMWU with noisy feedback.
    }
    \label{fig:exploitability_noisy}
\end{figure*}
\begin{figure*}[t!]
    \centering
    \begin{minipage}[t]{0.24\textwidth}
        \centering
        \includegraphics[width=1.0\linewidth]{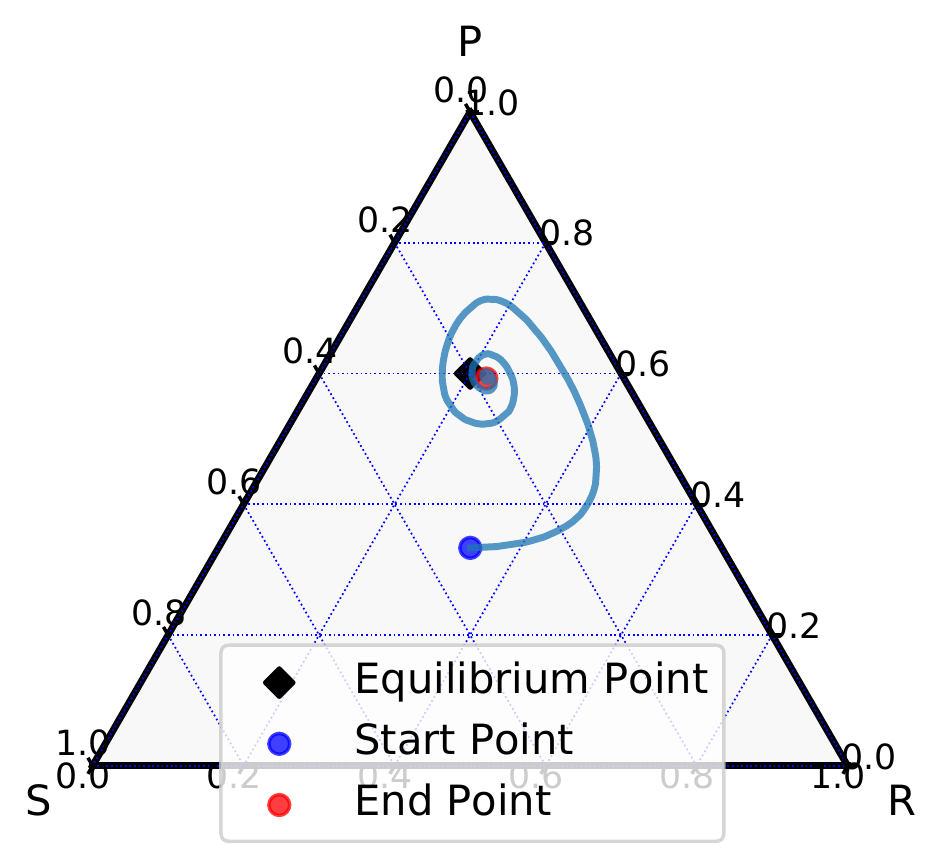}
        \subcaption{M2WU-F}
    \end{minipage}
    \begin{minipage}[t]{0.24\textwidth}
        \centering
        \includegraphics[width=1.0\linewidth]{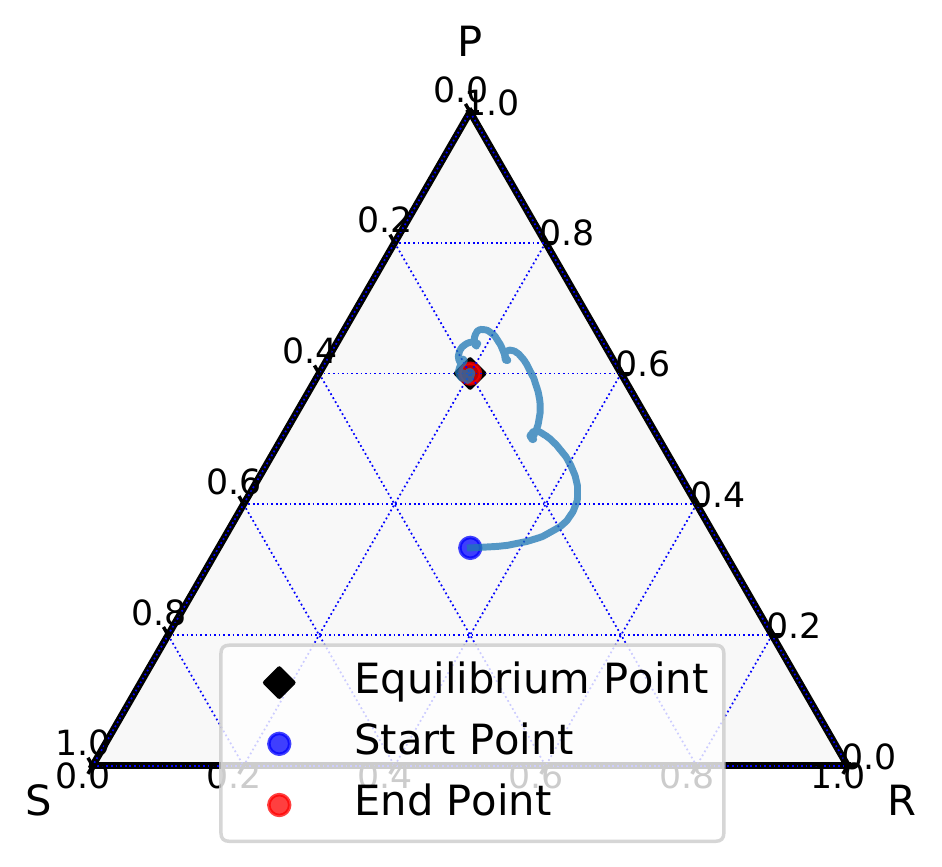}
        \subcaption{M2WU-A}
    \end{minipage}
    \begin{minipage}[t]{0.24\textwidth}
        \centering
        \includegraphics[width=1.0\linewidth]{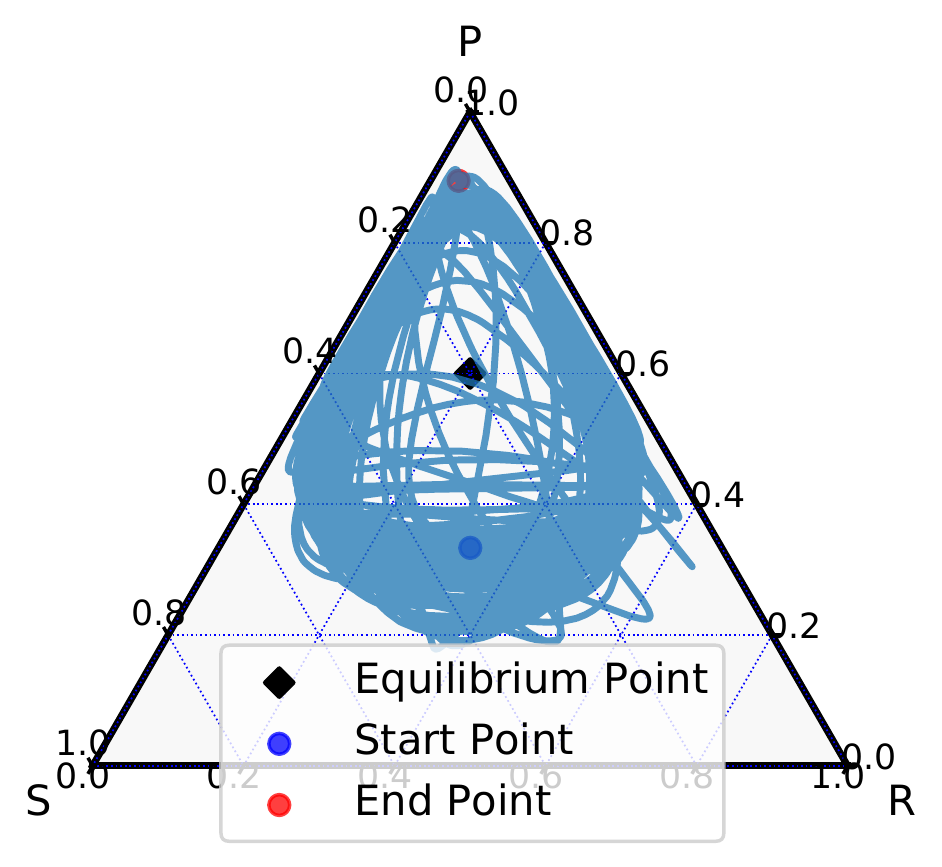}
        \subcaption{MWU}
    \end{minipage}
    \begin{minipage}[t]{0.24\textwidth}
        \centering
        \includegraphics[width=1.0\linewidth]{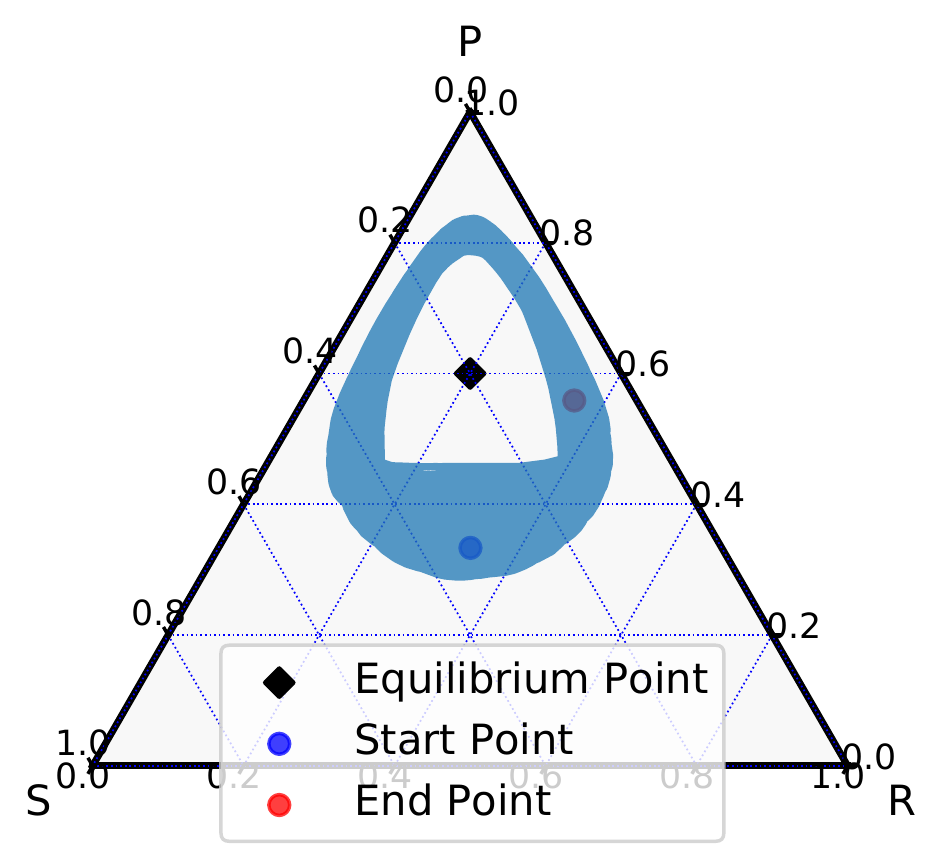}
        \subcaption{OMWU}\label{fig:cycle with OMWU}
    \end{minipage}
    \caption{
    Trajectories of $\pi^t$ for M2WU, MWU and OMWU in BRPS with noisy feedback.
    We set the initial strategy to $\pi_i^0=(1/|A_i|)_{a\in A_i}$ for $i\in \{1, 2\}$.
    The black point represents the equilibrium strategy.
    The blue/red points represent the initial/final points, respectively.
    }
    \label{fig:trajectory_brps_noisy}
\end{figure*}
\fi

Figure~\ref{fig:exploitability_full} averages the exploitability of the last-iterate strategy $\pi^t$ for the four algorithms in the four games. We observe that, in any game, MWU and M2WU-F do not converge to an equilibrium and the exploitability reaches around $0.1$ at best. Both OMWU and M2WU-A exhibit clear convergence. M2WU-A converges faster than OMWU, which may converge to the same exploitability in the long run. 

Note that the quick convergence of M2WU-F to a constant value lower than $2\mu=0.2$ supports Corollary~\ref{cor:exploitability_convergence} in the upper bound on the exploitability of $\pi^t$. The best performance of M2WU-A supports Theorem~\ref{thm:direct_convergence} and Lemma~\ref{lem:min_kl_diff}, which imply that the sequence of stationary points of (\ref{eq:rmd}) converges to the Nash equilibrium. 

We have so far fixed the learning rate at $0.1$ and the mutation rate at $0.1$.
Figure~\ref{fig:compare_mu_eta} exhibits the exploitability of M2WU-F in BRPS with varying $\eta\in\{0.1, 0.01, 0.001\}$ and $\mu\in \{0.1, 0.01\}$.
The tendencies in the other games are qualitatively similar.
We observe that the lower mutation rate makes exploitability low, while the larger mutation rate makes convergence fast.
When the mutation rate is low (e.g., $0.01$), the learning rate must be lower to enjoy the last-iterate convergence.
Otherwise, agents do not properly learn equilibrium (See the green line with $\eta=0.1$ and $\mu=0.01$ in Figure~\ref{fig:compare_mu_eta}).
This is why we chose $\eta=0.1$ and $\mu=0.1$ as a baseline.
With them, M2WU-F quickly converges, though the exploitability can be tuned.
This result highlights the relationship between $\mu$ and the upper bound on $\eta$ in Theorem \ref{thm:kl_convergence}.

\subsection{Noisy Feedback}
\ifarxiv
\begin{figure*}[t!]
    \centering
    \begin{minipage}[t]{0.24\textwidth}
        \centering
        \includegraphics[width=1.1\linewidth]{figs/noisy_feedback/brps_semi_bandit_feedback_gaussian_0.1_exploitability.pdf}
        \subcaption{BRPS}
    \end{minipage}
    \begin{minipage}[t]{0.24\textwidth}
        \centering
        \includegraphics[width=1.1\linewidth]{figs/noisy_feedback/multiple_size5_semi_bandit_feedback_gaussina_0.1_exploitability.pdf}
        \subcaption{M-Ne}
    \end{minipage}
    \begin{minipage}[t]{0.24\textwidth}
        \centering
        \includegraphics[width=1.1\linewidth]{figs/noisy_feedback/random_payoff_size_25_semi_bandit_feedback_gaussina_0.1_exploitability.pdf}
        \subcaption{Random utility ($25\times 25$)}
    \end{minipage}
    \begin{minipage}[t]{0.24\textwidth}
        \centering
        \includegraphics[width=1.1\linewidth]{figs/noisy_feedback/random_payoff_size_100_semi_bandit_feedback_gaussina_0.1_exploitability.pdf}
        \subcaption{Random utility ($100\times 100$)}
    \end{minipage}
    \caption{
    Exploitability of $\pi^t$ for M2WU, MWU, and OMWU with noisy feedback.
    }
    \label{fig:exploitability_noisy}
\end{figure*}
\begin{figure*}[t!]
    \centering
    \begin{minipage}[t]{0.24\textwidth}
        \centering
        \includegraphics[width=1.0\linewidth]{figs/noisy_feedback/brps_noisy_feedback_M2WU_trajectories_triangle_0.pdf}
        \subcaption{M2WU-F}
    \end{minipage}
    \begin{minipage}[t]{0.24\textwidth}
        \centering
        \includegraphics[width=1.0\linewidth]{figs/noisy_feedback/brps_noisy_feedback_M2WU_adaptive_trajectories_triangle_0.pdf}
        \subcaption{M2WU-A}
    \end{minipage}
    \begin{minipage}[t]{0.24\textwidth}
        \centering
        \includegraphics[width=1.0\linewidth]{figs/noisy_feedback/brps_noisy_feedback_MWU_trajectories_triangle_0.pdf}
        \subcaption{MWU}
    \end{minipage}
    \begin{minipage}[t]{0.24\textwidth}
        \centering
        \includegraphics[width=1.0\linewidth]{figs/noisy_feedback/brps_noisy_feedback_OMWU_trajectories_triangle_0.pdf}
        \subcaption{OMWU}\label{fig:cycle with OMWU}
    \end{minipage}
    \caption{
    Trajectories of $\pi^t$ for M2WU, MWU and OMWU in BRPS with noisy feedback.
    We set the initial strategy to $\pi_i^0=(1/|A_i|)_{a\in A_i}$ for $i\in \{1, 2\}$.
    The black point represents the equilibrium strategy.
    The blue/red points represent the initial/final points, respectively.
    }
    \label{fig:trajectory_brps_noisy}
\end{figure*}
\fi
\label{sec:exp_noisy_feedback}
This section turns to the noisy feedback case, where payoffs observed in each period are perturbed. 
Players observe noisy estimates $\hat{q}_i^{\pi^t}(a_i) = {q}_i^{\pi^t}(a_i) + \xi^t(a_i) $ of the gradient vectors at each period, where the noise $\xi^t(a_i)$ is drawn from the Gaussian distribution $\mathcal{N}(0, 0.1^2)$ for $i\in \{1,2\},~a_i\in A_i$ and $t \ge 0$ in an i.i.d. manner. 
We here use the learning rate $\eta$ of $0.001$ and the mutation rate $\mu$ of $0.1$ for M2WU-F and $0.5$ for M2WU-A.  We also update it every $N=20,000$ period in Algorithm~\ref{alg:m2wu}, which is much longer than the full feedback case to handle noisy estimates. Note that, for OMWU, we use the noisy gradient vector $\hat{q}_i^{\pi^{t-1}}$ at the previous step $t-1$ as the prediction vector. 

Figure~\ref{fig:exploitability_noisy} averages the exploitability for the four algorithms in the four games. In all of the games, M2WU-F and \mbox{-A} always outperform MWU and OMWU. The two existing algorithms first oscillate, do not improve much, and remain far from equilibrium. As in the full feedback case, although M2WU-A converges less rapidly than M2WU-F, it admits significantly low exploitability. This tendency does not change even when the action space is larger and even when the learning rate is decayed.
We provide additional experimental results when using different learning rates $\eta\in \{0.1, 0.05, 0.01, 0.005\}$ in Appendix \ref{sec:appendix_experiments_noisy_varying_eta}.
Also, the case with the decayed learning rate, formally, $\eta_t=t^{-\frac{3}{4}}$, is placed in Appendix~\ref{sec:appendix_experiments_noisy_decay}. 

Figure~\ref{fig:trajectory_brps_noisy} demonstrates the trajectories of $\pi^t$ of each algorithm in BRPS. We fix the initial strategy to be uniform. Both M2WU-F and M2WU-A exhibit a clear convergence near the equilibrium (red) point, while MWU and OMWU do not at all. This result strongly supports Figure~\ref{fig:exploitability_noisy}(a). Note that M2WU-A obtains the strategy with lower exploitability than M2WU-F as Theorem~\ref{thm:direct_convergence} suggests.

\section{CONCLUSION}
In this paper, we proposed M2WU, an algorithm that utilizes a simple idea of stabilizing learning dynamics through mutation with a reference strategy. We proved that in both full and noisy feedback settings, the last-iterate strategy converges to the stationary point of RMD. In particular, we showed that such convergence occurs exponentially fast with a constant learning rate in the full feedback setting. Furthermore, last-iterate convergence to an exact Nash equilibrium was also proven by iteratively reusing the converged stationary point as a subsequent reference strategy. The numerical experiments showed that, even with the presence of noise, the strategy updated by M2WU exhibits a lower exploitability than MWU and OMWU. Future research could examine the convergence rate with noisy feedback and extend M2WU and its analyses to extensive-form games.

\subsubsection*{Acknowledgements}
Atsushi Iwasaki was supported by JSPS KAKENHI Grant Numbers 21H04890 and 20K20752.
We are indebted to Alexandre Proutiere for discussing the problem setting and theoretical results. 
We also thank Riku Togashi for commenting on our earlier drafts. 

\bibliography{ref}

\begin{thebibliography}{}

\bibitem[Abe and Kaneko, 2021]{abe2020off}
Abe, K. and Kaneko, Y. (2021).
\newblock Off-policy exploitability-evaluation in two-player zero-sum markov
  games.
\newblock In {\em AAMAS}, pages 78--87.

\bibitem[Abe et~al., 2022]{abe2022mutationdriven}
Abe, K., Sakamoto, M., and Iwasaki, A. (2022).
\newblock Mutation-driven follow the regularized leader for last-iterate
  convergence in zero-sum games.
\newblock In {\em UAI}, pages 1--10.

\bibitem[Azizian et~al., 2021]{azizian2021LastIterate}
Azizian, W., Iutzeler, F., Malick, J., and Mertikopoulos, P. (2021).
\newblock The last-iterate convergence rate of optimistic mirror descent in
  stochastic variational inequalities.
\newblock In {\em COLT}, pages 326--358.

\bibitem[Bailey and Piliouras, 2018]{bailey2018multiplicative}
Bailey, J.~P. and Piliouras, G. (2018).
\newblock Multiplicative weights update in zero-sum games.
\newblock In {\em Economics and Computation}, pages 321--338.

\bibitem[Banerjee and Peng, 2005]{banerjee2005efficient}
Banerjee, B. and Peng, J. (2005).
\newblock Efficient no-regret multiagent learning.
\newblock In {\em AAAI}, pages 41--46.

\bibitem[Bauer et~al., 2019]{Bauer:2019}
Bauer, J., Broom, M., and Alonso, E. (2019).
\newblock The stabilization of equilibria in evolutionary game dynamics through
  mutation: mutation limits in evolutionary games.
\newblock {\em Proceedings of the Royal Society A: Mathematical, Physical and
  Engineering Sciences}, 475(2231):20190355.

\bibitem[Bena{\"i}m, 1999]{benaim1999dynamics}
Bena{\"i}m, M. (1999).
\newblock Dynamics of stochastic approximation algorithms.
\newblock In {\em S{\'e}minaire de Probabilit{\'e}s XXXIII}, pages 1--68.

\bibitem[Bena{\"\i}m and Hirsch, 1996]{benaim1996asymptotic}
Bena{\"\i}m, M. and Hirsch, M.~W. (1996).
\newblock Asymptotic pseudotrajectories and chain recurrent flows, with
  applications.
\newblock {\em Journal of Dynamics and Differential Equations}, 8(1):141--176.

\bibitem[Bernasconi et~al., 2022]{bernasconi2022last}
Bernasconi, M., Marchesi, A., and Trov{\`o}, F. (2022).
\newblock Last-iterate convergence to trembling-hand perfect equilibria.
\newblock {\em arXiv preprint arXiv:2208.08238}.

\bibitem[Bloembergen et~al., 2015]{Bloembergen2015}
Bloembergen, D., Tuyls, K., Hennes, D., and Kaisers, M. (2015).
\newblock Evolutionary dynamics of multi-agent learning: A survey.
\newblock {\em Journal of Artificial Intelligence Research}, 53:659--697.

\bibitem[Blum and Monsour, 2007]{blum:agt:2007}
Blum, A. and Monsour, Y. (2007).
\newblock Learning, regret minimization, and equilibria.
\newblock In {\em Algorithmic game theory}, pages 79--102. Cambridge University
  Press.

\bibitem[Bomze and Burger, 1995]{bomze:geb:1995}
Bomze, I.~M. and Burger, R. (1995).
\newblock Stability by mutation in evolutionary games.
\newblock {\em Games and Economic Behavior}, 11(2):146--172.

\bibitem[Borkar, 2009]{borkar2009stochastic}
Borkar, V.~S. (2009).
\newblock {\em Stochastic approximation: a dynamical systems viewpoint},
  volume~48.
\newblock Springer.

\bibitem[Bravo et~al., 2018]{bravo2018bandit}
Bravo, M., Leslie, D., and Mertikopoulos, P. (2018).
\newblock Bandit learning in concave n-person games.
\newblock In {\em NeurIPS}, pages 5666--5676.

\bibitem[Busoniu et~al., 2008]{busoniu2008comprehensive}
Busoniu, L., Babuska, R., and De~Schutter, B. (2008).
\newblock A comprehensive survey of multiagent reinforcement learning.
\newblock {\em IEEE Transactions on Systems, Man, and Cybernetics, Part C
  (Applications and Reviews)}, 38(2):156--172.

\bibitem[Börgers and Sarin, 1997]{borgers:jet:1997}
Börgers, T. and Sarin, R. (1997).
\newblock Learning through reinforcement and replicator dynamics.
\newblock {\em Journal of Economic Theory}, 77(1):1--14.

\bibitem[Cai et~al., 2022]{cai2022tight}
Cai, Y., Oikonomou, A., and Zheng, W. (2022).
\newblock Tight last-iterate convergence of the extragradient method for
  constrained monotone variational inequalities.
\newblock {\em arXiv preprint arXiv:2204.09228}.

\bibitem[Cen et~al., 2021]{cen2021fast}
Cen, S., Wei, Y., and Chi, Y. (2021).
\newblock Fast policy extragradient methods for competitive games with entropy
  regularization.
\newblock In {\em NeurIPS}, pages 27952--27964.

\bibitem[Cohen et~al., 2017]{Heliou2017Learning}
Cohen, J., H\'{e}liou, A., and Mertikopoulos, P. (2017).
\newblock Learning with bandit feedback in potential games.
\newblock In {\em NeurIPS}, pages 6372--6381.

\bibitem[Daskalakis et~al., 2011]{daskalakis2011near}
Daskalakis, C., Deckelbaum, A., and Kim, A. (2011).
\newblock Near-optimal no-regret algorithms for zero-sum games.
\newblock In {\em SODA}, pages 235--254.

\bibitem[Daskalakis et~al., 2018]{daskalakis2017training}
Daskalakis, C., Ilyas, A., Syrgkanis, V., and Zeng, H. (2018).
\newblock Training gans with optimism.
\newblock In {\em ICLR}.

\bibitem[Daskalakis and Panageas, 2018]{daskalakis2018limit}
Daskalakis, C. and Panageas, I. (2018).
\newblock The limit points of (optimistic) gradient descent in min-max
  optimization.
\newblock In {\em NeurIPS}, pages 9256--9266.

\bibitem[Daskalakis and Panageas, 2019]{daskalakis2018last}
Daskalakis, C. and Panageas, I. (2019).
\newblock Last-iterate convergence: Zero-sum games and constrained min-max
  optimization.
\newblock In {\em ITCS}, pages 27:1--27:18.

\bibitem[de~Montbrun and Renault, 2022]{de2022convergence}
de~Montbrun, {\'E}. and Renault, J. (2022).
\newblock Convergence of optimistic gradient descent ascent in bilinear games.
\newblock {\em arXiv preprint arXiv:2208.03085}.

\bibitem[Farina et~al., 2022]{farina2022kernelized}
Farina, G., Lee, C.-W., Luo, H., and Kroer, C. (2022).
\newblock Kernelized multiplicative weights for 0/1-polyhedral games: Bridging
  the gap between learning in extensive-form and normal-form games.
\newblock In {\em ICML}, pages 6337--6357.

\bibitem[Giannou et~al., 2021a]{giannou2021convergence}
Giannou, A., Vlatakis-Gkaragkounis, E.-V., and Mertikopoulos, P. (2021a).
\newblock On the rate of convergence of regularized learning in games: From
  bandits and uncertainty to optimism and beyond.
\newblock In {\em NeurIPS}, pages 22655--22666.

\bibitem[Giannou et~al., 2021b]{giannou2021Survival}
Giannou, A., Vlatakis-Gkaragkounis, E.~V., and Mertikopoulos, P. (2021b).
\newblock Survival of the strictest: Stable and unstable equilibria under
  regularized learning with partial information.
\newblock In {\em COLT}, pages 2147--2148.

\bibitem[Golowich et~al., 2020a]{golowich2020tight}
Golowich, N., Pattathil, S., and Daskalakis, C. (2020a).
\newblock Tight last-iterate convergence rates for no-regret learning in
  multi-player games.
\newblock In {\em NeurIPS}, pages 20766--20778.

\bibitem[Golowich et~al., 2020b]{golowich2020last}
Golowich, N., Pattathil, S., Daskalakis, C., and Ozdaglar, A. (2020b).
\newblock Last iterate is slower than averaged iterate in smooth convex-concave
  saddle point problems.
\newblock In {\em COLT}, pages 1758--1784.

\bibitem[Goodfellow et~al., 2014]{goodfellow2014generative}
Goodfellow, I.~J., Pouget-Abadie, J., Mirza, M., Xu, B., Warde-Farley, D.,
  Ozair, S., Courville, A., and Bengio, Y. (2014).
\newblock Generative adversarial nets.
\newblock In {\em NeurIPS}, pages 2672--2680.

\bibitem[Hofbauer and Sigmund, 1998]{Hofbauer1998}
Hofbauer, J. and Sigmund, K. (1998).
\newblock {\em Evolutionary Games and Population Dynamics}.
\newblock Cambridge University Press, Cambridge.

\bibitem[Hofbauer et~al., 2009]{hofbauer:mor:2009}
Hofbauer, J., Sorin, S., and Viossat, Y. (2009).
\newblock Time average replicator and best-reply dynamics.
\newblock {\em Mathematics of Operations Research}, 34(2):263--269.

\bibitem[Hsieh et~al., 2022]{hsieh2022no}
Hsieh, Y.-G., Antonakopoulos, K., Cevher, V., and Mertikopoulos, P. (2022).
\newblock No-regret learning in games with noisy feedback: Faster rates and
  adaptivity via learning rate separation.
\newblock {\em arXiv preprint arXiv:2206.06015}.

\bibitem[Hsieh et~al., 2019]{hsieh2019ontheconvergence}
Hsieh, Y.-G., Iutzeler, F., Malick, J., and Mertikopoulos, P. (2019).
\newblock On the convergence of single-call stochastic extra-gradient methods.
\newblock In {\em NeurIPS}, pages 6938--6948.

\bibitem[Johanson et~al., 2012]{johanson2012finding}
Johanson, M., Bard, N., Burch, N., and Bowling, M. (2012).
\newblock Finding optimal abstract strategies in extensive-form games.
\newblock In {\em AAAI}, pages 1371--1379.

\bibitem[Johanson et~al., 2011]{johanson2011accelerating}
Johanson, M., Waugh, K., Bowling, M., and Zinkevich, M. (2011).
\newblock Accelerating best response calculation in large extensive games.
\newblock In {\em IJCAI}, pages 258--265.

\bibitem[Kannan and Shanbhag, 2019]{kannan2019optimal}
Kannan, A. and Shanbhag, U.~V. (2019).
\newblock Optimal stochastic extragradient schemes for pseudomonotone
  stochastic variational inequality problems and their variants.
\newblock {\em Computational Optimization and Applications}, 74(3):779--820.

\bibitem[Lei et~al., 2021]{lei2021last}
Lei, Q., Nagarajan, S.~G., Panageas, I., et~al. (2021).
\newblock Last iterate convergence in no-regret learning: constrained min-max
  optimization for convex-concave landscapes.
\newblock In {\em AISTATS}, pages 1441--1449.

\bibitem[Liang and Stokes, 2019]{liang2019interaction}
Liang, T. and Stokes, J. (2019).
\newblock Interaction matters: A note on non-asymptotic local convergence of
  generative adversarial networks.
\newblock In {\em AISTATS}, pages 907--915.

\bibitem[Liu et~al., 2022]{liu2022power}
Liu, M., Ozdaglar, A., Yu, T., and Zhang, K. (2022).
\newblock The power of regularization in solving extensive-form games.
\newblock {\em arXiv preprint arXiv:2206.09495}.

\bibitem[Lockhart et~al., 2019]{lockhart2019computing}
Lockhart, E., Lanctot, M., Pérolat, J., Lespiau, J.-B., Morrill, D., TImbers,
  F., and Tuyls, K. (2019).
\newblock Computing approximate equilibria in sequential adversarial games by
  exploitability descent.
\newblock In {\em IJCAI}, pages 464--470.

\bibitem[Mertikopoulos et~al., 2019]{mertikopoulos2018optimistic}
Mertikopoulos, P., Lecouat, B., Zenati, H., Foo, C.-S., Chandrasekhar, V., and
  Piliouras, G. (2019).
\newblock Optimistic mirror descent in saddle-point problems: Going the extra
  (gradient) mile.
\newblock In {\em ICLR}.

\bibitem[Mertikopoulos et~al., 2018]{mertikopoulos2018cycles}
Mertikopoulos, P., Papadimitriou, C., and Piliouras, G. (2018).
\newblock Cycles in adversarial regularized learning.
\newblock In {\em SODA}, pages 2703--2717.

\bibitem[Mertikopoulos and Zhou, 2019]{mertikopoulos2019learning}
Mertikopoulos, P. and Zhou, Z. (2019).
\newblock Learning in games with continuous action sets and unknown payoff
  functions.
\newblock {\em Mathematical Programming}, 173(1):465--507.

\bibitem[Mokhtari et~al., 2020]{mokhtari2020unified}
Mokhtari, A., Ozdaglar, A., and Pattathil, S. (2020).
\newblock A unified analysis of extra-gradient and optimistic gradient methods
  for saddle point problems: Proximal point approach.
\newblock In {\em AISTATS}, pages 1497--1507.

\bibitem[Nash, 1951]{nash1951non}
Nash, J. (1951).
\newblock Non-cooperative games.
\newblock {\em Annals of mathematics}, pages 286--295.

\bibitem[Nguyen et~al., 2021]{nguyen2021last}
Nguyen, T.-D., Zemhoho, A.~B., Tran-Thanh, L., et~al. (2021).
\newblock Last round convergence and no-dynamic regret in asymmetric repeated
  games.
\newblock In {\em ALT}, pages 553--577.

\bibitem[Perolat et~al., 2021]{perolat2021poincare}
Perolat, J., Munos, R., Lespiau, J.-B., Omidshafiei, S., Rowland, M., Ortega,
  P., Burch, N., Anthony, T., Balduzzi, D., De~Vylder, B., et~al. (2021).
\newblock From poincar{\'e} recurrence to convergence in imperfect information
  games: Finding equilibrium via regularization.
\newblock In {\em ICML}, pages 8525--8535.

\bibitem[Rakhlin and Sridharan, 2013a]{rakhlin2013online}
Rakhlin, A. and Sridharan, K. (2013a).
\newblock Online learning with predictable sequences.
\newblock In {\em COLT}, pages 993--1019.

\bibitem[Rakhlin and Sridharan, 2013b]{rakhlin2013optimization}
Rakhlin, S. and Sridharan, K. (2013b).
\newblock Optimization, learning, and games with predictable sequences.
\newblock In {\em NeurIPS}, pages 3066--3074.

\bibitem[Robbins and Monro, 1951]{robbins1951stochastic}
Robbins, H. and Monro, S. (1951).
\newblock A stochastic approximation method.
\newblock {\em The Annals of Mathematical Statistics}, 22(3):400--407.

\bibitem[Timbers et~al., 2022]{timbers2020approximate}
Timbers, F., Bard, N., Lockhart, E., Lanctot, M., Schmid, M., Burch, N.,
  Schrittwieser, J., Hubert, T., and Bowling, M. (2022).
\newblock Approximate exploitability: Learning a best response.
\newblock In {\em IJCAI}, pages 3487--3493.

\bibitem[Tsybakov, 2009]{Tsybakov:1315296}
Tsybakov, A.~B. (2009).
\newblock {\em {Introduction to Nonparametric Estimation}}.
\newblock Springer.

\bibitem[Wei et~al., 2021a]{wei2021last}
Wei, C.-Y., Lee, C.-W., Zhang, M., and Luo, H. (2021a).
\newblock Last-iterate convergence of decentralized optimistic gradient
  descent/ascent in infinite-horizon competitive markov games.
\newblock In {\em COLT}, pages 4259--4299.

\bibitem[Wei et~al., 2021b]{wei2020linear}
Wei, C.-Y., Lee, C.-W., Zhang, M., and Luo, H. (2021b).
\newblock Linear last-iterate convergence in constrained saddle-point
  optimization.
\newblock In {\em ICLR}.

\bibitem[Zagorsky et~al., 2013]{zagorsky:plosone:2013}
Zagorsky, B.~M., Reiter, J.~G., Chatterjee, K., and Nowak, M.~A. (2013).
\newblock Forgiver triumphs in alternating prisoner's dilemma.
\newblock {\em PLOS ONE}, 8(12):1--8.

\bibitem[Zinkevich et~al., 2007]{zinkevich2007regret}
Zinkevich, M., Johanson, M., Bowling, M., and Piccione, C. (2007).
\newblock Regret minimization in games with incomplete information.
\newblock In {\em NeurIPS}, pages 1729--1736.

\end{thebibliography}

% If you have textual supplementary material
\appendix
\onecolumn
\section{NOTATIONS}
In this section, we summarize the notations we use in Table~\ref{tb:notation}.
\begin{table}[h!]
    \centering
    \caption{Notations}
    \label{tb:notation}
    \begin{tabular}{cc} \hline
        Symbol & Description \\ \hline
        $A_i$ & Action set for player $i$ \\ 
        $u_i$ & Utility function for player $i$ \\
        $\pi_i$ & Strategy for player $i$ \\
        $\pi$ & Strategy profile \\
        $v_i^{\pi}$ & Player $i$'s expected utility for a given strategy profile $\pi$ \\
        $q_i^{\pi}$ & Player $i$'s conditional expected utility vector for a given strategy profile $\pi$ \\
        $\hat{q}_i^{\pi^t}$ & Player $i$'s noisy conditional expected utility vector at iteration $t$ \\
        $\xi^t$ & Noise vector at iteration $t$ \\
        $\pi^{\ast}$ & Nash equilibrium \\
        $\Pi_i^{\ast}$ & Set of Nash equilibria for player $i$ \\
        $\Delta(A_i)$ & Probability simplex on $A_i$ \\
        $\Delta^{\circ}(A_i)$ & Interior of $\Delta(A_i)$ \\
        $\mathrm{KL}(\cdot, \cdot)$ & Kullback-Leibler divergence \\
        $\eta_t$ & Learning rate at iteration $t$ \\
        $\mu$ & Mutation rate \\
        $r_i$ & Reference strategy \\
        $q_i^{\mu,t}$ & $\left(\hat{q}_i^{\pi^t}(a) + \frac{\mu}{\pi_i^t(a)}(r_i(a) - \pi_i^t(a))\right)_{a\in A_i}$ \\
        $\pi^{\mu,r}$ & Stationary point of (\ref{eq:rmd}) for given $\mu$ and $r$ \\ \hline
    \end{tabular}
\end{table}
\vfill

\section{PROOFS FOR THEOREM \ref{thm:kl_convergence}}
\label{sec:appendix_proof_kl_convergence}

\subsection{Proof of Theorem \ref{thm:kl_convergence}}
\label{sec:appendix_proof_kl_convergence_thm}
\begin{proof}[Proof of Theorem \ref{thm:kl_convergence}]
Let us define the following notation:
\begin{align*}
\Omega^{\mu,r} &= \left\{\pi\in \prod_{i=1}^2\Delta(A_i) ~|~ \mathrm{KL}(\pi^{\mu,r}, \pi) \leq \mathrm{KL}(\pi^{\mu,r}, \pi^0)\right\}, \\
\rho &= \min_{\pi\in \Omega^{\mu,r}}\min_{i\in \{1, 2\}, a\in A_i} \pi_i(a_i)>0, \\
\zeta&=\frac{1}{2u_{\max} + \frac{1}{\rho}\max_{i\in \{1,2\}, a\in A_i}r_i(a)}>0, \\
\alpha&=\min_{i\in \{1, 2\}, a\in A_i}\frac{r_i(a)}{\pi_i^{\mu,r}(a)} > 0, \\
\beta&=\frac{16}{\rho^2}\left(\max_{i\in \{1,2\}, a\in A_i}\frac{r_i(a)}{\pi_i^{\mu,r}(a)}\right)^2 > 0, \\
\gamma&= 16u_{\max}^2 > 0. 
\end{align*}

We prove the statement by mathematical induction.
Clearly, for $t=0$, we have $\mathrm{KL}(\pi^{\mu,r}, \pi^t)\leq \mathrm{KL}(\pi^{\mu,r},\pi^0)$ and $\pi^0\in \Omega^{\mu,r}$.
Let us assume that $\pi^t\in \Omega^{\mu,r}$, i.e., $\mathrm{KL}(\pi^{\mu,r},\pi^t)\leq \mathrm{KL}(\pi^{\mu,r},\pi^0)$.
Under the assumption that $\mathrm{KL}(\pi^{\mu,r},\pi^t)\leq \mathrm{KL}(\pi^{\mu,r},\pi^0)$, we have $\pi_i^t(a)\geq \rho$ for all $i\in \{1,2\}$ and $a\in A_i$.

We first derive the difference equation for $\mathrm{KL}(\pi^{\mu,r},\pi^t)$:
\begin{lemma}
\label{lem:bregman_div}
Let $\pi^{\mu,r}\in \prod_{i=1}^2 \Delta(A_i)$ be a stationary point of (\ref{eq:rmd}).
Then, $\pi^t$ updated by M2WU satisfies that:
\begin{align*}
    \mathrm{KL}(\pi^{\mu,r}, \pi^{t+1}) - \mathrm{KL}(\pi^{\mu,r}, \pi^t) =& \eta_t \sum_{i=1}^2\left(v_i^{\pi_i^t, \pi_{-i}^{\mu,r}} + \mu - \mu\sum_{a_i\in A_i}r_i(a_i)\frac{\pi_i^{\mu,r}(a_i)}{\pi_i^t(a_i)}\right) + \mathrm{KL}(\pi^t, \pi^{t+1}).
\end{align*}
\end{lemma}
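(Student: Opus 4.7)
The plan is to exploit the standard ``three-point'' identity for KL-divergence under an exponential / mirror-descent style update. Using full feedback ($\hat q_i^{\pi^t}=q_i^{\pi^t}$), I first take logarithms of the M2WU update to write, for every $a\in A_i$,
\begin{align*}
\ln \pi_i^{t+1}(a) - \ln \pi_i^t(a) = \eta_t\, q_i^{\mu,t}(a) - \ln Z_i^t,
\end{align*}
where $Z_i^t = \sum_{a'} \pi_i^t(a')\exp(\eta_t q_i^{\mu,t}(a'))$ is the normalizing constant. This is the key relation from which everything else follows.

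Next I would multiply this identity once by $\pi_i^{\mu,r}(a)$ and once by $\pi_i^t(a)$, sum over $a\in A_i$, and sum over $i\in\{1,2\}$. The first weighting yields
\begin{align*}
-\bigl(\mathrm{KL}(\pi^{\mu,r},\pi^{t+1})-\mathrm{KL}(\pi^{\mu,r},\pi^{t})\bigr) = \eta_t \sum_{i=1}^2 \langle q_i^{\mu,t},\pi_i^{\mu,r}\rangle - \sum_{i=1}^2 \ln Z_i^t,
\end{align*}
while the second gives $-\mathrm{KL}(\pi^t,\pi^{t+1}) = \eta_t\sum_i \langle q_i^{\mu,t},\pi_i^t\rangle - \sum_i \ln Z_i^t$. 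Subtracting cancels the $\ln Z_i^t$ terms and produces the three-point identity
\begin{align*}
\mathrm{KL}(\pi^{\mu,r},\pi^{t+1})-\mathrm{KL}(\pi^{\mu,r},\pi^{t}) = \eta_t \sum_{i=1}^2 \langle q_i^{\mu,t},\,\pi_i^t-\pi_i^{\mu,r}\rangle + \mathrm{KL}(\pi^t,\pi^{t+1}).
\end{align*}

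It then remains to unpack $q_i^{\mu,t}(a)=q_i^{\pi^t}(a)+\tfrac{\mu}{\pi_i^t(a)}(r_i(a)-\pi_i^t(a))$ and simplify the inner product. For the utility piece, $\sum_i\langle q_i^{\pi^t},\pi_i^t-\pi_i^{\mu,r}\rangle = \sum_i\bigl(v_i^{\pi^t} - v_i^{\pi_i^{\mu,r},\pi_{-i}^t}\bigr)$; using the zero-sum identity $v_1+v_2\equiv 0$ (so $\sum_i v_i^{\pi^t}=0$ and $v_i^{\pi_i^{\mu,r},\pi_{-i}^t}=-v_{-i}^{\pi_i^{\mu,r},\pi_{-i}^t}$), this collapses to $\sum_i v_i^{\pi_i^t,\pi_{-i}^{\mu,r}}$. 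For the mutation piece, a direct expansion together with $\sum_a r_i(a)=\sum_a \pi_i^{\mu,r}(a)=\sum_a \pi_i^t(a)=1$ yields $\mu - \mu\sum_{a\in A_i} r_i(a)\,\pi_i^{\mu,r}(a)/\pi_i^t(a)$. Assembling the two contributions inside the sum over $i$ gives exactly the stated identity.

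The main obstacle is the zero-sum bookkeeping in step three: one has to identify the indexing convention for $v_i^{\pi_i,\pi_{-i}}$ correctly and check that the seemingly off-diagonal quantity $\sum_i v_i^{\pi_i^{\mu,r},\pi_{-i}^t}$ rewrites as $-\sum_i v_i^{\pi_i^t,\pi_{-i}^{\mu,r}}$. Beyond that the proof is purely algebraic and does not use the stationarity of $\pi^{\mu,r}$ at all (stationarity will only enter in Lemma~\ref{lem:rmd_divergence} when rewriting the term labeled (A) in the main text).
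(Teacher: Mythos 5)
Your proposal is correct and follows essentially the same route as the paper: both reduce the claim to the three-point identity $\mathrm{KL}(\pi^{\mu,r},\pi^{t+1})-\mathrm{KL}(\pi^{\mu,r},\pi^{t})=\eta_t\sum_{i=1}^2\langle q_i^{\mu,t},\pi_i^t-\pi_i^{\mu,r}\rangle+\mathrm{KL}(\pi^t,\pi^{t+1})$ and then perform the identical zero-sum and mutation-term bookkeeping, with your correct observation that stationarity of $\pi^{\mu,r}$ is never used. The only organizational difference is that you cancel the log-partition terms from the one-step update directly, whereas the paper routes through the cumulative-gradient representation of Lemma~\ref{lem:convex_conjugate}; the two computations are equivalent.
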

Moreover, under the assumption that $\eta_t = \eta\in  (0,\min(\frac{\mu\alpha}{\mu^2\beta + \gamma},\zeta))$, the statement of Lemma \ref{lem:kl_path_ub} holds.
By combining Lemmas \ref{lem:rmd_divergence}, \ref{lem:kl_path_ub}, and \ref{lem:bregman_div}, we get:
\begin{align}
\label{eq:kl_div_ub1}
    \mathrm{KL}(\pi^{\mu,r}, \pi^{t+1}) - \mathrm{KL}(\pi^{\mu,r}, \pi^t) \leq& -\eta\mu \sum_{i=1}^2\sum_{a\in A_i}r_i(a)\left(\sqrt{\frac{\pi_i^t(a)}{\pi_i^{\mu,r}(a)}} - \sqrt{\frac{\pi_i^{\mu,r}(a)}{\pi_i^t(a)}}\right)^2 \nonumber\\
    &+ 8 \eta^2 \mu^2\sum_{i=1}^2 \sum_{a\in A_i}r_i(a)^2\left(\frac{1}{\pi_i^{\mu,r}(a)} - \frac{1}{\pi_i^t(a)}\right)^2 \nonumber\\
    &+ 8 \eta^2u_{\max}^2\sum_{i=1}^2 \|\pi_i^t - \pi_i^{\mu,r}\|_1^2.
\end{align}
We prove the lower bound on $\sum_{i=1}^2\sum_{a\in A_i}r_i(a)\left(\sqrt{\frac{\pi_i^t(a)}{\pi_i^{\mu,r}(a)}} - \sqrt{\frac{\pi_i^{\mu,r}(a)}{\pi_i^t(a)}}\right)^2$ as follows:
\begin{align}
\label{eq:rmd_divergence_lb}
    \sum_{i=1}^2\sum_{a\in A_i}r_i(a)\left(\sqrt{\frac{\pi_i^t(a)}{\pi_i^{\mu,r}(a)}} - \sqrt{\frac{\pi_i^{\mu,r}(a)}{\pi_i^t(a)}}\right)^2 &= \sum_{i=1}^2\sum_{a\in A_i}\frac{r_i(a)}{\pi_i^{\mu,r}(a)}\frac{\left(\pi_i^t(a) - \pi_i^{\mu,r}(a)\right)^2}{\pi_i^t(a)} \nonumber\\
    & \geq \left(\min_{i\in \{1, 2\}, a\in A_i}\frac{r_i(a)}{\pi_i^{\mu,r}(a)}\right)\sum_{i=1}^2\sum_{a\in A_i}\frac{\left(\pi_i^t(a) - \pi_i^{\mu,r}(a)\right)^2}{\pi_i^t(a)} \nonumber\\
    & \geq \left(\min_{i\in \{1, 2\}, a\in A_i}\frac{r_i(a)}{\pi_i^{\mu,r}(a)}\right)\sum_{i=1}^2\ln \left(1 + \sum_{a\in A_i}\frac{\left(\pi_i^t(a) - \pi_i^{\mu,r}(a)\right)^2}{\pi_i^t(a)}\right) \nonumber\\
    & = \left(\min_{i\in \{1, 2\}, a\in A_i}\frac{r_i(a)}{\pi_i^{\mu,r}(a)}\right)\sum_{i=1}^2\ln \left(\sum_{a\in A_i}\pi_i^{\mu,r}(a)\frac{\pi_i^{\mu,r}(a)}{\pi_i^t(a)}\right) \nonumber\\
    & \geq \left(\min_{i\in \{1, 2\}, a\in A_i}\frac{r_i(a)}{\pi_i^{\mu,r}(a)}\right)\sum_{i=1}^2\sum_{a\in A_i}\pi_i^{\mu,r}(a)\ln\left(\frac{\pi_i^{\mu,r}(a)}{\pi_i^t(a)}\right) \nonumber\\
    & = \left(\min_{i\in \{1, 2\}, a\in A_i}\frac{r_i(a)}{\pi_i^{\mu,r}(a)}\right)\mathrm{KL}(\pi^{\mu,r}, \pi^t),
\end{align}
where the second inequality follows from $x\geq \ln(1+x)$ for all $x>0$, and the third inequality follows from the concavity of the $\ln(\cdot)$ function and Jensen's inequality for concave functions.
Next, $\sum_{i=1}^2 \sum_{a\in A_i}r_i(a)^2\left(\frac{1}{\pi_i^{\mu,r}(a)} - \frac{1}{\pi_i^t(a)}\right)^2$ is upper bounded as follows:
\begin{align}
\label{eq:inverse_prob_ub}
    \sum_{i=1}^2 \sum_{a\in A_i}r_i(a)^2\left(\frac{1}{\pi_i^{\mu,r}(a)} - \frac{1}{\pi_i^t(a)}\right)^2 &= \sum_{i=1}^2 \sum_{a\in A_i}\left(\frac{r_i(a)}{\pi_i^{\mu,r}(a)\pi_i^t(a)}\right)^2\left(\pi_i^{\mu,r}(a) - \pi_i^t(a)\right)^2 \nonumber\\
    &\leq \frac{1}{\rho^2}\left(\max_{i\in \{1,2\}, a\in A_i}\frac{r_i(a)}{\pi_i^{\mu,r}(a)}\right)^2\sum_{i=1}^2\|\pi_i^{\mu,r} - \pi_i^t\|_2^2 \nonumber\\
    &\leq \frac{1}{\rho^2}\left(\max_{i\in \{1,2\}, a\in A_i}\frac{r_i(a)}{\pi_i^{\mu,r}(a)}\right)^2\sum_{i=1}^2\|\pi_i^{\mu,r} - \pi_i^t\|_1^2 \nonumber\\
    &\leq \frac{2}{\rho^2}\left(\max_{i\in \{1,2\}, a\in A_i}\frac{r_i(a)}{\pi_i^{\mu,r}(a)}\right)^2\mathrm{KL}(\pi^{\mu,r}, \pi^t),
\end{align}
where the last inequality follows from Pinsker's inequality \citep{Tsybakov:1315296}.
Similarly, $\sum_{i=1}^2 \|\pi_i^t - \pi_i^{\mu,r}\|_1^2$ is upper bounded as
\begin{align}
\label{eq:pinsker_ub}
    \sum_{i=1}^2 \|\pi_i^t - \pi_i^{\mu,r}\|_1^2 &\leq 2\mathrm{KL}(\pi^{\mu,r}, \pi^t).
\end{align}

By combining (\ref{eq:kl_div_ub1}), (\ref{eq:rmd_divergence_lb}), (\ref{eq:inverse_prob_ub}), and (\ref{eq:pinsker_ub}), we have:
\begin{align*}
    &\mathrm{KL}(\pi^{\mu,r}, \pi^{t+1}) - \mathrm{KL}(\pi^{\mu,r}, \pi^t) \nonumber\\
    &\leq -\eta\mu \left(\min_{i\in \{1, 2\}, a\in A_i}\frac{r_i(a)}{\pi_i^{\mu,r}(a)}\right)\mathrm{KL}(\pi^{\mu,r}, \pi^t) + 16 \eta^2 \left(\frac{\mu^2}{\rho^2}\left(\max_{i\in \{1,2\}, a\in A_i}\frac{r_i(a)}{\pi_i^{\mu,r}(a)}\right)^2 + u_{\max}^2\right)\mathrm{KL}(\pi^{\mu,r}, \pi^t) \nonumber\\
    &= \left(- \eta \mu\alpha + \eta^2 (\mu^2\beta + \gamma)\right)\mathrm{KL}(\pi^{\mu,r}, \pi^t).
\end{align*}
Thus, we get:
\begin{align*}
    \mathrm{KL}(\pi^{\mu,r}, \pi^{t+1}) \leq (1 - (\eta\mu\alpha - \eta^2 (\mu^2\beta + \gamma))) \mathrm{KL}(\pi^{\mu,r}, \pi^t),
\end{align*}
and then, for $\eta \in (0,\min(\frac{\mu\alpha}{\mu^2\beta + \gamma},\zeta))$:
\begin{align*}
    &\mathrm{KL}(\pi^{\mu,r}, \pi^{t+1}) - \mathrm{KL}(\pi^{\mu,r}, \pi^t) \leq 0.
\end{align*}
Thus, if $\eta \in (0,\min(\frac{\mu\alpha}{\mu^2\beta + \gamma},\zeta))$, then $\mathrm{KL}(\pi^{\mu,r}, \pi^{t+1}) \leq \mathrm{KL}(\pi^{\mu,r}, \pi^t) \leq \mathrm{KL}(\pi^{\mu,r}, \pi^0)$ and $\pi^{t+1}\in \Omega^{\mu,r}$ also hold.
By mathematical induction, if $\eta \in (0,\min(\frac{\mu\alpha}{\mu^2\beta + \gamma},\zeta))$, for all $t\geq 0$:
\begin{align*}
    \mathrm{KL}(\pi^{\mu,r}, \pi^{t+1}) \leq (1 - (\eta\mu\alpha - \eta^2 (\mu^2\beta + \gamma))) \mathrm{KL}(\pi^{\mu,r}, \pi^t) \leq \cdots \leq (1 - (\eta\mu\alpha - \eta^2 (\mu^2\beta + \gamma)))^{t+1} \mathrm{KL}(\pi^{\mu,r}, \pi^0).
\end{align*}
\end{proof}

\subsection{Proof of Lemma \ref{lem:rmd_divergence}}
\begin{proof}[Proof of Lemma \ref{lem:rmd_divergence}]
First, we introduce the following lemma from \citet{abe2022mutationdriven}:
\begin{lemma}[Lemma 5.6 of \citet{abe2022mutationdriven}]
\label{lem:rmd_property}
Let $\pi_i^{\mu,r}\in \Delta(A_i)$ be a stationary point of (\ref{eq:rmd}) for $i\in \{1, 2\}$.
Then, for any $\pi_i' \in \Delta(A_i)$:
\begin{align*}
    &v_i^{\pi'_i, \pi_{-i}^{\mu,r}} = v_i^{\pi^{\mu,r}} + \mu - \mu\sum_{a_i\in A_i}r_i(a_i)\frac{\pi_i'(a_i)}{\pi_i^{\mu,r}(a_i)}.
\end{align*}
\end{lemma}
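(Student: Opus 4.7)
The plan is to prove Lemma~\ref{lem:rmd_property} directly from the stationarity condition of (\ref{eq:rmd}), exploiting the multilinear structure of the expected utility in a two-player game.

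First I would unpack the stationarity condition. Setting the right-hand side of (\ref{eq:rmd}) to zero at $\pi = \pi^{\mu,r}$ gives, for every $i \in \{1,2\}$ and $a \in A_i$,
\begin{align*}
\pi_i^{\mu,r}(a)\bigl(q_i^{\pi^{\mu,r}}(a) - v_i^{\pi^{\mu,r}}\bigr) + \mu\bigl(r_i(a) - \pi_i^{\mu,r}(a)\bigr) = 0.
\end{align*}
Before solving for $q_i^{\pi^{\mu,r}}(a)$, I need to justify that $\pi_i^{\mu,r}(a) > 0$ for every $a$, so the division is legitimate. This follows from a short boundary argument: if $\pi_i^{\mu,r}(a) = 0$ for some $a$, the stationarity equation at that coordinate would reduce to $\mu r_i(a) = 0$, contradicting the assumption $r_i \in \Delta^\circ(A_i)$. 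Hence $\pi_i^{\mu,r}$ lies in the interior of the simplex, and I can rearrange to obtain
\begin{align*}
q_i^{\pi^{\mu,r}}(a) = v_i^{\pi^{\mu,r}} + \mu - \mu\,\frac{r_i(a)}{\pi_i^{\mu,r}(a)}.
\end{align*}

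Next I would take the expectation of this identity under an arbitrary strategy $\pi_i' \in \Delta(A_i)$. The key observation is that $q_i^{\pi^{\mu,r}}(a) = \mathbb{E}_{a_{-i}\sim \pi_{-i}^{\mu,r}}[u_i(a,a_{-i})]$ depends only on the opponent's component $\pi_{-i}^{\mu,r}$, so
\begin{align*}
v_i^{\pi_i', \pi_{-i}^{\mu,r}} = \sum_{a\in A_i} \pi_i'(a)\, q_i^{\pi^{\mu,r}}(a).
\end{align*}
Substituting the expression for $q_i^{\pi^{\mu,r}}(a)$ and using $\sum_a \pi_i'(a) = 1$ on the first two terms yields
\begin{align*}
v_i^{\pi_i', \pi_{-i}^{\mu,r}} = v_i^{\pi^{\mu,r}} + \mu - \mu \sum_{a\in A_i} r_i(a)\,\frac{\pi_i'(a)}{\pi_i^{\mu,r}(a)},
\end{align*}
which is exactly the stated identity.

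There is no real obstacle here; the only subtle point is the interiority argument that ensures $\pi_i^{\mu,r}(a) > 0$ throughout, without which the ratio $r_i(a)/\pi_i^{\mu,r}(a)$ would be meaningless. Everything else is a direct manipulation of the stationarity equation combined with the bilinearity of the two-player utility.
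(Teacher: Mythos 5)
Your proof is correct and is the natural argument: the paper imports this lemma from \citet{abe2022mutationdriven} without reproving it, but the identical manipulation of the stationarity condition (solving for $q_i^{\pi^{\mu,r}}(a)$ and averaging against $\pi_i'$) is exactly what the paper does implicitly elsewhere, e.g.\ in the proofs of Lemma \ref{lem:kl_path_ub} and Lemma \ref{lem:kl_diff}, where the relation $\frac{r_i(a)}{\pi_i^{\mu,r}(a)} = 1 - \frac{1}{\mu}\bigl(q_i^{\pi^{\mu,r}}(a) - v_i^{\pi^{\mu,r}}\bigr)$ is derived the same way. Your interiority check ($\pi_i^{\mu,r}(a)=0$ would force $\mu r_i(a)=0$, contradicting $r_i\in\Delta^{\circ}(A_i)$) is a welcome addition that the paper defers to Lemma C.2 of the cited work.
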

From this lemma, we have:
\begin{align*}
    \sum_{i=1}^2v_i^{\pi_i^t, \pi_{-i}^{\mu,r}} + 2\mu - \mu\sum_{i=1}^2\sum_{a_i\in A_i}r_i(a_i)\frac{\pi_i^{\mu,r}(a_i)}{\pi_i^t(a_i)} &= \sum_{i=1}^2v_i^{\pi^{\mu,r}} + 4\mu - \mu\sum_{i=1}^2\sum_{a_i\in A_i}r_i(a_i)\left(\frac{\pi_i^t(a_i)}{\pi_i^{\mu,r}(a_i)} + \frac{\pi_i^{\mu,r}(a_i)}{\pi_i^t(a_i)}\right) \\
    &= 4\mu - \mu\sum_{i=1}^2\sum_{a_i\in A_i}r_i(a_i)\left(\frac{\pi_i^t(a_i)}{\pi_i^{\mu,r}(a_i)} + \frac{\pi_i^{\mu,r}(a_i)}{\pi_i^t(a_i)}\right) \\
    &= -\mu \sum_{i=1}^2\sum_{a_i\in A_i}r_i(a_i)\left(\sqrt{\frac{\pi_i^t(a_i)}{\pi_i^{\mu,r}(a_i)}} - \sqrt{\frac{\pi_i^{\mu,r}(a_i)}{\pi_i^t(a_i)}}\right)^2,
\end{align*}
where the second equality follows from $\sum_{i=1}^2 v_i^{\pi^{\mu,r}} = 0$ by the definition of zero-sum games.
\end{proof}

\subsection{Proof of Lemma \ref{lem:kl_path_ub}}
\begin{proof}[Proof of Lemma \ref{lem:kl_path_ub}]
Let us assume that $\eta\in  (0,\zeta)$, where $\alpha$, $\beta$, $\gamma$, and $\zeta$ are defined in Appendix \ref{sec:appendix_proof_kl_convergence_thm}.

First, we have:
\begin{align*}
    \mathrm{KL}(\pi^t, \pi^{t+1}) &= \sum_{i=1}^2 \sum_{a\in A_i}\pi_i^t(a)\ln \frac{\pi_i^t(a)}{\pi_i^{t+1}(a)} \\
    &= 2\sum_{i=1}^2 \sum_{a\in A_i}\frac{1}{2}\pi_i^t(a)\ln \frac{\pi_i^t(a)}{\pi_i^{t+1}(a)} \leq 2 \ln \left(\frac{1}{2}\sum_{i=1}^2 \sum_{a\in A_i}\pi_i^t(a)\frac{\pi_i^t(a)}{\pi_i^{t+1}(a)}\right),
\end{align*}
where the inequality follows from the concavity of the $\ln(\cdot)$ function and Jensen's inequality for concave functions.
Here, from the update rule (\ref{eq:m2wu}):
\begin{align*}
    \frac{\pi_i^t(a)}{\pi_i^{t+1}(a)} = \frac{\sum_{a'\in A_i}\pi_i^t(a')\exp\left(\eta\left(q_i^{\pi^t}(a') + \mu \frac{r_i(a')}{\pi_i^t(a')}\right)\right)}{\exp\left(\eta\left(q_i^{\pi^t}(a) + \mu \frac{r_i(a)}{\pi_i^t(a)}\right)\right)},
\end{align*}
and then we get:
\begin{align}
\label{eq:kl_diff_upper}
    \mathrm{KL}(\pi^t, \pi^{t+1}) &\leq 2\ln \left(\frac{1}{2}\sum_{i=1}^2 \sum_{a\in A_i}\pi_i^t(a)\frac{\sum_{a'\in A_i}\pi_i^t(a')\exp\left(\eta\left(q_i^{\pi^t}(a') + \mu \frac{r_i(a')}{\pi_i^t(a')}\right)\right)}{\exp\left(\eta\left(q_i^{\pi^t}(a) + \mu \frac{r_i(a)}{\pi_i^t(a)}\right)\right)}\right) \nonumber \\
    &= 2\ln \left(\frac{1}{2}\sum_{i=1}^2 \sum_{a\in A_i}\sum_{a'\in A_i}\pi_i^t(a)\pi_i^t(a')\exp\left(\eta\left(q_i^{\pi^t}(a') + \mu \frac{r_i(a')}{\pi_i^t(a')} - q_i^{\pi^t}(a) - \mu \frac{r_i(a)}{\pi_i^t(a)}\right)\right)\right).
\end{align}

Furthermore, from the assumption for the learning rate $\eta < \frac{1}{2u_{\max} + \frac{\mu}{\rho}\max_{i\in \{1,2\}, a\in A_i}r_i(a)} \leq \zeta$, we have $\eta\left(q_i^{\pi^t}(a') + \mu \frac{r_i(a')}{\pi_i^t(a')} - q_i^{\pi^t}(a) - \mu \frac{r_i(a)}{\pi_i^t(a)}\right) \leq 1$.
Thus, we can use the fact that $\exp(x) \leq 1 + x + x^2$ for $x\leq 1$, and then:
\begin{align}
\label{eq:exp_upper}
    \frac{1}{2}\sum_{i=1}^2&\sum_{a\in A_i}\sum_{a'\in A_i}\pi_i^t(a)\pi_i^t(a')\exp\left(\eta\left(q_i^{\pi^t}(a') + \mu \frac{r_i(a')}{\pi_i^t(a')} - q_i^{\pi^t}(a) - \mu \frac{r_i(a)}{\pi_i^t(a)}\right)\right) \nonumber\\
    \leq& \frac{1}{2}\sum_{i=1}^2\sum_{a\in A_i}\sum_{a'\in A_i}\pi_i^t(a)\pi_i^t(a')\left(1 + \eta\left(q_i^{\pi^t}(a') + \mu \frac{r_i(a')}{\pi_i^t(a')} - q_i^{\pi^t}(a) - \mu \frac{r_i(a)}{\pi_i^t(a)}\right)\right) \nonumber\\ 
    &+ \frac{1}{2}\sum_{i=1}^2\sum_{a\in A_i}\sum_{a'\in A_i}\pi_i^t(a)\pi_i^t(a')\left(\eta^2\left(q_i^{\pi^t}(a') + \mu \frac{r_i(a')}{\pi_i^t(a')} - q_i^{\pi^t}(a) - \mu \frac{r_i(a)}{\pi_i^t(a)}\right)^2\right) \nonumber\\
    =& 1 + \frac{\eta^2}{2}\sum_{i=1}^2\sum_{a\in A_i}\sum_{a'\in A_i}\pi_i^t(a)\pi_i^t(a')\left(q_i^{\pi^t}(a') + \mu \frac{r_i(a')}{\pi_i^t(a')} - q_i^{\pi^t}(a) - \mu \frac{r_i(a)}{\pi_i^t(a)}\right)^2 \nonumber\\
    \leq& \exp\left(\frac{\eta^2}{2}\sum_{i=1}^2\sum_{a\in A_i}\sum_{a'\in A_i}\pi_i^t(a)\pi_i^t(a')\left(q_i^{\pi^t}(a') + \mu \frac{r_i(a')}{\pi_i^t(a')} - q_i^{\pi^t}(a) - \mu \frac{r_i(a)}{\pi_i^t(a)}\right)^2\right),
\end{align}
where the first equality follows from $\sum_{a\in A_i}\sum_{a'\in A_i}\pi_i^t(a)\pi_i^t(a')\left(\eta\left(q_i^{\pi^t}(a') + \mu \frac{r_i(a')}{\pi_i^t(a')} - q_i^{\pi^t}(a) - \mu \frac{r_i(a)}{\pi_i^t(a)}\right)\right) = 0$, and the last inequality follows from $1 + x \leq \exp(x)$ for $x\in \mathbb{R}$.
By combining (\ref{eq:kl_diff_upper}) and (\ref{eq:exp_upper}), we get:
\begin{align}
    \label{eq:kl_diff_upper_by_q_diff}
    \mathrm{KL}(\pi^t, \pi^{t+1}) &\leq 2\ln \left(\exp\left(\frac{\eta^2}{2}\sum_{i=1}^2\sum_{a\in A_i}\sum_{a'\in A_i}\pi_i^t(a)\pi_i^t(a')\left(q_i^{\pi^t}(a') + \mu \frac{r_i(a')}{\pi_i^t(a')} - q_i^{\pi^t}(a) - \mu \frac{r_i(a)}{\pi_i^t(a)}\right)^2\right)\right) \nonumber\\
    &= \eta^2\sum_{i=1}^2\sum_{a\in A_i}\sum_{a'\in A_i}\pi_i^t(a)\pi_i^t(a')\left(q_i^{\pi^t}(a') + \mu \frac{r_i(a')}{\pi_i^t(a')} - q_i^{\pi^t}(a) - \mu \frac{r_i(a)}{\pi_i^t(a)}\right)^2.
\end{align}

Here, by using the ordinary differential equation (\ref{eq:rmd}), we have for all $i\in \{1,2\}$ and $a\in A$:
\begin{align*}
    q_i^{\pi^{\mu,r}}(a) = v_i^{\pi^{\mu,r}} - \frac{\mu}{\pi_i^{\mu,r}(a)}\left( r_i(a) - \pi_i^{\mu,r}(a)\right).
\end{align*}
Thus,
\begin{align*}
    &q_i^{\pi^t}(a') + \mu \frac{r_i(a')}{\pi_i^t(a')} - q_i^{\pi^t}(a) - \mu \frac{r_i(a)}{\pi_i^t(a)} \\
    &= q_i^{\pi^{\mu,r}}(a') + \mu \frac{r_i(a')}{\pi_i^t(a')} - q_i^{\pi^{\mu,r}}(a) - \mu \frac{r_i(a)}{\pi_i^t(a)} + q_i^{\pi^t}(a') - q_i^{\pi^{\mu,r}}(a') - q_i^{\pi^t}(a) + q_i^{\pi^{\mu,r}}(a) \\
    &= \mu\left(\frac{r_i(a)}{\pi_i^{\mu,r}(a)} - \frac{r_i(a)}{\pi_i^t(a)} - \frac{r_i(a')}{\pi_i^{\mu,r}(a')} + \frac{r_i(a')}{\pi_i^t(a')} \right) + q_i^{\pi^t}(a') - q_i^{\pi^{\mu,r}}(a') - q_i^{\pi^t}(a) + q_i^{\pi^{\mu,r}}(a).
\end{align*}
Then,
\begin{align}
\label{eq:expected_squared_diff}
    \sum_{i=1}^2 &\sum_{a\in A_i}\sum_{a'\in A_i}\pi_i^t(a)\pi_i^t(a') \left(q_i^{\pi^t}(a') + \mu \frac{r_i(a')}{\pi_i^t(a')} - q_i^{\pi^t}(a) - \mu \frac{r_i(a)}{\pi_i^t(a)}\right)^2 \nonumber\\
    =& \sum_{i=1}^2 \sum_{a\in A_i}\sum_{a'\in A_i}\pi_i^t(a)\pi_i^t(a')\left(\mu\left(\frac{r_i(a)}{\pi_i^{\mu,r}(a)} - \frac{r_i(a)}{\pi_i^t(a)} - \frac{r_i(a')}{\pi_i^{\mu,r}(a')} + \frac{r_i(a')}{\pi_i^t(a')}\right) + q_i^{\pi^t}(a') - q_i^{\pi^{\mu,r}}(a') - q_i^{\pi^t}(a) + q_i^{\pi^{\mu,r}}(a) \right)^2 \nonumber\\
    \leq& 2\mu^2\sum_{i=1}^2 \sum_{a\in A_i}\sum_{a'\in A_i}\pi_i^t(a)\pi_i^t(a')\left(\frac{r_i(a)}{\pi_i^{\mu,r}(a)} - \frac{r_i(a)}{\pi_i^t(a)} - \frac{r_i(a')}{\pi_i^{\mu,r}(a')} + \frac{r_i(a')}{\pi_i^t(a')} \right)^2 \nonumber\\
    &+ 2\sum_{i=1}^2 \sum_{a\in A_i}\sum_{a'\in A_i}\pi_i^t(a)\pi_i^t(a')\left(q_i^{\pi^t}(a') - q_i^{\pi^{\mu,r}}(a') - q_i^{\pi^t}(a) + q_i^{\pi^{\mu,r}}(a)\right)^2 \nonumber\\
    \leq& 4\mu^2\sum_{i=1}^2 \sum_{a\in A_i}\sum_{a'\in A_i}\pi_i^t(a)\pi_i^t(a')\left(\left(\frac{r_i(a)}{\pi_i^{\mu,r}(a)} - \frac{r_i(a)}{\pi_i^t(a)}\right)^2 + \left(\frac{r_i(a')}{\pi_i^{\mu,r}(a')} - \frac{r_i(a')}{\pi_i^t(a')}\right)^2 \right) \nonumber\\
    &+ 4\sum_{i=1}^2 \sum_{a\in A_i}\sum_{a'\in A_i}\pi_i^t(a)\pi_i^t(a')\left(\left(q_i^{\pi^t}(a') - q_i^{\pi^{\mu,r}}(a')\right)^2 + \left(q_i^{\pi^t}(a) - q_i^{\pi^{\mu,r}}(a)\right)^2\right) \nonumber\\
    =& 8\mu^2\sum_{i=1}^2 \sum_{a\in A_i}\pi_i^t(a)r_i(a)^2\left(\frac{1}{\pi_i^{\mu,r}(a)} - \frac{1}{\pi_i^t(a)}\right)^2 + 8\sum_{i=1}^2 \sum_{a\in A_i}\pi_i^t(a)\left(\sum_{b\in A_{-i}}\left(\pi_{-i}^t(b) - \pi_{-i}^{\mu,r}(b)\right)u_i(a',b)\right)^2 \nonumber\\
    \leq& 8\mu^2\sum_{i=1}^2 \sum_{a\in A_i}\pi_i^t(a)r_i(a)^2\left(\frac{1}{\pi_i^{\mu,r}(a)} - \frac{1}{\pi_i^t(a)}\right)^2 + 8\sum_{i=1}^2 \sum_{a\in A_i}\pi_i^t(a)u_{\max}^2\|\pi_{-i}^t - \pi_{-i}^{\mu,r}\|_1^2 \nonumber\\
    \leq& 8\mu^2\sum_{i=1}^2 \sum_{a\in A_i}r_i(a)^2\left(\frac{1}{\pi_i^{\mu,r}(a)} - \frac{1}{\pi_i^t(a)}\right)^2 + 8\sum_{i=1}^2 u_{\max}^2\|\pi_{-i}^t - \pi_{-i}^{\mu,r}\|_1^2 \nonumber\\
    =& 8\mu^2\sum_{i=1}^2 \sum_{a\in A_i}r_i(a)^2\left(\frac{1}{\pi_i^{\mu,r}(a)} - \frac{1}{\pi_i^t(a)}\right)^2 + 8\sum_{i=1}^2 u_{\max}^2\|\pi_i^t - \pi_i^{\mu,r}\|_1^2,
\end{align}
where the first and second inequalities follow from $(a+b)^2\leq 2(a^2 + b^2)$ for $a,b\in \mathbb{R}$, and the third inequality follows from H\"{o}lder's inequality.

By combining (\ref{eq:kl_diff_upper_by_q_diff}) and (\ref{eq:expected_squared_diff}), if $\eta\in  (0,\zeta)$, for all $t\geq 0$:
\begin{align*}
    \mathrm{KL}(\pi^t, \pi^{t+1}) \leq 8 \eta^2 \left(\mu^2\sum_{i=1}^2 \sum_{a\in A_i}r_i(a)^2\left(\frac{1}{\pi_i^{\mu,r}(a)} - \frac{1}{\pi_i^t(a)}\right)^2 + u_{\max}^2\sum_{i=1}^2 \|\pi_i^t - \pi_i^{\mu,r}\|_1^2\right).
\end{align*}
\end{proof}

\subsection{Proof of Lemma \ref{lem:bregman_div}}
\begin{proof}[Proof of Lemma \ref{lem:bregman_div}]
We introduce the following lemma:
\begin{lemma}
\label{lem:convex_conjugate}
For any $\pi\in \prod_{i=1}^2\Delta(A_i)$, $\pi^t$ updated by M2WU satisfies that:
\begin{align*}
    \mathrm{KL}(\pi, \pi^t) = \sum_{i=1}^2\left(\left\langle \sum_{s=1}^{t-1}\eta_s q_i^{\mu,s}, \pi_i^t\right\rangle - \psi_i(\pi_i^t) -\left\langle \sum_{s=1}^{t-1}\eta_s q_i^{\mu,s}, \pi_i\right\rangle + \psi_i(\pi_i)\right),
\end{align*}
where $\psi_i(p) = \sum_{a\in A_i}p(a)\ln p(a)$.
\end{lemma}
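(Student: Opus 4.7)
The plan is to unroll the multiplicative update of M2WU by induction and then exploit the standard Fenchel duality between the negative entropy $\psi_i$ and the log-sum-exp function. Since each update in \eqref{eq:m2wu} is a Gibbs reweighting of the previous strategy by a factor $\exp(\eta_t q_i^{\mu,t}(a))$, a straightforward induction on $t$ yields the closed form
\begin{align*}
\pi_i^t(a) \;=\; \frac{\pi_i^0(a)\exp\!\bigl(\sum_{s}\eta_s q_i^{\mu,s}(a)\bigr)}{\sum_{a'\in A_i}\pi_i^0(a')\exp\!\bigl(\sum_{s}\eta_s q_i^{\mu,s}(a')\bigr)},
\end{align*}
so that taking logarithms gives $\ln \pi_i^t(a) = \ln \pi_i^0(a) + \sum_s \eta_s q_i^{\mu,s}(a) - \ln Z_i^t$, where $Z_i^t$ denotes the normalizing constant (independent of $a$). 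Under the standard convention that the initial strategy is uniform, the term $\ln \pi_i^0(a)$ is also independent of $a$, so that $\ln \pi_i^t(a)$ equals $\sum_s \eta_s q_i^{\mu,s}(a)$ up to an $a$-free additive constant $C_i^t$.

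Next, I would take the inner product of this identity with an arbitrary $\pi_i \in \Delta(A_i)$. Because $\pi_i$ is a probability distribution, $\sum_a \pi_i(a) C_i^t = C_i^t$, so
\begin{align*}
\sum_{a\in A_i}\pi_i(a)\ln\pi_i^t(a) \;=\; \Big\langle \sum_{s}\eta_s q_i^{\mu,s}, \pi_i\Big\rangle + C_i^t.
\end{align*}
The same identity with $\pi_i$ replaced by $\pi_i^t$ yields $\psi_i(\pi_i^t) = \langle \sum_s \eta_s q_i^{\mu,s}, \pi_i^t\rangle + C_i^t$. Subtracting eliminates the opaque constant $C_i^t$ and gives
\begin{align*}
\sum_{a\in A_i}\pi_i(a)\ln\pi_i^t(a) - \psi_i(\pi_i^t) \;=\; \Big\langle \sum_{s}\eta_s q_i^{\mu,s}, \pi_i - \pi_i^t\Big\rangle.
\end{align*}

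Finally, substituting this into the definition $\mathrm{KL}(\pi_i,\pi_i^t) = \psi_i(\pi_i) - \sum_a \pi_i(a)\ln\pi_i^t(a)$ and summing over $i\in\{1,2\}$ rearranges exactly into the claimed identity. The only subtle point is the handling of the initial term $\ln \pi_i^0$: one must either assume $\pi_i^0$ uniform (so that it contributes only an $a$-free constant that cancels in the subtraction above), or, more generally, view the identity through the Bregman-divergence lens where $\pi_i^0$ plays the role of the reference measure for the KL regularizer. I do not expect any genuine obstacle; the whole statement is an algebraic reformulation of the Legendre relationship between entropy and log-partition, and the only bookkeeping to be careful about is that additive constants independent of the action cancel whenever they are paired against a probability distribution.
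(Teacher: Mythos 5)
Your proposal is correct and follows essentially the same route as the paper: unroll the multiplicative update into its exponential-weights closed form, take logarithms, observe that the action-independent normalization constant (and, in your treatment, the $\ln \pi_i^0(a)$ term) cancels when paired against probability vectors, and substitute back into the definition of $\mathrm{KL}(\pi_i, \pi_i^t)$. The paper simply writes the closed form without the initial-strategy factor, implicitly relying on the same uniform-initialization/constant-cancellation bookkeeping that you make explicit, so your extra care there is consistent with, rather than divergent from, its argument.
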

From Lemma \ref{lem:convex_conjugate}, we have:
\begin{align*}
    \mathrm{KL}&(\pi^{\mu,r}, \pi^{t+1}) - \mathrm{KL}(\pi^{\mu,r}, \pi^t) \\
    =& \sum_{i=1}^2\left(\left\langle \sum_{s=1}^t\eta_s q_i^{\mu,s}, \pi_i^{t+1}\right\rangle - \psi_i(\pi_i^{t+1}) - \left\langle \sum_{s=1}^t\eta_s q_i^{\mu,s}, \pi_i^{\mu,r}\right\rangle + \psi_i(\pi_i^{\mu,r})\right) \\
    &-\sum_{i=1}^2\left(\left\langle \sum_{s=1}^{t-1}\eta_s q_i^{\mu,s}, \pi_i^t\right\rangle - \psi_i(\pi_i^t) - \left\langle \sum_{s=1}^{t-1}\eta_s q_i^{\mu,s}, \pi_i^{\mu,r}\right\rangle + \psi_i(\pi_i^{\mu,r})\right) \\
    =& \sum_{i=1}^2\left(\left\langle \sum_{s=1}^t\eta_s q_i^{\mu,s}, \pi_i^{t+1}\right\rangle - \psi_i(\pi_i^{t+1}) - \left\langle \sum_{s=1}^t\eta_s q_i^{\mu,s}, \pi_i^t\right\rangle + \psi_i(\pi_i^t)\right) - \eta_t\sum_{i=1}^2\langle  q_i^{\mu, t}, \pi_i^{\mu,r} - \pi_i^t \rangle \\
    =& \mathrm{KL}(\pi^t, \pi^{t+1}) - \eta_t \sum_{i=1}^2\langle q_i^{\mu, t}, \pi_i^{\mu,r} - \pi_i^t \rangle \\
    =& \mathrm{KL}(\pi^t, \pi^{t+1}) + \eta_t \sum_{i=1}^2\sum_{a\in A_i}\left(q_i^{\pi^t}(a) + \frac{\mu}{\pi_i^t(a)}\left(r_i(a) - \pi_i^t(a)\right)\right)\left(\pi_i^t(a) - \pi_i^{\mu,r}(a)\right) \\
    =& \mathrm{KL}(\pi^t, \pi^{t+1}) + \eta_t \sum_{i=1}^2\sum_{a\in A_i}\left(\pi_i^t(a) - \pi_i^{\mu,r}(a)\right)\left(q_i^{\pi^t}(a) + \mu\frac{r_i(a)}{\pi_i^t(a)}\right) \\
    =& \mathrm{KL}(\pi^t, \pi^{t+1}) + \eta_t \sum_{i=1}^2\left(v_i^{\pi^t} - v_i^{\pi_i^{\mu,r}, \pi_{-i}^t} + \mu - \mu\sum_{a\in A_i}r_i(a)\frac{\pi_i^{\mu,r}(a)}{\pi_i^t(a)}\right) \\
    =& \mathrm{KL}(\pi^t, \pi^{t+1}) + \eta_t \sum_{i=1}^2\left(- v_i^{\pi_i^{\mu,r}, \pi_{-i}^t} + \mu - \mu\sum_{a\in A_i}r_i(a)\frac{\pi_i^{\mu,r}(a)}{\pi_i^t(a)}\right) \\
    =& \mathrm{KL}(\pi^t, \pi^{t+1}) + \eta_t \sum_{i=1}^2\left(v_i^{\pi_i^t, \pi_{-i}^{\mu,r}} + \mu - \mu\sum_{a\in A_i}r_i(a)\frac{\pi_i^{\mu,r}(a)}{\pi_i^t(a)}\right),
\end{align*}
where the seventh equality follows from $\sum_{i=1}^2v_i^{\pi^t}=0$, and the last equality follows from $-v_1^{\pi_1^{\mu,r},\pi_2^t}=v_2^{\pi_1^{\mu,r},\pi_2^t}$ and $-v_2^{\pi_1^t,\pi_2^{\mu,r}}=v_1^{\pi_1^t,\pi_2^{\mu,r}}$ by the definition of two-player zero-sum games.
\end{proof}

\subsection{Proof of Lemma \ref{lem:convex_conjugate}}
\begin{proof}[Proof of Lemma \ref{lem:convex_conjugate}]
From the definition of the Kullback-Leibler divergence, we have:
\begin{align}
    \label{eq:kl_div}
    \mathrm{KL}(\pi, \pi^t) &= \sum_{i=1}^2\mathrm{KL}(\pi_i, \pi_i^t) = \sum_{i=1}^2 \sum_{a\in A_i}\pi_i(a)\ln \frac{\pi_i(a)}{\pi_i^t(a)} \nonumber\\
    &= \sum_{i=1}^2 \left(\sum_{a\in A_i} \left(\pi_i^t(a) - \pi_i(a)\right)\ln \pi_i^t(a) -\sum_{a\in A_i} \pi_i^t(a)\ln \pi_i^t(a) + \sum_{a\in A_i} \pi_i(a)\ln \pi_i(a)\right)\nonumber\\
    &= \sum_{i=1}^2 \left(\sum_{a\in A_i} \left(\pi_i^t(a) - \pi_i(a)\right)\ln \pi_i^t(a) -\psi_i(\pi_i^t) + \psi_i(\pi_i)\right).
\end{align}
Here, the update rule (\ref{eq:m2wu}) is equivalent to:
\begin{align*}
    \pi_i^t(a) = \frac{\exp\left(\sum_{s=1}^{t-1}\eta_s q_i^{\mu,s}(a)\right)}{\sum_{a'\in A_i}\exp\left(\sum_{s=1}^{t-1}\eta_s q_i^{\mu,s}(a')\right)},
\end{align*}
and then we have:
\begin{align}
    \label{eq:relative_entropy}
    \sum_{a\in A_i} \left(\pi_i^t(a) - \pi_i(a)\right)\ln \pi_i^t(a) &= \sum_{a\in A_i} \left(\pi_i^t(a) - \pi_i(a)\right)\left(\sum_{s=1}^{t-1}\eta_s q_i^{\mu,s}(a) - \ln \left(\sum_{a'\in A_i}\exp\left(\sum_{s=1}^{t-1}\eta_s q_i^{\mu,s}(a')\right)\right)\right) \nonumber\\
    &= \sum_{a\in A_i} \left(\pi_i^t(a) - \pi_i(a)\right)\sum_{s=1}^{t-1}\eta_s q_i^{\mu,s}(a) \nonumber\\
    &= \left\langle \sum_{s=1}^{t-1}\eta_s q_i^{\mu,s}, \pi_i^t\right\rangle - \left\langle  \sum_{s=1}^{t-1}\eta_s q_i^{\mu,s}, \pi_i\right\rangle.
\end{align}
By combining (\ref{eq:kl_div}) and (\ref{eq:relative_entropy}), we get:
\begin{align*}
    \mathrm{KL}(\pi, \pi^t) = \sum_{i=1}^2\left(\left\langle \sum_{s=1}^{t-1}\eta_s q_i^{\mu,s}, \pi_i^t\right\rangle - \psi_i(\pi_i^t) -\left\langle \sum_{s=1}^{t-1}\eta_s q_i^{\mu,s}, \pi_i\right\rangle + \psi_i(\pi_i)\right).
\end{align*}
\end{proof}

\section{PROOF OF COROLLARY \ref{cor:exploitability_convergence}}
\label{sec:appendix_proof_exploitability_convergence}

\begin{proof}[Proof of Corollary \ref{cor:exploitability_convergence}]
From the definition of exploitability, we have:
\begin{align}
\label{eq:exploitability_bound_pi_t}
    \mathrm{explt}(\pi^t) &= \sum_{i=1}^2 \max_{\tilde{\pi}_i\in \Delta(A_i)} v_i^{\tilde{\pi}_i, \pi_{-i}^t} \nonumber\\
    &= \sum_{i=1}^2 \left(\max_{\tilde{\pi}_i\in \Delta(A_i)} v_i^{\tilde{\pi}_i, \pi_{-i}^{\mu,r}} + \max_{\tilde{\pi}_i\in \Delta(A_i)} v_i^{\tilde{\pi}_i, \pi_{-i}^t}- \max_{\tilde{\pi}_i\in \Delta(A_i)} v_i^{\tilde{\pi}_i, \pi_{-i}^{\mu,r}}\right) \nonumber\\
    &= \mathrm{explt}(\pi^{\mu,r}) + \sum_{i=1}^2 \left(\max_{\tilde{\pi}_i\in \Delta(A_i)} v_i^{\tilde{\pi}_i, \pi_{-i}^t}- \max_{\tilde{\pi}_i\in \Delta(A_i)} v_i^{\tilde{\pi}_i, \pi_{-i}^{\mu,r}}\right) \nonumber\\
    &\leq \mathrm{explt}(\pi^{\mu,r}) + \sum_{i=1}^2 \left(\max_{\tilde{\pi}_i\in \Delta(A_i)} \left(v_i^{\tilde{\pi}_i, \pi_{-i}^t}- v_i^{\tilde{\pi}_i, \pi_{-i}^{\mu,r}}\right)\right) \nonumber\\
    &\leq \mathrm{explt}(\pi^{\mu,r}) + \sum_{i=1}^2 \left(\|\pi_i^{\mu,r} - \pi_i^t\|_1\max_{\tilde{\pi}_{-i}\in \Delta(A_{-i})} \|q_i^{\pi_i^t, \tilde{\pi}_{-i}}\|_{\infty}\right) \nonumber\\
    &\leq \mathrm{explt}(\pi^{\mu,r}) + \sum_{i=1}^2 \left( u_{\max}\sqrt{2\mathrm{KL}(\pi_i^{\mu,r}, \pi_i^t)}\right) \nonumber\\
    &\leq \mathrm{explt}(\pi^{\mu,r}) + u_{\max}\sqrt{2}\sqrt{2\sum_{i=1}^2\mathrm{KL}(\pi_i^{\mu,r}, \pi_i^t)} \nonumber\\
    &= \mathrm{explt}(\pi^{\mu,r}) + 2u_{\max}\sqrt{\mathrm{KL}(\pi^{\mu,r}, \pi^t)},
\end{align}
where the second inequality follows from H\"{o}lder's inequality, the third inequality follows from Pinsker's inequality \citep{Tsybakov:1315296}, and the fourth inequality follows from $\sqrt{a} + \sqrt{b}\leq \sqrt{2(a+b)}$ for $a,b>0$.
By combining (\ref{eq:exploitability_bound_pi_t}) and Theorem \ref{thm:kl_convergence}, we have:
\begin{align}
\label{eq:exploitability_bound_pi_t_mu}
    &\mathrm{explt}(\pi^t)  \leq \mathrm{explt}(\pi^{\mu,r}) + 2u_{\max}\sqrt{\mathrm{KL}(\pi^{\mu,r}, \pi^0)}(1 - C_2)^{\frac{t}{2}}.
\end{align}

Moreover, from Lemma 3.5 of \citet{Bauer:2019}, a stationary point $\pi^{\mu,r}$ of (\ref{eq:rmd}) satisfies that for all $i\in \{1, 2\}$ and $a_i\in A_i$, $q_i^{\pi^{\mu,r}}(a_i) - v_i^{\pi^{\mu,r}} \leq \mu$.
Therefore, the term of $\mathrm{exploit}(\pi^{\mu,r})$ can be bounded as:
\begin{align}
\label{eq:exploitability_bound_pi_mu}
    \mathrm{explt}(\pi^{\mu,r}) &= \sum_{i=1}^2 \max_{\tilde{\pi}_i\in \Delta(A_i)} v_i^{\tilde{\pi}_i, \pi_{-i}^{\mu,r}} \nonumber\\
    &= \sum_{i=1}^2 \left(\max_{\tilde{\pi}_i\in \Delta(A_i)} v_i^{\tilde{\pi}_i, \pi_{-i}^{\mu,r}} - v_i^{\pi^{\mu,r}}\right) \nonumber\\
    &= \sum_{i=1}^2 \left(\max_{a_i\in A_i} q_i^{\pi^{\mu,r}}(a_i) - v_i^{\pi^{\mu,r}}\right) \leq 2\mu,
\end{align}
where the first equality follows from $\sum_{i=1}^2v_i^{\pi^{\mu,r}}=0$ by the definition of zero-sum games.
By combining (\ref{eq:exploitability_bound_pi_t_mu}) and (\ref{eq:exploitability_bound_pi_mu}), we have:
\begin{align*}
    &\mathrm{explt}(\pi^t)  \leq 2\mu  + 2u_{\max}\sqrt{\mathrm{KL}(\pi^{\mu,r}, \pi^t)},
\end{align*}
This concludes the statement.
\end{proof}

\section{PROOFS FOR THEOREM~\ref{thm:conv_semibandit}}
\label{sec:appendix_proof_conv_semibandit}
\subsection{Proof of Theorem~\ref{thm:conv_semibandit}}
In preparation for the proof, we first define the notion of {\it approximate Robbins-Monro algorithms} \citep{robbins1951stochastic,benaim1999dynamics}. 
\begin{definition}\label{def:aprx_rm_alg}
The stochastic approximation algorithm 
$$
z(t + 1) = z(t) + \eta_t (F(z(t))  + U_t + \beta_t)
$$ 
is refer to as an approximate Robbins-Monro algorithm if the following conditions are satisfied.
\begin{itemize}
    \item $F: \mathbb{R}^m \to \mathbb{R}^m$ is a continuous function
    \item $ (U_t)_{t \ge 1}  \; \text{s.t.} \; U_n \in \mathbb{R}^m, \forall n \in \mathbb{N}$ is a martingale difference noise
    \item $(\eta_t)_{t \ge 1}$ is a given sequence of numbers such that $\sum_{t=1}^\infty \eta_t = \infty$ and $\lim_{t \to \infty } \eta_t = 0$
    \item $\lim_{t \to \infty} \beta_t =0$ almost surely
\end{itemize}
\end{definition}

We provide the definition of the asymptotic pseudo-trajectory. 

\begin{definition}[\citep{benaim1996asymptotic}]
A flow $\phi$ on a metric space $(M, d)$ is a continuous mapping
\begin{align}
 \phi : \mathbb{R} \times M \mapsto M,  \qquad (t, x) \mapsto \phi_t(x)
\end{align}
such that $ \phi_0(x) = x$ and $ \phi_{t + \alpha} = \phi_{t} \circ \phi_\alpha$ for all $ t, \alpha \in \mathbb{R}$. 
For a metric space $(M, d)$, a continuous function $X : \mathbb{R} \mapsto M$ is an asymptotic pseudo trajectory for $\phi$ if 
\begin{align}
    \lim_{t \to \infty} \sup_{s \in [0, T]} d(X(t + s), \phi_s(X(t))) = 0,
\end{align}
for every $T >0$.
\end{definition}
For each $i \in \{1,2\}$, let we define the logit function $g_i: \mathbb{R}^{|A_i|} \mapsto \Delta(A_i)$ as 
\begin{align}
    g_i(z_i) = \left(\frac{\exp\left(z_i(a_i)\right)}{\sum_{a_i'\in A_i}\exp\left(z_i(a_i')\right)}\right)_{a_i \in A_i}.
\end{align}
We write  $g_i(z_i)(a_i) \in \mathbb{R}$ be the $a_i$-th element of $g_i(z_i)$ and $ \nabla g_i(z_i)(a_i) \in \mathbb{R}^{|A_i|}$ be the gradient vector of $ g_i(z_i)(a_i)$, respectively.  
%  We use the notation for the aggregate vectors $ {q}_i^{\pi^t}  = ({q}_i^{\pi^t} (a_i))_{a_i \in A_i} \in \mathbb{R}^{|A_i|}$ and  $ \hat{q}_i^{\pi^t}  = (\hat{q}_i^{\pi^t} (a_i))_{a_i \in A_i} \in \mathbb{R}^{|A_i|}$.
As a first result, we prove that the dynamics of the strategy $\{\pi_i^t\}$ updated by M2WU is an asymptotic pseudo trajectory of a continuous dynamics. 
\begin{lemma}\label{lem:asymp_pseudo_trajec}
Suppose that the sequence $\{\eta_t\}_{t \ge 1}$ satisfy $ \eta_t \propto t^{-\lambda}$ for some $\lambda \in (1/\kappa, 1]$, where $\kappa$ is a constant defined in Assumption~\ref{asm:noise_semibandit} (ii).
Then, for each $i \in \{1, 2\}$, the sequence of strategies $\{\pi^t_i\}_{t \ge 1}$ updated by M2WU is an asymptotic pseudo trajectory for the replicator mutator dynamics:
\begin{align}
\begin{aligned}
    \frac{d}{dt}\pi_i^t(a_i) =& \pi_i^t(a_i)\left(q_i^{\pi^t}(a_i) -  v_i^{\pi^t}\right) + \mu\left(r_i(a_i)-\pi_i^t(a_i)\right).
\end{aligned}  \tag{RMD} \label{eq:rmd_app}
\end{align}
\end{lemma}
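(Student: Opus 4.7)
}
The plan is to cast the M2WU update as an approximate Robbins--Monro recursion in the sense of Definition~\ref{def:aprx_rm_alg}, identify its driving vector field with the RMD field, and then invoke a standard result of \citet{benaim1999dynamics} that approximate Robbins--Monro iterates are asymptotic pseudo-trajectories of the associated ODE. Writing $\pi_i^{t+1}(a_i) = g_i(z_i^t)(a_i)$ with $z_i^t = \sum_{s=0}^{t-1}\eta_s q_i^{\mu,s}$, the increment $\pi_i^{t+1} - \pi_i^t = g_i(z_i^t) - g_i(z_i^{t-1})$ can be expanded by Taylor's theorem around $z_i^{t-1}$, so that for each action $a_i$,
\begin{align*}
\pi_i^{t+1}(a_i) = \pi_i^t(a_i) + \eta_t\bigl\langle \nabla g_i(z_i^{t-1})(a_i),\, q_i^{\mu,t}\bigr\rangle + R_t(a_i),
\end{align*}
where the remainder $R_t(a_i)$ is controlled by a uniform bound on the Hessian of $g_i$ multiplied by $\eta_t^2 \|q_i^{\mu,t}\|_2^2$. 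A direct computation with the softmax derivative $\partial g_i(z)(a_i)/\partial z(a_i') = g_i(z)(a_i)(\mathbf{1}\{a_i=a_i'\} - g_i(z)(a_i'))$ rewrites the linear term, after substituting $q_i^{\mu,t}(a_i) = q_i^{\pi^t}(a_i) + \xi^t(a_i) + \frac{\mu}{\pi_i^t(a_i)}(r_i(a_i) - \pi_i^t(a_i))$, as
\begin{align*}
\pi_i^t(a_i)\bigl(q_i^{\pi^t}(a_i) - v_i^{\pi^t}\bigr) + \mu\bigl(r_i(a_i) - \pi_i^t(a_i)\bigr) + U_t(a_i) + \widetilde{\phi}_t(a_i),
\end{align*}
where $U_t(a_i)$ collects the noise contribution $\pi_i^t(a_i)(\xi^t(a_i) - \langle \xi^t,\pi_i^t\rangle)$ and $\widetilde{\phi}_t(a_i)$ accounts for the mismatch between differentiating at $z_i^{t-1}$ versus $z_i^t$, which is itself of order $\eta_t$.

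Next, I would verify the four requirements of Definition~\ref{def:aprx_rm_alg}. The vector field $F(\pi_i)(a_i) = \pi_i(a_i)(q_i^\pi(a_i) - v_i^\pi) + \mu(r_i(a_i)-\pi_i(a_i))$ is manifestly continuous, and is exactly the RMD field in \eqref{eq:rmd_app}. Assumption~\ref{asm:noise_semibandit}(i) makes $U_t$ a martingale difference sequence with respect to $\mathcal{F}_{t-1}$, and the prescribed step size $\eta_t \propto t^{-\lambda}$ with $\lambda \in (1/\kappa,1]$ satisfies $\sum_t \eta_t = \infty$ and $\eta_t \to 0$. The final and main technical obstacle is the error term $\hat{\phi}_t = \eta_t^{-1}R_t + \widetilde{\phi}_t$: I need $\hat{\phi}_t \to 0$ almost surely. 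The dominant contribution is $\eta_t \|q_i^{\mu,t}\|_2^2$, which is controlled by $\eta_t \|\xi^t\|_2^2$ plus bounded terms. Using Assumption~\ref{asm:noise_semibandit}(ii), Markov's inequality gives $\mathbb{P}[\|\xi^t\|_2^2 \geq t^p] \le C'/t^{p\kappa}$, and for any $p \in (0, 1/\kappa)$ chosen close enough to $1/\kappa$ one has $\sum_t 1/t^{p\kappa} < \infty$ whenever $p\kappa > 1$. Borel--Cantelli then implies $\|\xi^t\|_2^2 = o(t^p)$ almost surely, and since $\eta_t t^p = o(1)$ for $\lambda > 1/\kappa > p$, the bound $\eta_t \|\xi^t\|_2^2 \to 0$ almost surely follows. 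Establishing this precise bookkeeping between $\lambda$, $\kappa$, and $p$ is the place where I expect the most care will be needed.

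Finally, having shown that $\{\pi_i^t\}$ is an approximate Robbins--Monro scheme with field $F$ equal to the RMD field, I would apply Proposition~4.1 of \citet{benaim1999dynamics}, which states that any bounded approximate Robbins--Monro iterate (here, $\pi_i^t$ stays in the compact simplex $\Delta(A_i)$) is an asymptotic pseudo-trajectory of the flow induced by $F$. This flow is exactly the semiflow of \eqref{eq:rmd_app}, yielding the claim. The only remaining subtlety is ensuring the noise second moments are uniformly controlled to apply Benaïm's result, which follows from Assumption~\ref{asm:noise_semibandit}(ii) since it implies $\sup_t \mathbb{E}[\|U_t\|_2^2 \mid \mathcal{F}_{t-1}] < \infty$ almost surely. \qed
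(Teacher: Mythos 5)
Your proposal follows essentially the same route as the paper's proof: Taylor-expand the softmax map $g_i$ to write the M2WU iterate as an approximate Robbins--Monro recursion whose drift is the RMD field, identify the martingale-difference noise from Assumption~\ref{asm:noise_semibandit}(i), kill the second-order remainder of order $\eta_t\|\hat q_i^{\mu,t}\|_2^2$ by a Borel--Cantelli argument, and invoke \citet{benaim1999dynamics} (the paper uses Proposition~4.2; expanding at $z_i^t$ rather than $z_i^{t-1}$ lets it avoid your extra mismatch term $\widetilde\phi_t$, but that difference is cosmetic). One correction to your bookkeeping: the condition you wrote, $p\in(0,1/\kappa)$ together with $p\kappa>1$, is vacuous. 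Summability of $\sum_t t^{-p\kappa}$ forces $p>1/\kappa$, while $\eta_t t^p\to 0$ with $\eta_t\propto t^{-\lambda}$ forces $p<\lambda$; the correct choice is $p\in(1/\kappa,\lambda)$, which is nonempty exactly because $\lambda>1/\kappa$ — this is the whole point of the hypothesis on $\lambda$. (The paper's own proof contains the same sign slip, writing $p<1/\kappa$, so your instinct that this is where care is needed was right; just resolve it in the direction $p>1/\kappa$.) Also, when you say $\|q_i^{\mu,t}\|_2$ is ``bounded terms plus $\|\xi^t\|_2$,'' note that the mutation term $\mu r_i(a)/\pi_i^t(a)$ is bounded only because Theorem~\ref{thm:conv_semibandit} assumes $\pi_i^t(a)>D>0$; this hypothesis should be invoked explicitly, as the paper does.
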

We present the proof of Lemma~\ref{lem:asymp_pseudo_trajec} in Appendix~\ref{subsec:asymp_pseudo_trajec}.
Furthermore, we have the following exponential convergence result to the stationary point $\pi^{\mu,r}_i$ in the noiseless continuous time setting in \citet{abe2022mutationdriven}. 
\begin{theorem}[Theorem 5.2. of \citet{abe2022mutationdriven}]
\label{thm:bregman_div}
Let $\pi_i^{\mu, r}\in \Delta(A_i)$ be a stationary point of \eqref{eq:rmd} for all $i\in \{1, 2\}$.
Then, for all $\mu >0$, the continuous-time dynamics $\pi^t$ updated by \eqref{eq:rmd} satisfies the following:
\begin{align*}
    \frac{d}{dt}\mathrm{KL}(\pi^{\mu,r}, \pi^t)  &= - \mu\sum_{i=1}^2\sum_{a_i\in A_i}r_i(a_i)\left(\sqrt{\frac{\pi_i^t(a_i)}{\pi_i^{\mu, r}(a_i)}}-\sqrt{\frac{\pi_i^{\mu,r}(a_i)}{\pi_i^t(a_i)}}\right)^2.
\end{align*}
Furthermore, $\pi^t$ satisfies that:
\begin{align*}
    \frac{d}{dt}\mathrm{KL}(\pi^{\mu,r}, \pi^t) \leq -\mu \xi\mathrm{KL}(\pi^{\mu,r}, \pi^t),
\end{align*}
where $\xi=\min_{i\in \{1,2\}, a_i\in A_i}\frac{c_i(a_i)}{\pi_i^{\mu,r}(a_i)}$.
\end{theorem}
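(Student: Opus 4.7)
I would treat $\mathrm{KL}(\pi^{\mu,r},\pi^t)$ as the candidate Lyapunov function, differentiate along the flow, and collapse the resulting expression into a symmetric non-positive quadratic form. Concretely, start from $\mathrm{KL}(\pi^{\mu,r},\pi^t)=\sum_{i,a}\pi_i^{\mu,r}(a)\ln(\pi_i^{\mu,r}(a)/\pi_i^t(a))$, differentiate to get $-\sum_i\sum_a \pi_i^{\mu,r}(a)\dot{\pi}_i^t(a)/\pi_i^t(a)$, and substitute the right-hand side of \eqref{eq:rmd_app}. This cleanly splits into a replicator contribution $-\sum_i v_i^{\pi_i^{\mu,r},\pi_{-i}^t}+\sum_i v_i^{\pi^t}$ and a mutation contribution $-\mu\sum_i\sum_a r_i(a)\pi_i^{\mu,r}(a)/\pi_i^t(a)+2\mu$. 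The zero-sum identity $\sum_i v_i^{\pi^t}=0$ immediately kills the second piece of the replicator contribution.

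\textbf{Key simplification.} The next step is to rewrite $\sum_i v_i^{\pi_i^{\mu,r},\pi_{-i}^t}$ using Lemma~\ref{lem:rmd_property} with $i$ replaced by $-i$ and argument $\pi'_{-i}=\pi_{-i}^t$, combined with the zero-sum relation $v_i^{\pi_i^{\mu,r},\pi_{-i}^t}=-v_{-i}^{\pi_i^{\mu,r},\pi_{-i}^t}$. After relabeling and applying the zero-sum identity once more to eliminate $\sum_i v_i^{\pi^{\mu,r}}$, this reduces to $\sum_i v_i^{\pi_i^{\mu,r},\pi_{-i}^t}=-2\mu+\mu\sum_i\sum_a r_i(a)\pi_i^t(a)/\pi_i^{\mu,r}(a)$. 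Plugging back yields
$$\frac{d}{dt}\mathrm{KL}(\pi^{\mu,r},\pi^t)=4\mu-\mu\sum_i\sum_a r_i(a)\left(\frac{\pi_i^t(a)}{\pi_i^{\mu,r}(a)}+\frac{\pi_i^{\mu,r}(a)}{\pi_i^t(a)}\right),$$
and the elementary identity $x+1/x-2=(\sqrt{x}-1/\sqrt{x})^2$ together with $\sum_a r_i(a)=1$ rewrites the right-hand side as exactly the claimed squared-root expression, proving the first part.

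\textbf{Exponential rate and main obstacle.} For the second inequality, I would follow the chain already displayed in (\ref{eq:rmd_divergence_lb}) verbatim: factor out $\xi=\min_{i,a}r_i(a)/\pi_i^{\mu,r}(a)$, rewrite each term of the squared-root sum as $(\pi_i^t(a)-\pi_i^{\mu,r}(a))^2/(\pi_i^{\mu,r}(a)\pi_i^t(a))$, apply $x\geq\ln(1+x)$, and close with Jensen's inequality for the concave $\ln$ to recover $\mathrm{KL}(\pi^{\mu,r},\pi^t)$. The main obstacle is the simplification step: the cancellation of the replicator contribution against the constant $2\mu$ in the mutation contribution depends on applying Lemma~\ref{lem:rmd_property} with the two players' roles swapped and on the zero-sum identity being invoked twice. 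It is genuinely a two-player phenomenon; in a single-player or general-sum analogue the clean collapse into $(\sqrt{\cdot}-1/\sqrt{\cdot})^2$ does not occur. Once this collapse is in place, the Jensen chain for the rate is mechanical.
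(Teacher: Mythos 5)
Your proposal is correct and follows essentially the same route the paper relies on: the paper imports this theorem from \citet{abe2022mutationdriven}, but its own discrete-time counterparts are proved exactly as you describe — the collapse via Lemma~\ref{lem:rmd_property} with the players' roles swapped plus the zero-sum identities (proof of Lemma~\ref{lem:rmd_divergence}), and the rate bound via the $x\geq\ln(1+x)$/Jensen chain in \eqref{eq:rmd_divergence_lb}. The only cosmetic point is that the $c_i(a_i)$ appearing in the stated $\xi$ is the reference strategy $r_i(a_i)$ in this paper's notation, which is how you (correctly) read it.
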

It can be observed that the function $\mathrm{KL}(\pi^{\mu,r}, \cdot)$ is a strict Lyapunov function. 
Furthermore, the stationary point of \eqref{eq:rmd_app} is unique \citep{abe2022mutationdriven}. 
Therefore, from Corollary 6.6 of \cite{benaim1999dynamics}, $\pi^t_i$ updated by M2WU converges to $\pi_i^{\mu,r}$, almost surely. \qed

\subsection{Proof of Lemma~\ref{lem:asymp_pseudo_trajec}}\label{subsec:asymp_pseudo_trajec}
For each $i \in \{1,2\}$, for any $a_i, a_i', \tilde{a}_i \in A_i $,
\begin{align*}
    \frac{\partial}{\partial z_i(a_i')}( g_i(z_i)(a_i)) = g_i(z_i)(a_i) ( \indicator_{\{a_i = a_i'\}} - g_i(z_i)(a_i'))
\end{align*}
and
\begin{align}
     &\frac{\partial^2}{\partial z_i(\tilde{a}_i) \partial z_i(a_i')}( g_i(z_i)(a_i))  \nonumber
     \\
     &\ \ \ \ \ = g_i(z_i)(a_i) \left( \indicator_{\{a_i = a_i' = \tilde{a}_i\}} - g_i(z_i)(\tilde{a}_i) \indicator_{\{a_i = a_i'\}} - g_i(z_i)(a_i') \left(\indicator_{\{a_i = \tilde{a}_i\}} + \indicator_{\{a_i' = \tilde{a}_i\}} - 2 g_i(z_i)(\tilde{a}_i)\right)\right). \label{eq:compute_hessian}
\end{align}
We write
\begin{align*}
    \hat{q}_i^{\mu, t} & = \left(\hat{q}_i^{\pi^t}(a_i)  +  \frac{\mu}{\pi_i^t(a_i)}(r_i(a_i) - \pi_i^t(a_i)) \right)_{a_i \in A_i}
    \\
    {q}_i^{\mu, t} &= \left({q}_i^{\pi^t}(a_i)  +  \frac{\mu}{\pi_i^t(a_i)}(r_i(a_i) - \pi_i^t(a_i)) \right)_{a_i \in A_i}
    \\
    \hat{\phi}_t& =  \frac{\eta_t}{2}(\hat{q}_i^{\mu, t})^\top \Hess(g_i(\zeta)(a_i))\hat{q}_i^{\mu, t}.
\end{align*}
By Taylor's theorem, we get the following computations. 
\begin{align*}
    \pi^{t+1}_i(a_i) & = g_i(z_i^{t+1})(a_i)
    \\
    & = g_i(z_i^{t} + \eta_t \hat{q}_i^{\mu, t})(a_i)
    \\
    & = g_i(z_i^{t})(a_i)  + \eta_t \left( (\nabla g_i(z_i^t)(a_i))^\top\hat{q}_i^{\mu, t} + \frac{\eta_t}{2}(\hat{q}_i^{\mu, t})^\top \Hess(g_i(\zeta)(a_i))\hat{q}_i^{\mu, t} \right)
    \\
    &  = g_i(z_i^{t})(a_i)  + \eta_t \left( (\nabla g_i(z_i^t)(a_i))^\top({q}_i^{\mu, t} + \hat{q}_i^{\mu, t} - {q}_i^{\mu, t}) + \hat{\phi}_t \right),
\end{align*}
where $\zeta$ is a point between $z_i^t$ and $z_i^{t+1}$. 
\begin{align*}
     (\nabla g_i(z_i^t)(a_i))^\top {q}_i^{\mu, t} &  = \sum_{a_i' \in A_i}g_i(z_i^t)(a_i) ( \indicator_{\{a_i = a_i'\}} - g_i(z_i^t)(a_i')) \left({q}_i^{\pi^t}(a_i')  +  \frac{\mu}{\pi_i^t(a_i')}(r_i(a_i') - \pi_i^t(a_i')) \right)
    \\
    & = g_i(z_i^t)(a_i) \left(  {q}_i^{\mu, t}(a_i)  - \sum_{a_i' \in A_i} g_i(z_i^t)(a_i') {q}_i^{\mu, t}(a_i') \right)
    \\
    & = \pi^{t}_i(a_i) \left(  {q}_i^{\mu, t}(a_i)  - \sum_{a_i' \in A_i} \pi^{t}_i(a_i') {q}_i^{\mu, t}(a_i') \right).
\end{align*}
We can write the dynamics of $\pi^{t}_i(a_i)$ as follows
\begin{align*}
     \pi^{t+1}_i(a_i) & =  \pi^{t}_i(a_i) + \eta_t (F(\pi_i^t) + U_t + \hat{\phi}_t), 
\end{align*}
where $F(\pi_i^t ) = \pi^{t}_i(a_i) \left(  {q}_i^{\mu, t}(a_i)  - \sum_{a_i' \in A_i} \pi^{t}_i(a_i') {q}_i^{\mu, t}(a_i') \right) $ is a continuous function and $U_t =  (\nabla g_i(z_i^t)(a_i))^\top(\hat{q}_i^{\mu, t} - {q}_i^{\mu, t})$ is a martingale difference sequence from Assumption~\ref{asm:noise_semibandit} and the bound on the utility function.
Note that from the assumption on $(\eta_t)_{t \ge 1}$, $\lim_{t \to \infty} \eta_t = 0$ and $\sum_{t=1}^\infty \eta_t = \infty$. 
From the form of \eqref{eq:compute_hessian}, the elements of $\Hess(g_i(\zeta)(a_i))$ is bounded by some constant. Thus, the limit of $\hat{\phi}_t$ is determined by the term $\|\hat{q}_i^{\mu, t} \|_2^2$.
Let $\mathcal{E}_t$ be an event such that $ \|\hat{q}_i^{\mu, t} \|_2^2 \ge t^p$ with $ p < 1/\kappa$, where $\kappa$ is in Assumption~\ref{asm:noise_semibandit} (ii) and $p$ is a value satisfies $ \eta_t = o(t^{-p})$ (note that $\eta_t \propto t^{-\kappa}$ and $1/\kappa < \lambda \le 1$). Using Assumption~\ref{asm:noise_semibandit}, the assumption that $\pi_i^t(a) > D>0$, and the boundedness of the utility function,
\begin{align*}
    \sum_{t=1}^\infty \mathbb{P}(\mathcal{E}_t) =  \sum_{t=1}^\infty \mathbb{P}( \|\hat{q}_i^{\mu, t} \|_2^2 \ge t^p \; \mid \;\mathcal{F}_{t-1})  = \sum_{t=1}^\infty \mathcal{O}\left(t^{-\kappa p}\right) < \infty.
\end{align*}
From the Borel–Cantelli lemma, $ \mathbb{P}(\cap_{t=1}^\infty \cup_{s \ge t}^\infty \mathcal{E}_s ) = 0$. 
Therefore, the event $\mathcal{E}_t$ occurs only for a finite number of $t$, almost surely. 
Thus, as $\eta_t t^p \propto t^{-\lambda + p} = o(1)$, 
\begin{align*}
    \hat{\phi}_t = \mathcal{O} (\eta_t \|\hat{q}_i^{\mu, t} \|_2^2) = \mathcal{O} (\eta_t t^p) = o(1),
\end{align*}
almost surely. 
Therefore, from Definition~\ref{def:aprx_rm_alg}, the update of $\{\pi^{t}_i\}_{t \ge 1}$ is an approximate Robbins-Monro algorithm. From Proposition~4.2 of \citet{benaim1999dynamics}, $\{\pi^{t}_i\}_{t \ge 1}$ is an asymptotic pseudo-trajectory of the replicator mutator dynamics.
\qed

\section{PROOFS FOR THEOREM \ref{thm:direct_convergence}}
\label{sec:appendix_proof_direct_convergence}
\subsection{Proof of Theorem \ref{thm:direct_convergence}}
\begin{proof}[Proof of Theorem \ref{thm:direct_convergence}]
From Lemma \ref{lem:min_kl_diff}, sequence $\{\min_{\pi^{\ast}\in \Pi^{\ast}}\mathrm{KL}(\pi^{\ast}, r^k)\}_{k \ge 0}$ is a monotonically decreasing sequence and is bounded from below by zero. Hence, $\{\min_{\pi^{\ast}\in \Pi^{\ast}}\mathrm{KL}(\pi^{\ast}, r^k)\}_{k \ge 0}$ converges to some $b\geq 0$.
We show that $b=0$ by a contradiction argument.

% Suppose $b>0$.
% For a given $b'> 0$, we define a sublevel set $\bar{\Omega}_{b'} = \{r\in \prod_{i=1}^2\Delta^{\circ}(A_i) ~|~  \min_{\pi^{\ast}\in \Pi^{\ast}}\mathrm{KL}(\pi^{\ast}, r) \leq b'\}$ and a strict sublevel set $\Omega_{b'} = \{r\in \prod_{i=1}^2\Delta^{\circ}(A_i) ~|~  \min_{\pi^{\ast}\in \Pi^{\ast}}\mathrm{KL}(\pi^{\ast}, r) < b'\}$.
% Since $\min_{\pi^{\ast}\in \Pi^{\ast}}\mathrm{KL}(\pi^{\ast}, r)$ is a continuous function on $\prod_{i=1}^2\Delta^{\circ}(A_i)$, the preimage $\bar{\Omega}_{b'}$ of the closed set $(-\infty,b']$ is also closed.
% Similarly, the preimage $\Omega_{b'}$ of the open set $(-\infty,b')$ is open.
% Furthermore, since $\prod_{i=1}^2\Delta^{\circ}(A_i)$ is a bounded set, $\bar{\Omega}_{b'}$ and $\Omega_{b'}$ are bounded sets.

% Let us define $B=\min_{\pi^{\ast}\in \Pi^{\ast}}\mathrm{KL}(\pi^{\ast}, r^0)$, $b>0$ means that for all $k\geq 0$, $r^k$ is in $\bar{\Omega}_B \setminus \Omega_b$.
% Since $\bar{\Omega}_{B}$ is a closed and bounded set and $\Omega_{b}$ is a open and bounded set, $\bar{\Omega}_B \setminus \Omega_b$ is a closed and bounded set.
% Thus, $\bar{\Omega}_B \setminus \Omega_b$ is a compact set.

Suppose $b>0$ and let us define $B=\min_{\pi^{\ast}\in \Pi^{\ast}}\mathrm{KL}(\pi^{\ast}, r^0)$.
Since $\min_{\pi^{\ast}\in \Pi^{\ast}}\mathrm{KL}(\pi^{\ast}, r^k)$ monotonically decreases, $r^k$ is in the set $\Omega_{b, B}=\{r\in \prod_{i=1}^2\Delta^{\circ}(A_i) ~|~ b\leq \min_{\pi^{\ast}\in \Pi^{\ast}}\mathrm{KL}(\pi^{\ast}, r) \leq B\}$ for all $k\geq 0$.
Since $\min_{\pi^{\ast}\in \Pi^{\ast}}\mathrm{KL}(\pi^{\ast}, r)$ is a continuous function on $\prod_{i=1}^2\Delta^{\circ}(A_i)$, the preimage $\Omega_{b, B}$ of the closed set $[b,B]$ is also closed.
Furthermore, since $\prod_{i=1}^2\Delta^{\circ}(A_i)$ is a bounded set, $\Omega_{b,B}$ is a bounded set.
Thus, $\Omega_{b, B}$ is a compact set.

% From Lemma \ref{lem:rmd_continuity}, $\min_{\pi^{\ast}\in \Pi^{\ast}}\mathrm{KL}(\pi^{\ast}, F(r)) - \min_{\pi^{\ast}\in \Pi^{\ast}}\mathrm{KL}(\pi^{\ast}, r)$ is also a continuous function.
% Since a continuous function has a maximum over a compact set, the maximum $\delta = \max_{r\in \bar{\Omega}_B \setminus \Omega_b} \left\{ \min_{\pi^{\ast}\in \Pi^{\ast}}\mathrm{KL}(\pi^{\ast}, F(r)) - \min_{\pi^{\ast}\in \Pi^{\ast}}\mathrm{KL}(\pi^{\ast}, r)\right\}$ exists.
% From Lemma \ref{lem:min_kl_diff} and the assumption that $b>0$, we have $\delta < 0$.
% It follows that:
From Lemma \ref{lem:rmd_continuity}, $\min_{\pi^{\ast}\in \Pi^{\ast}}\mathrm{KL}(\pi^{\ast}, F(r)) - \min_{\pi^{\ast}\in \Pi^{\ast}}\mathrm{KL}(\pi^{\ast}, r)$ is also a continuous function.
Since a continuous function has a maximum over a compact set, the maximum $\delta = \max_{r\in \Omega_{b, B}} \left\{ \min_{\pi^{\ast}\in \Pi^{\ast}}\mathrm{KL}(\pi^{\ast}, F(r)) - \min_{\pi^{\ast}\in \Pi^{\ast}}\mathrm{KL}(\pi^{\ast}, r)\right\}$ exists.
From Lemma \ref{lem:min_kl_diff} and the assumption that $b>0$, we have $\delta < 0$.
It follows that:
\begin{align*}
    \min_{\pi^{\ast}\in \Pi^{\ast}}\mathrm{KL}(\pi^{\ast}, r^k) &= \min_{\pi^{\ast}\in \Pi^{\ast}}\mathrm{KL}(\pi^{\ast}, r^0) + \sum_{l=0}^{k-1} \left(\min_{\pi^{\ast}\in \Pi^{\ast}}\mathrm{KL}(\pi^{\ast}, r^{l+1}) - \min_{\pi^{\ast}\in \Pi^{\ast}}\mathrm{KL}(\pi^{\ast}, r^l)\right) \\
    &\leq B + \sum_{l=0}^{k-1} \delta = B + k\delta.
\end{align*}
This implies that $\min_{\pi^{\ast}\in \Pi^{\ast}}\mathrm{KL}(\pi^{\ast}, r^k) < 0$ for $k > \frac{B}{-\delta}$, which is a contradiction because $\min_{\pi^{\ast}\in \Pi^{\ast}}\mathrm{KL}(\pi^{\ast}, r) \geq 0$.
Therefore, the sequence of $\min_{\pi^{\ast}\in \Pi^{\ast}}\mathrm{KL}(\pi^{\ast}, r^k)$ converges to $0$, and $r^k$ converges to some strategy profile in $\Pi^{\ast}$.
\end{proof}

\subsection{Proof of Lemma \ref{lem:min_kl_diff}}
\begin{proof}[Proof of Lemma \ref{lem:min_kl_diff}]
First, we prove the first statement of the lemma using following two lemmas:
\begin{lemma}
\label{lem:kl_diff}
Let $\pi^{\mu,r}$ be a stationary point of (\ref{eq:rmd}) with the reference strategy profile $r$.
Assuming that $r\neq \pi^{\mu,r}$, for any Nash equilibrium $\pi^{\ast}$ of the original game, we have:
\begin{align*}
    \mathrm{KL}(\pi^{\ast}, \pi^{\mu,r}) - \mathrm{KL}(\pi^{\ast}, r) &< 0.
\end{align*}
\end{lemma}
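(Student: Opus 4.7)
The plan is to compute the KL difference directly, bound it via Jensen's inequality, and use the stationary-point identity of Lemma~\ref{lem:rmd_property} together with the zero-sum Nash structure to drive the bound strictly below zero.

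First, I would expand
\[
\mathrm{KL}(\pi^{\ast}, \pi^{\mu,r}) - \mathrm{KL}(\pi^{\ast}, r) = \sum_{i=1}^{2}\sum_{a_i \in A_i}\pi_i^{\ast}(a_i)\,\ln\!\frac{r_i(a_i)}{\pi_i^{\mu,r}(a_i)},
\]
and apply Jensen's inequality (concavity of $\ln$) to each player to obtain the upper bound $\ln(s_1 s_2)$, where $s_i := \sum_{a_i}\pi_i^{\ast}(a_i)\, r_i(a_i)/\pi_i^{\mu,r}(a_i)$. Setting $\pi_i' = \pi_i^{\ast}$ in Lemma~\ref{lem:rmd_property} gives the identity $s_i = 1 - \mu^{-1}\bigl(v_i^{\pi_i^{\ast}, \pi_{-i}^{\mu,r}} - v_i^{\pi^{\mu,r}}\bigr)$. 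Summing over $i$, using $\sum_i v_i^{\pi^{\mu,r}} = 0$ (zero-sum) together with the Nash sandwich $v_1^{\pi_1^{\ast},\pi_2^{\mu,r}} \geq v_1^{\pi^{\ast}} \geq v_1^{\pi_1^{\mu,r},\pi_2^{\ast}}$, yields $s_1 + s_2 \leq 2$. Since $s_i \geq 0$, AM-GM then delivers $s_1 s_2 \leq 1$ and hence $\ln(s_1 s_2) \leq 0$.

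The crucial and most delicate step is to upgrade this bound to a strict inequality under $r \neq \pi^{\mu,r}$. I would argue by contradiction: if $s_1 s_2 = 1$, then AM-GM tightness combined with $s_1 + s_2 \leq 2$ forces $s_1 = s_2 = 1$, which in turn pins down $v_1^{\pi_1^{\ast},\pi_2^{\mu,r}} = v_1^{\pi_1^{\mu,r},\pi_2^{\ast}} = v_1^{\pi^{\mu,r}} = v_1^{\pi^{\ast}}$. The stationary point $\pi_i^{\mu,r}$ lies in the interior $\Delta^{\circ}(A_i)$ (otherwise the stationary equation demands $r_i(a_i) = 0$, contradicting $r \in \prod_i\Delta^{\circ}(A_i)$), so the fully-mixed $\pi_1^{\mu,r}$ maximizing the linear function $v_1^{\cdot, \pi_2^{\ast}}$ over the simplex forces $q_1^{\pi_2^{\ast}}(\cdot)$ to be constant, and symmetrically $q_2^{\pi_1^{\ast}}(\cdot)$ is constant; combined with the interchangeability of Nash equilibria in zero-sum games, this places $\pi^{\mu,r}$ itself in $\Pi^{\ast}$ as a fully-mixed equilibrium. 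A fully-mixed Nash satisfies $q_i^{\pi^{\mu,r}}(a_i) = v_i^{\pi^{\mu,r}}$ for every action, so substitution into
\[
\pi_i^{\mu,r}(a_i)\bigl(q_i^{\pi^{\mu,r}}(a_i) - v_i^{\pi^{\mu,r}}\bigr) + \mu\bigl(r_i(a_i) - \pi_i^{\mu,r}(a_i)\bigr) = 0
\]
collapses the first term and yields $r_i = \pi_i^{\mu,r}$ for both players, contradicting $r \neq \pi^{\mu,r}$. Hence $\ln(s_1 s_2) < 0$.

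The hardest link is precisely this final chain: converting the equality case into ``$\pi^{\mu,r}$ is a Nash'' via interchangeability, which relies critically on the mutation-induced interiorness of the stationary point (promoting ``best response'' to ``full indifference of the opponent'') and the zero-sum structure. The rest is algebraic bookkeeping combining the stationary identity, Jensen, and AM-GM.
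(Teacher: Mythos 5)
Your non-strict chain is sound and runs parallel to the paper's: the expansion of the KL difference into $\sum_i\sum_{a_i}\pi_i^{\ast}(a_i)\ln\bigl(r_i(a_i)/\pi_i^{\mu,r}(a_i)\bigr)$, Jensen, the stationarity identity from Lemma~\ref{lem:rmd_property}, and the zero-sum/Nash sandwich giving $s_1+s_2\le 2$ all check out (the paper applies Jensen jointly to the two players and bounds by $2\ln\bigl(\tfrac12(s_1+s_2)\bigr)$ rather than $\ln(s_1s_2)$, a cosmetic difference). The genuine gap is in your upgrade to strictness. You claim that $s_1s_2=1$ forces $r=\pi^{\mu,r}$, via ``$s_1=s_2=1$ plus interiority plus interchangeability implies $\pi^{\mu,r}\in\Pi^{\ast}$.'' That implication is false. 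From $s_1=s_2=1$ you only obtain $v_1^{\pi_1^{\ast},\pi_2^{\mu,r}}=v_1^{\pi_1^{\mu,r},\pi_2^{\ast}}=v_1^{\pi^{\mu,r}}=v_1^{\pi^{\ast}}$ and, from interiority, that $q_1^{\pi_2^{\ast}}$ and $q_2^{\pi_1^{\ast}}$ are constant; this says every strategy is a best response to the \emph{opponent's equilibrium strategy}, not that $\pi_i^{\mu,r}$ is itself an equilibrium strategy, and interchangeability can only be invoked for pairs of equilibria, which you have not produced. A concrete counterexample to your step: in matching pennies (payoffs $[[1,-1],[-1,1]]$, unique equilibrium uniform, value $0$) take $\mu=1/2$ and $r=\bigl((0.6,0.4),(0.7,0.3)\bigr)$; direct substitution into the stationarity equations shows $\pi^{\mu,r}=\bigl((0.6,0.4),(0.5,0.5)\bigr)$, so $s_1=0.5\cdot\tfrac{0.6}{0.6}+0.5\cdot\tfrac{0.4}{0.4}=1$ and $s_2=0.5\cdot\tfrac{0.7}{0.5}+0.5\cdot\tfrac{0.3}{0.5}=1$, yet $r\neq\pi^{\mu,r}$ and $\pi^{\mu,r}\notin\Pi^{\ast}$. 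Hence $s_1s_2=1$ cannot be ruled out, and your argument only delivers $\mathrm{KL}(\pi^{\ast},\pi^{\mu,r})-\mathrm{KL}(\pi^{\ast},r)\le 0$.

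The paper gets strictness from a different place: the Jensen step itself. Since $\ln$ is strictly concave, equality in that step would require the ratios $r_i(a)/\pi_i^{\mu,r}(a)$ to be constant over the support of $\pi^{\ast}$, which the paper ties to $r=\pi^{\mu,r}$; under the hypothesis $r\neq\pi^{\mu,r}$ the Jensen inequality is strict, and then the \emph{non-strict} bound $s_1+s_2\le 2$ (your Nash sandwich) suffices to conclude $\mathrm{KL}(\pi^{\ast},\pi^{\mu,r})-\mathrm{KL}(\pi^{\ast},r)<2\ln 1=0$. The counterexample above illustrates exactly this division of labor: there $s_1s_2=1$, but player 2's ratios $0.7/0.5$ and $0.3/0.5$ are not constant, so the strict inequality comes from Jensen, not from $\ln(s_1s_2)<0$. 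To repair your proof you would need to relocate the strictness accordingly rather than trying to exclude the equality case $s_1=s_2=1$.
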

\begin{lemma}
\label{lem:recursive_point}
Let $\pi^{\mu,r}$ be a stationary point of (\ref{eq:rmd}) with the reference strategy profile $r$.
If $r = \pi^{\mu,r}$, then $r$ is a Nash equilibrium of the original game.
\end{lemma}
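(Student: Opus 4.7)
The plan is to exploit the stationarity condition of \eqref{eq:rmd} directly. Writing the condition coordinate-wise, for every player $i \in \{1,2\}$ and every action $a \in A_i$,
\begin{align*}
\pi_i^{\mu,r}(a)\bigl(q_i^{\pi^{\mu,r}}(a) - v_i^{\pi^{\mu,r}}\bigr) + \mu\bigl(r_i(a) - \pi_i^{\mu,r}(a)\bigr) = 0.
\end{align*}
Under the hypothesis $r = \pi^{\mu,r}$, the mutation term vanishes identically, leaving
\begin{align*}
\pi_i^{\mu,r}(a)\bigl(q_i^{\pi^{\mu,r}}(a) - v_i^{\pi^{\mu,r}}\bigr) = 0 \quad \text{for all } i, a.
\end{align*}

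Since the reference strategy lies in the interior, $r \in \prod_{i=1}^2 \Delta^{\circ}(A_i)$, and $\pi^{\mu,r} = r$ by assumption, we have $\pi_i^{\mu,r}(a) = r_i(a) > 0$ for every $a$. Dividing through then gives $q_i^{\pi^{\mu,r}}(a) = v_i^{\pi^{\mu,r}}$ for every $i$ and every $a \in A_i$; that is, player $i$'s conditional expected payoff against $\pi_{-i}^{\mu,r}$ is independent of her action.

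The Nash equilibrium property follows at once: for any deviation $\pi_i' \in \Delta(A_i)$,
\begin{align*}
v_i^{\pi_i', \pi_{-i}^{\mu,r}} = \sum_{a \in A_i} \pi_i'(a)\, q_i^{\pi^{\mu,r}}(a) = v_i^{\pi^{\mu,r}},
\end{align*}
so no player can strictly improve her payoff by unilaterally deviating, and $r = \pi^{\mu,r}$ is a Nash equilibrium of the original game. The argument is essentially one line once the stationarity equation is in hand, so there is no substantive obstacle; the only subtlety worth flagging is the use of $\pi_i^{\mu,r}(a) > 0$, which is immediate here because $\pi^{\mu,r} = r \in \Delta^{\circ}(A_i)$ by assumption (and, more generally, is guaranteed by the mutation term of \eqref{eq:rmd} pushing trajectories inward whenever a coordinate approaches the boundary).
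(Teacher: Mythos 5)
Your proposal is correct and follows essentially the same route as the paper: plug $r=\pi^{\mu,r}$ into the stationarity equation of \eqref{eq:rmd} so the mutation term vanishes, use $r_i(a)>0$ (interiority of the reference strategy) to conclude $q_i^{\pi^{\mu,r}}(a)=v_i^{\pi^{\mu,r}}$ for every action, and deduce that each player is already best responding. The only cosmetic difference is that the paper phrases the last step as $v_i^r=\max_{a_i}q_i^r(a_i)$ while you verify directly that no deviation $\pi_i'$ improves the payoff; these are equivalent.
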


From Lemma \ref{lem:recursive_point}, when $r\in \prod_{i=1}^2\Delta^{\circ}(A_i)\setminus\Pi^{\ast}$, $r\neq \pi^{\mu,r}$ always holds.
Let us define $\pi^{\star} = \argmin_{\pi^{\ast}\in \Pi^{\ast}}\mathrm{KL}(\pi^{\ast}, r)$.
From Lemma \ref{lem:kl_diff}, if $r\neq \pi^{\mu,r}$ we have:
\begin{align*}
    \min_{\pi^{\ast}\in \Pi^{\ast}}\mathrm{KL}(\pi^{\ast}, r) = \mathrm{KL}(\pi^{\star}, r) > \mathrm{KL}(\pi^{\star}, \pi^{\mu,r}) \geq \min_{\pi^{\ast}\in \Pi^{\ast}}\mathrm{KL}(\pi^{\ast}, \pi^{\mu,r}).
\end{align*}
Therefore, if $r\in \prod_{i=1}^2\Delta^{\circ}(A_i)\setminus\Pi^{\ast}$ then $\min_{\pi^{\ast}\in \Pi^{\ast}}\mathrm{KL}(\pi^{\ast}, \pi^{\mu,r}) < \min_{\pi^{\ast}\in \Pi^{\ast}}\mathrm{KL}(\pi^{\ast}, r)$.

Next, we prove the second statement of the lemma.
Assume that $r\in \Pi^{\ast}$ implies that $\pi^{\mu,r}\neq r$.
In this case, we can apply Lemma \ref{lem:kl_diff}, so we have for all $\pi^{\ast}\in \Pi^{\ast}$, $\mathrm{KL}(\pi^{\ast}, \pi^{\mu,r})  < \mathrm{KL}(\pi^{\ast}, r)$.
On the other hand, since $r \in \Pi^{\ast}$, there exists a Nash equilibrium $\pi^{\star}$ such that $\mathrm{KL}(\pi^{\star}, r)=0$.
Thus, we have $\mathrm{KL}(\pi^{\star}, \pi^{\mu,r})  < \mathrm{KL}(\pi^{\star}, r)=0$, which contradicts that $\mathrm{KL}(\pi^{\star}, \pi^{\mu,r})\geq 0$.
Therefore, if $r\in \Pi^{\ast}$ then $\pi^{\mu,r}= r$.
\end{proof}

\subsection{Proof of Lemma \ref{lem:rmd_continuity}}
\begin{proof}[Proof of Lemma \ref{lem:rmd_continuity}]
For a given $r\in \prod_{i=1}^2\Delta^{\circ}(A_i)$, let us consider that $\pi^t$ follows the following (\ref{eq:rmd}) dynamics with a reference strategy profile $r\in \prod_{i=1}^2\Delta^{\circ}(A_i)$:
\begin{align*}
    \frac{d}{dt}\pi_i^t(a_i) =& \pi_i^t(a_i)\left(q_i^{\pi^t}(a_i) - v_i^{\pi^t}\right) + \mu\left(r_i(a_i)-\pi_i^t(a_i)\right).
\end{align*}
Therefore, we have for a given $r'\in \prod_{i=1}^2 \Delta^{\circ}(A_i)$ and the associated stationary point $\pi^{\mu,r'}$:
\begin{align}
\label{eq:time_derivative_of_divergence_for_continuity}
    \frac{d}{dt}\mathrm{KL}&(\pi^{\mu,r'}, \pi^t) \nonumber\\
    =& \sum_{i=1}^2\sum_{a\in A_i}\pi_i^{\mu,r'}(a) \frac{d}{dt}\ln \left(\frac{\pi_i^{\mu,r'}(a)}{\pi_i^t(a)}\right) \nonumber\\
    =& -\sum_{i=1}^2\sum_{a\in A_i}\pi_i^{\mu,r'}(a) \frac{d}{dt}\ln \pi_i^t(a) \nonumber\\
    =& -\sum_{i=1}^2\sum_{a\in A_i}\frac{\pi_i^{\mu,r'}(a)}{\pi_i^t(a)} \frac{d}{dt} \pi_i^t(a) \nonumber\\
    =& -\sum_{i=1}^2\sum_{a\in A_i}\frac{\pi_i^{\mu,r'}(a)}{\pi_i^t(a)} \left(\pi_i^t(a_i)\left(q_i^{\pi^t}(a_i) - v_i^{\pi^t}\right) + \mu\left(r_i(a_i)-\pi_i^t(a_i)\right)\right) \nonumber\\
    =& \sum_{i=1}^2\sum_{a\in A_i} \left(\pi_i^{\mu,r'}(a)\left(v_i^{\pi^t} - q_i^{\pi^t}(a_i)\right) - \frac{\mu}{\pi_i^t(a)}\left(r_i(a_i)-\pi_i^t(a_i)\right)\pi_i^{\mu,r'}(a)\right) \nonumber\\
    =& \sum_{i=1}^2\sum_{a\in A_i} \left(\left(\pi_i^t(a) - \pi_i^{\mu,r'}(a)\right)q_i^{\pi^t}(a) - \frac{\mu}{\pi_i^t(a)}\left(r_i(a_i)-\pi_i^t(a_i)\right)\pi_i^{\mu,r'}(a) + \frac{\mu}{\pi_i^t(a)}\left(r_i(a_i)-\pi_i^t(a_i)\right)\pi_i^t(a)\right) \nonumber\\
    =& \sum_{i=1}^2\sum_{a\in A_i}\left(q_i^{\pi^t}(a) + \frac{\mu}{\pi_i^t(a)}\left(r_i(a)-\pi_i^t(a)\right)\right) \left(\pi_i^t(a) - \pi_i^{\mu,r'}(a)\right) \nonumber\\
    =& \sum_{i=1}^2\sum_{a\in A_i}\left(q_i^{\pi^t}(a) + \frac{\mu}{\pi_i^t(a)}\left(r'_i(a)-\pi_i^t(a)\right)\right) \left(\pi_i^t(a) - \pi_i^{\mu,r'}(a)\right) \nonumber\\
    &+ \sum_{i=1}^2\sum_{a\in A_i}\left(\frac{\mu}{\pi_i^t(a)}\left(r_i(a)-\pi_i^t(a)\right)- \frac{\mu}{\pi_i^t(a)}\left(r_i'(a)-\pi_i^t(a)\right)\right) \left(\pi_i^t(a) - \pi_i^{\mu,r'}(a)\right),
\end{align}
where the sixth equality follows from $v_i^{\pi^t}=\sum_{a\in A_i}\pi_i^t(a)q_i^{\pi^t}(a)$ and $\sum_{a\in A_i}\frac{\mu}{\pi_i^t(a)}\left(r_i(a_i)-\pi_i^t(a_i)\right)\pi_i^t(a)=\mu\sum_{a\in A_i}\left(r_i(a_i)-\pi_i^t(a_i)\right)=0$.

The first term of (\ref{eq:time_derivative_of_divergence_for_continuity}) can be written as:
\begin{align*}
    & \sum_{i=1}^2\sum_{a\in A_i}\left(q_i^{\pi^t}(a) + \frac{\mu}{\pi_i^t(a)}\left(r'_i(a) - \pi_i^t(a)\right)\right)\left(\pi_i^t(a) - \pi_i^{\mu,r'}(a)\right) \\
    &= \sum_{i=1}^2\sum_{a\in A_i}\left(\pi_i^t(a) - \pi_i^{\mu,r'}(a)\right)\left(q_i^{\pi^t}(a) + \mu\left(\frac{r'_i(a)}{\pi_i^t(a)}-1\right)\right) \\
    &= \sum_{i=1}^2\sum_{a\in A_i}\left(\pi_i^t(a) - \pi_i^{\mu,r'}(a)\right)\left(q_i^{\pi^t}(a) + \mu\frac{r'_i(a)}{\pi_i^t(a)}\right) \\
    &= \sum_{i=1}^2\left(v_i^{\pi^t} - v_i^{\pi_i^{\mu,r'}, \pi_{-i}^t} + \mu - \mu\sum_{a\in A_i}r'_i(a)\frac{\pi_i^{\mu,r'}(a)}{\pi_i^t(a)}\right) \\
    &= \sum_{i=1}^2\left(- v_i^{\pi_i^{\mu,r'}, \pi_{-i}^t} + \mu - \mu\sum_{a\in A_i}r'_i(a)\frac{\pi_i^{\mu,r'}(a)}{\pi_i^t(a)}\right) \\
    &= \sum_{i=1}^2\left(v_i^{\pi_i^t, \pi_{-i}^{\mu,r'}} + \mu - \mu\sum_{a\in A_i}r'_i(a)\frac{\pi_i^{\mu,r'}(a)}{\pi_i^t(a)}\right),
\end{align*}
where the fourth equality follows from $\sum_{i=1}^2v_i^{\pi^t}=0$, and the last equality follows from $-v_1^{\pi_1^{\mu,r'},\pi_2^t}=v_2^{\pi_1^{\mu,r'},\pi_2^t}$ and $-v_2^{\pi_1^t,\pi_2^{\mu,r'}}=v_1^{\pi_1^t,\pi_2^{\mu,r'}}$ by the definition of two-player zero-sum games.
Here, from Lemma \ref{lem:rmd_property}, for all $i\in \{1,2\}$:
\begin{align*}
    v_i^{\pi_i^t,\pi_{-i}^{\mu,r'}} = v_i^{\pi^{\mu,r'}} + \mu - \mu\sum_{a\in A_i} r'_i(a)\frac{\pi_i^t(a)}{\pi_i^{\mu,r'}(a)},
\end{align*}
and then we have:
\begin{align}
\label{eq:term1_bound_for_continuity}
    &\sum_{i=1}^2\sum_{a\in A_i}\left(q_i^{\pi^t}(a) + \frac{\mu}{\pi_i^t(a)}\left(r'_i(a) - \pi_i^t(a)\right)\right)\left(\pi_i^t(a) - \pi_i^{\mu,r'}(a)\right) \nonumber\\
    &= \sum_{i=1}^2v_i^{\pi^{\mu,r'}} + 4\mu - \mu\sum_{i=1}^2\sum_{a\in A_i}r'_i(a)\left(\frac{\pi_i^t(a)}{\pi_i^{\mu,r'}(a)} + \frac{\pi_i^{\mu,r'}(a)}{\pi_i^t(a)}\right) \nonumber\\
    &= 4\mu - \mu\sum_{i=1}^2\sum_{a\in A_i}r'_i(a)\left(\frac{\pi_i^t(a)}{\pi_i^{\mu,r'}(a)} + \frac{\pi_i^{\mu,r'}(a)}{\pi_i^t(a)}\right) \nonumber\\
    &= -\mu \sum_{i=1}^2\sum_{a\in A_i}r'_i(a)\left(\sqrt{\frac{\pi_i^t(a)}{\pi_i^{\mu,r'}(a)}} - \sqrt{\frac{\pi_i^{\mu,r'}(a)}{\pi_i^t(a)}}\right)^2,
\end{align}
where the second equality follows from $\sum_{i=1}^2 v_i^{\pi^{\mu,r'}} = 0$ by the definition of zero-sum games.

On the other hand, the second term of (\ref{eq:time_derivative_of_divergence_for_continuity}) is written as:
\begin{align}
\label{eq:term2_bound_for_continuity}
    &\sum_{i=1}^2\sum_{a\in A_i}\left(\frac{\mu}{\pi_i^t(a)}\left(r_i(a)-\pi_i^t(a)\right)- \frac{\mu}{\pi_i^t(a)}\left(r_i'(a)-\pi_i^t(a)\right)\right) \left(\pi_i^t(a) - \pi_i^{\mu,r'}(a)\right) \nonumber\\
    &= \mu\sum_{i=1}^2\sum_{a\in A_i}\frac{1}{\pi_i^t(a)}\left(r_i(a)-r_i'(a)\right) \left(\pi_i^t(a) - \pi_i^{\mu,r'}(a)\right) \leq \mu\sum_{i=1}^2\sum_{a\in A_i}\frac{1}{\pi_i^t(a)}\left|r_i(a)-r_i'(a)\right|.
\end{align}

Combining (\ref{eq:time_derivative_of_divergence_for_continuity}), (\ref{eq:term1_bound_for_continuity}), and (\ref{eq:term2_bound_for_continuity}), we can obtain:
\begin{align*}
    \frac{d}{dt}\mathrm{KL}(\pi^{\mu,r'}, \pi^t) \leq - \mu\sum_{i=1}^2\sum_{a\in A_i} r'_i(a)\left(\sqrt{\frac{\pi_i^t(a)}{\pi_i^{\mu,r'}(a)}} - \sqrt{\frac{\pi_i^{\mu,r'}(a)}{\pi_i^t(a)}}\right)^2 + \mu\sum_{i=1}^2\sum_{a\in A_i}\frac{1}{\pi_i^t(a)}\left|r_i(a)-r_i'(a)\right|.
\end{align*}
Setting the start point as $\pi^0 = \pi^{\mu,r}$, we have for all $t\geq 0$, $\pi^t = \pi^{\mu,r}$.
In this case, for all $t\geq 0$, we have $\frac{d}{dt}\mathrm{KL}(\pi^{\mu,r'}, \pi^t) = 0$.
Thus,
\begin{align*}
    \sum_{i=1}^2\sum_{a\in A_i} r'_i(a)\left(\sqrt{\frac{\pi_i^{\mu,r}(a)}{\pi_i^{\mu,r'}(a)}} - \sqrt{\frac{\pi_i^{\mu,r'}(a)}{\pi_i^{\mu,r}(a)}}\right)^2 \leq \sum_{i=1}^2\sum_{a\in A_i}\frac{1}{\pi_i^{\mu,r}(a)}\left|r_i(a)-r_i'(a)\right|.
\end{align*}
Here, since $r_i$ is in interior of $\Delta(A_i)$, there exists $\nu_1 > 0$ such that $\forall i, \forall a\in A_i$, $r_i(a)>\nu_1$.
Furthermore, from Lemma C.2 in \citet{abe2022mutationdriven}, $\pi_i^{\mu,r}$ is also in interior of $\Delta(A_i)$.
Thus, there exists $\nu_2 > 0$ such that $\forall i, \forall a\in A_i$, $\pi^{\mu,r}_i(a)>\nu_2$.
For a given $\varepsilon > 0$, let us define $\delta = \frac{\varepsilon^2\nu_1\nu_2}{4 + \varepsilon^2\nu_2}\sqrt{\frac{1}{\sum_{i=1}^2 |A_i|}}$.
If $\|r'-r\|_2<\delta$, then $\|r'-r\|_1\leq \|r'-r\|_2\sqrt{\sum_{i=1}^2 |A_i|}<\frac{\varepsilon^2\nu_1\nu_2}{4 + \varepsilon^2\nu_2}$.
Thus, $\forall i, \forall a\in A_i$, $r_i'(a) > \left(1 - \frac{\varepsilon^2\nu_2}{4 + \varepsilon^2\nu_2}\right)\nu_1>0$.
So, if $\|r'-r\|_2<\delta$, we have then:
\begin{align*}
    \sum_{i=1}^2 \sum_{a\in A_i} r'_i(a)\left(\sqrt{\frac{\pi_i^{\mu,r'}(a)}{\pi_i^{\mu,r}(a)}}-\sqrt{\frac{\pi_i^{\mu,r}(a)}{\pi_i^{\mu,r'}(a)}}\right)^2 &= \sum_{i=1}^2 \sum_{a\in A_i} r'_i(a)\left(\frac{\pi_i^{\mu,r'}(a)}{\pi_i^{\mu,r}(a)}+\frac{\pi_i^{\mu,r}(a_i)}{\pi_i^{\mu,r'}(a)}-2\right) \\
    &= \sum_{i=1}^2 \sum_{a\in A_i} \frac{r'_i(a)}{\pi_i^{\mu,r'}(a)}\frac{(\pi_i^{\mu,r}(a)-\pi_i^{\mu,r'}(a))^2}{\pi_i^{\mu,r}(a)} \\
    &\geq \left(1 - \frac{\varepsilon^2\nu_2}{4 + \varepsilon^2\nu_2}\right)\nu_1\sum_{i=1}^2 \sum_{a\in A_i} \frac{(\pi_i^{\mu,r}(a)-\pi_i^{\mu,r'}(a))^2}{\pi_i^{\mu,r}(a)} \\
    &\geq \left(1 - \frac{\varepsilon^2\nu_2}{4 + \varepsilon^2\nu_2}\right)\nu_1\sum_{i=1}^2  \ln\left(1+\sum_{a\in A_i}\frac{(\pi_i^{\mu,r}(a)-\pi_i^{\mu,r'}(a))^2}{\pi_i^{\mu,r}(a)}\right) \\
    &= \left(1 - \frac{\varepsilon^2\nu_2}{4 + \varepsilon^2\nu_2}\right)\nu_1\sum_{i=1}^2  \ln\left(\sum_{a\in A_i}\pi_i^{\mu,r'}(a)\frac{\pi_i^{\mu,r'}(a)}{\pi_i^{\mu,r}(a)}\right) \\
    &\geq \left(1 - \frac{\varepsilon^2\nu_2}{4 + \varepsilon^2\nu_2}\right)\nu_1\sum_{i=1}^2  \sum_{a\in A_i}\pi_i^{\mu,r'}(a)\ln\left(\frac{\pi_i^{\mu,r'}(a)}{\pi_i^{\mu,r}(a)}\right) \\
    &= \left(1 - \frac{\varepsilon^2\nu_2}{4 + \varepsilon^2\nu_2}\right)\nu_1\sum_{i=1}^2  \mathrm{KL}(\pi_i^{\mu,r'},\pi_i^{\mu,r}),
\end{align*}
where the second inequality follow from $x\geq \ln (1+x)$ for all $x>0$, and the third inequality follows from the concavity of the $\ln(\cdot)$ function and Jensen's inequality for concave functions.
Moreover,
\begin{align*}
    \sum_{i=1}^2  \mathrm{KL}(\pi_i^{\mu,r'},\pi_i^{\mu,r})&\geq \frac{1}{2}\sum_{i=1}^2\|\pi_i^{\mu,r'}-\pi_i^{\mu,r}\|_1^2 \geq \frac{1}{4}\|\pi^{\mu,r'}-\pi^{\mu,r}\|_1^2,
\end{align*}
where we use Pinsker's inequality \citep{Tsybakov:1315296}, and the fact that $\sum_{i=1}^2x_i^2 \geq \frac{1}{2}\left(\sum_{i=1}^2 x_i\right)^2$ for $x_i\in \mathbb{R}$.
Thus, we get:
\begin{align*}
    \left(1 - \frac{\varepsilon^2\nu_2}{4  + \varepsilon^2\nu_2}\right)\frac{\nu_1}{4}\|\pi^{\mu,r'}-\pi^{\mu,r}\|_1^2 \leq \sum_{i=1}^2\sum_{a\in A_i}\frac{1}{\pi_i^{\mu,r}(a)}\left|r_i(a)-r_i'(a)\right| < \frac{1}{\nu_2}\|r'-r\|_1.
\end{align*}
Therefore, if $\|r'-r\|<\delta$, we have then:
\begin{align*}
    \|\pi^{\mu,r'}-\pi^{\mu,r}\|_2 \leq \|\pi^{\mu,r'}-\pi^{\mu,r}\|_1 &< \sqrt{\frac{4}{\left(1 - \frac{\varepsilon^2\nu_2}{4  + \varepsilon^2\nu_2}\right)\nu_1\nu_2}\|r'-r\|_1} \\
    &\leq \sqrt{\frac{4}{\left(1 - \frac{\varepsilon^2\nu_2}{4  + \varepsilon^2\nu_2}\right)\nu_1\nu_2}\|r'-r\|_2\sqrt{\sum_{i=1}^2 |A_i|}} \\
    &< \sqrt{\frac{4}{\left(1 - \frac{\varepsilon^2\nu_2}{4  + \varepsilon^2\nu_2}\right)\nu_1\nu_2}\delta\sqrt{\sum_{i=1}^2 |A_i|}} \\
    &= \sqrt{\frac{4}{\left(1 - \frac{\varepsilon^2\nu_2}{4  + \varepsilon^2\nu_2}\right)\nu_1\nu_2}\frac{\varepsilon^2\nu_1\nu_2}{4  + \varepsilon^2\nu_2}} \\
    &= \sqrt{\frac{4}{1 - \frac{\varepsilon^2\nu_2}{4  + \varepsilon^2\nu_2}}\frac{\varepsilon^2}{4  + \varepsilon^2\nu_2}} = \varepsilon.
\end{align*}
Thus, for every $\varepsilon>0$, there exists $\delta > 0$ such that for all $r'\in \prod_{i=1}^2\Delta^{\circ}(A_i)$, if $\|r'-r\|_2<\delta$ then $\|\pi^{\mu,r'}-\pi^{\mu,r}\|_2<\varepsilon$.
Therefore, $F(\cdot)$ is a continuous function on $\prod_{i=1}^2\Delta^{\circ}(A_i)$.
\end{proof}

\subsection{Proof of Lemma \ref{lem:kl_diff}}
\begin{proof}[Proof of Lemma \ref{lem:kl_diff}]
First, we have:
\begin{align*}
    \mathrm{KL}(\pi^{\ast}, \pi^{\mu,r}) - \mathrm{KL}(\pi^{\ast}, r) &= \sum_{i=1}^2 \left(\sum_{a_i \in A_i}\pi_i^{\ast}(a_i)\ln \frac{\pi_i^{\ast}(a_i)}{\pi_i^{\mu,r}(a_i)} - \sum_{a_i \in A_i}\pi_i^{\ast}(a_i)\ln \frac{\pi_i^{\ast}(a_i)}{r_i(a_i)}\right) \\
    &= \sum_{i=1}^2 \sum_{a_i \in A_i}\pi_i^{\ast}(a_i)\ln \frac{r_i(a_i)}{\pi_i^{\mu,r}(a_i)} \\
    &\leq  2\ln \left(\frac{1}{2}\sum_{i=1}^2\sum_{a_i \in A_i}\pi_i^{\ast}(a_i)\frac{r_i(a_i)}{\pi_i^{\mu,r}(a_i)}\right),
\end{align*}
where the inequality follows from the concavity of the $\ln(\cdot)$ function and Jensen's inequality for concave functions.
Since $\ln(\cdot)$ is strictly concave, the equality holds if and only if $r=\pi^{\mu,r}$.
Therefore, from the assumption that $r\neq \pi^{\mu,r}$, we have:
\begin{align}
    \label{eq:diff_kl}
    \mathrm{KL}(\pi^{\ast}, \pi^{\mu,r}) - \mathrm{KL}(\pi^{\ast}, r) < 2\ln \left(\frac{1}{2}\sum_{i=1}^2 \sum_{a_i \in A_i}\pi_i^{\ast}(a_i)\frac{r_i(a_i)}{\pi_i^{\mu,r}(a_i)}\right).
\end{align}

Here, by using the ordinary differential equation (\ref{eq:rmd}), we have for all $i\in \{1,2\}$ and $a_i\in A_i$:
\begin{align*}
    &\pi_i^{\mu,r}(a_i)\left(q_i^{\pi^{\mu,r}}(a_i) - v_i^{\pi^{\mu,r}}\right) + \mu\left(r_i(a_i)-\pi_i^{\mu,r}(a_i)\right) = 0,
\end{align*}
and then:
\begin{align}
\label{eq:density_ratio}
    \frac{r_i(a_i)}{\pi_i^{\mu,r}(a_i)} = 1 - \frac{1}{\mu}\left(q_i^{\pi^{\mu,r}}(a_i) - v_i^{\pi^{\mu,r}}\right).
\end{align}
Combining (\ref{eq:diff_kl}) and (\ref{eq:density_ratio}), we have:
\begin{align*}
    \mathrm{KL}(\pi^{\ast}, \pi^{\mu,r}) - \mathrm{KL}(\pi^{\ast}, r) &< 2 \ln \left(\frac{1}{2}\sum_{i=1}^2\sum_{a_i \in A_i}\pi_i^{\ast}(a_i)\left(1 - \frac{1}{\mu}\left(q_i^{\pi^{\mu,r}}(a_i) -  v_i^{\pi^{\mu,r}}\right)\right)\right) \\
    &= 2\ln \left(\frac{1}{2}\sum_{i=1}^2 \left(1 - \frac{1}{\mu}\left(v_i^{\pi_i^{\ast}, \pi_{-i}^r} - v_i^{\pi^{\mu,r}}\right)\right)\right).
\end{align*}
Since $\sum_{a_i\in A_i}\pi_i^{\ast}(a_i)\frac{r_i(a_i)}{\pi_i^{\mu,r}(a_i)} > 0$, we have $1 - \frac{1}{\mu}\left(v_i^{\pi_i^{\ast}, \pi_{-i}^r} - v_i^{\pi^{\mu,r}}\right) > 0$.
Also, since $\pi^{\ast}$ is the Nash equilibrium, we get:
\begin{align*}
    \sum_{i=1}^2 \left(1 - \frac{1}{\mu}\left(v_i^{\pi_i^{\ast}, \pi_{-i}^r} - v_i^{\pi^{\mu,r}}\right)\right) &= 2 + \frac{1}{\mu}\sum_{i=1}^2 \left(v_i^{\pi^{\mu,r}} - v_i^{\pi_i^{\ast}, \pi_{-i}^r}\right) \\
    &= 2 + \frac{1}{\mu}\sum_{i=1}^2 \left(- v_i^{\pi_i^{\ast}, \pi_{-i}^r} - v_i^{\pi^{\ast}}\right) \\
    &= 2 + \frac{1}{\mu}\sum_{i=1}^2 \left(v_i^{\pi_i^{\mu,r}, \pi_{-i}^{\ast}} - v_i^{\pi^{\ast}}\right) \leq 2,
\end{align*}
where the second equality follows from $\sum_{i=1}^2v_i^{\pi^{\mu,r}}=0$ and $\sum_{i=1}^2v_i^{\pi^{\ast}}=0$, and the last equality follows from $-v_1^{\pi_1^{\ast},\pi_2^r}=v_2^{\pi_1^{\ast},\pi_2^r}$ and $-v_2^{\pi_1^r,\pi_2^{\ast}}=v_1^{\pi_1^r,\pi_2^{\ast}}$ by the definition of two-player zero-sum games.
Thus, we get
\begin{align*}
    \mathrm{KL}(\pi^{\ast}, \pi^{\mu,r}) - \mathrm{KL}(\pi^{\ast}, r) &< 2(\ln 1)\leq 0.
\end{align*}
\end{proof}

\subsection{Proof of Lemma \ref{lem:recursive_point}}
\begin{proof}[Proof of Lemma \ref{lem:recursive_point}]
By using the ordinary differential equation (\ref{eq:rmd}), we have for all $i\in \{1,2\}$ and $a_i\in A_i$:
\begin{align*}
    &\pi_i^{\mu,r}(a_i)\left(q_i^{\pi^{\mu,r}}(a_i) - v_i^{\pi^{\mu,r}}\right) + \mu\left(r_i(a_i)-\pi_i^{\mu,r}(a_i)\right) = 0.
\end{align*}
Since $r=\pi^{\mu,r}$, we have for all $i\in \{1,2\}$ and $a_i\in A_i$:
\begin{align*}
    r_i(a_i)\left(q_i^{r}(a_i) - v_i^{r}\right) = 0.
\end{align*}
From the definition of the reference strategy, we have $r_i(a_i)>0$, and then $v_i^r = \max_{a_i\in A_i}q_i^r(a_i)$ for all $i\in \{1,2\}$.
Therefore, each player $i$'s strategy $r_i$ is a best response to the other player $-i$'s strategy $r_{-i}$.
Thus, $r$ is a Nash equilibrium of the original game.
\end{proof}

\clearpage

\section{ADDITIONAL EXPERIMENTAL RESULTS WITH NOISE}
\label{sec:appendix_experiments}

\subsection{Various Learning Rates}
\label{sec:appendix_experiments_noisy_varying_eta}
Figures \ref{fig:exploitability_noisy_varying_eta_BRPS} and \ref{fig:exploitability_noisy_varying_eta_MNe} show the numerical results with varying learning rates $\eta\in \{0.1, 0.05, 0.01, 0.005, 0.001\}$ in BRPS and M-Ne with noisy-information feedback.
We observe that M2WU exhibits lower exploitability than MWU and OMWU for all $\eta$.
Also, we can see that OMWU does not converge for all learning rates in both games (See also Remark \ref{rm:comparison_to_omwu}).
\begin{figure}[ht!]
    \centering
    \begin{minipage}[t]{0.49\columnwidth}
        \centering
        \includegraphics[width=1.0\textwidth]{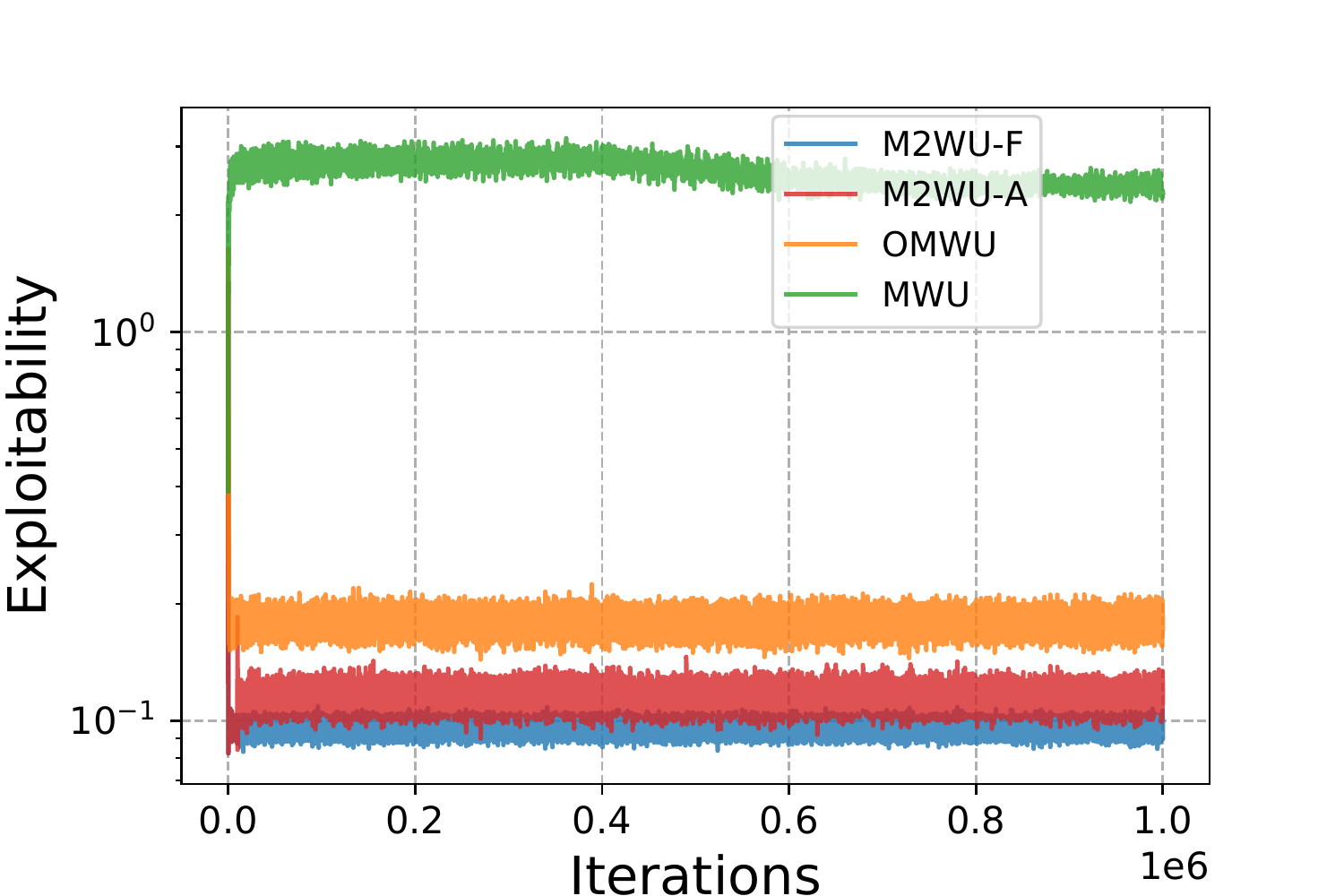}
        \subcaption{$\eta=0.1$}
    \end{minipage}
    \begin{minipage}[t]{0.49\columnwidth}
        \centering
        \includegraphics[width=1.0\textwidth]{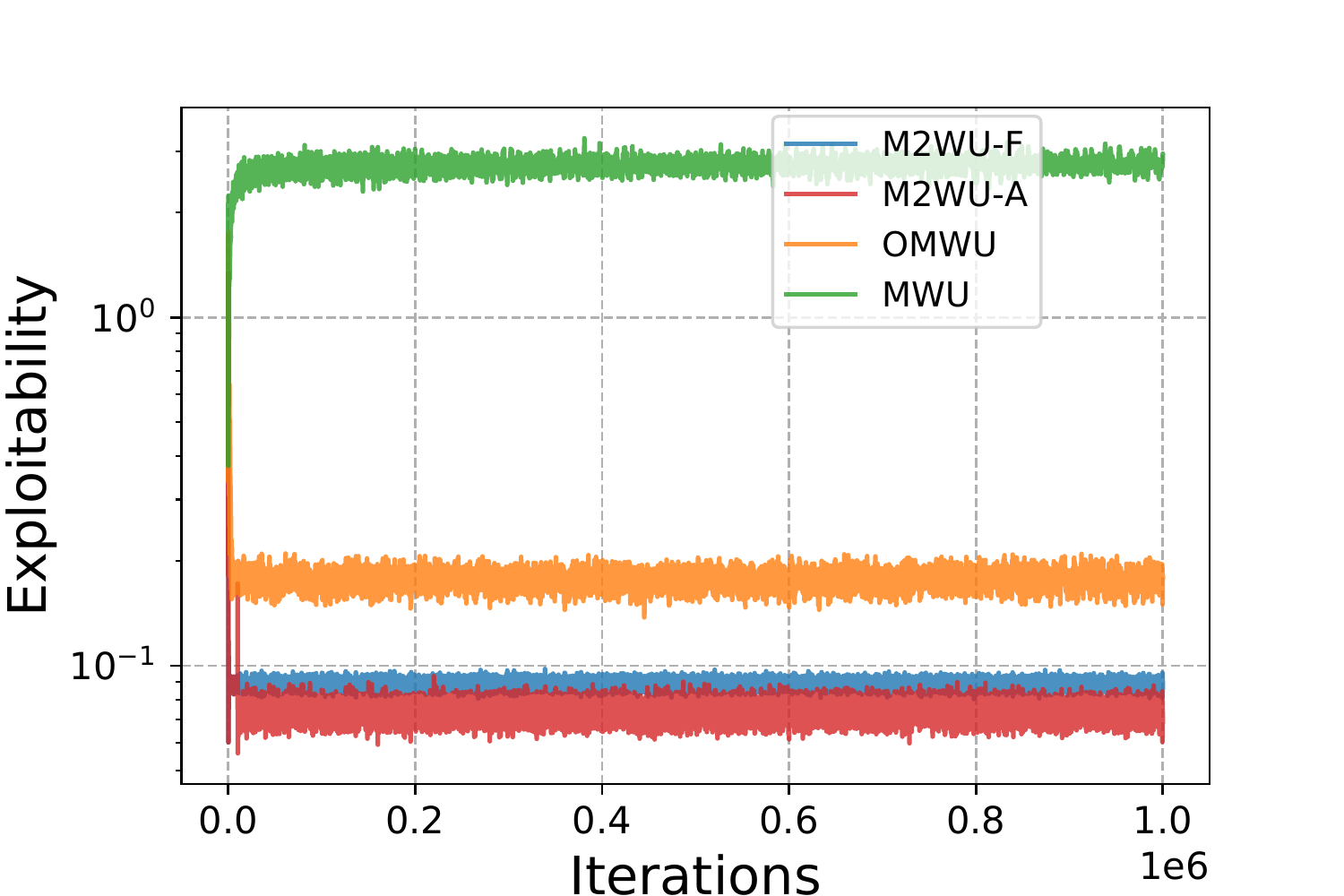}
        \subcaption{$\eta=0.05$}
    \end{minipage}
    \begin{minipage}[t]{0.49\columnwidth}
        \centering
        \includegraphics[width=1.0\textwidth]{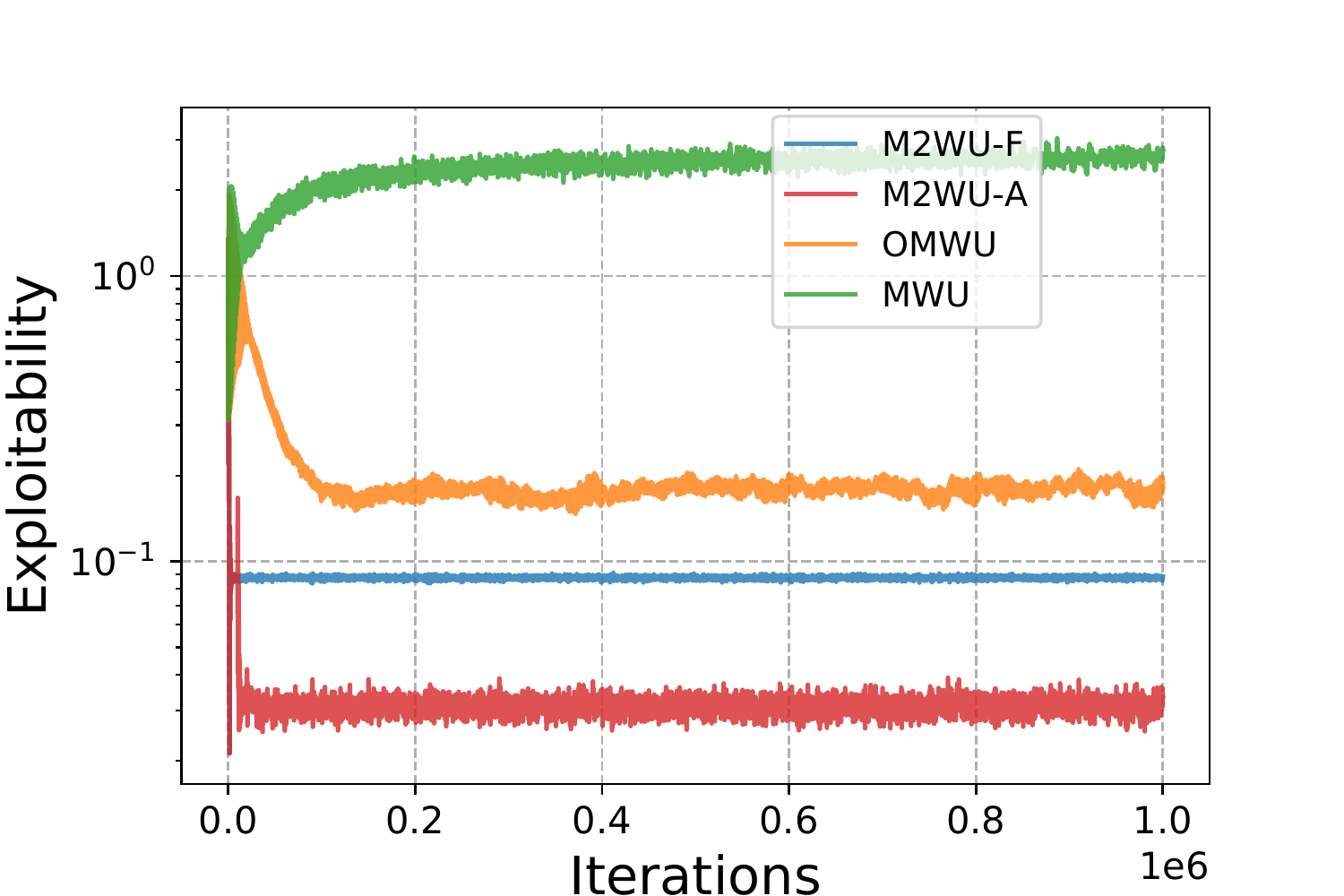}
        \subcaption{$\eta=0.01$}
    \end{minipage}
    \begin{minipage}[t]{0.49\columnwidth}
        \centering
        \includegraphics[width=1.0\textwidth]{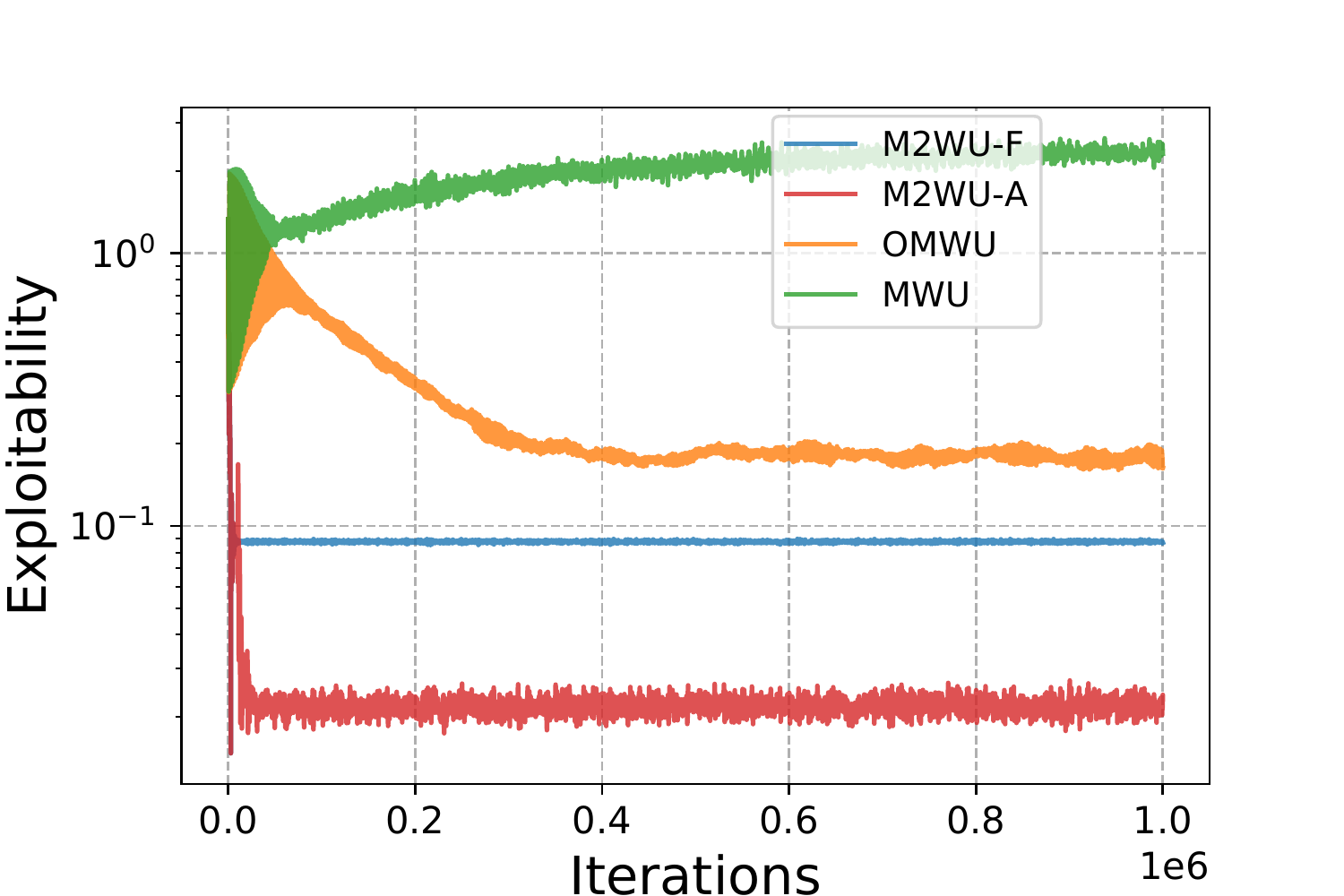}
        \subcaption{$\eta=0.005$}
    \end{minipage}
    \begin{minipage}[t]{0.49\columnwidth}
        \centering
        \includegraphics[width=1.0\textwidth]{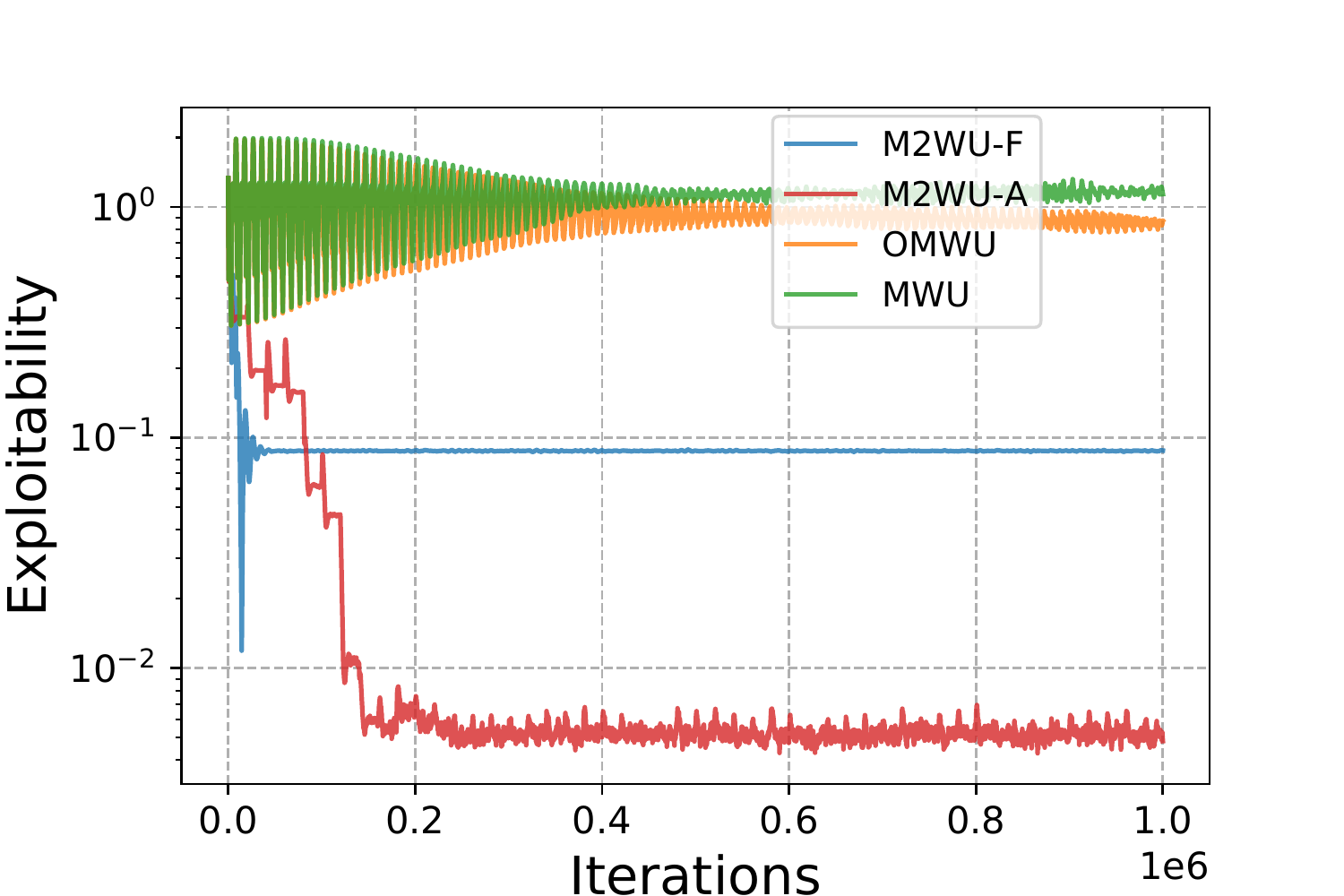}
        \subcaption{$\eta=0.001$}
    \end{minipage}
    \caption{
    Exploitability of $\pi^t$ for M2WU, MWU, and OMWU with varying $\eta\in \{0.1, 0.05, 0.01, 0.005, 0.001\}$ in BRPS with noisy-information feedback.
    }
    \label{fig:exploitability_noisy_varying_eta_BRPS}
\end{figure}

\begin{figure}[ht!]
    \centering
    \begin{minipage}[t]{0.49\columnwidth}
        \centering
        \includegraphics[width=1.0\textwidth]{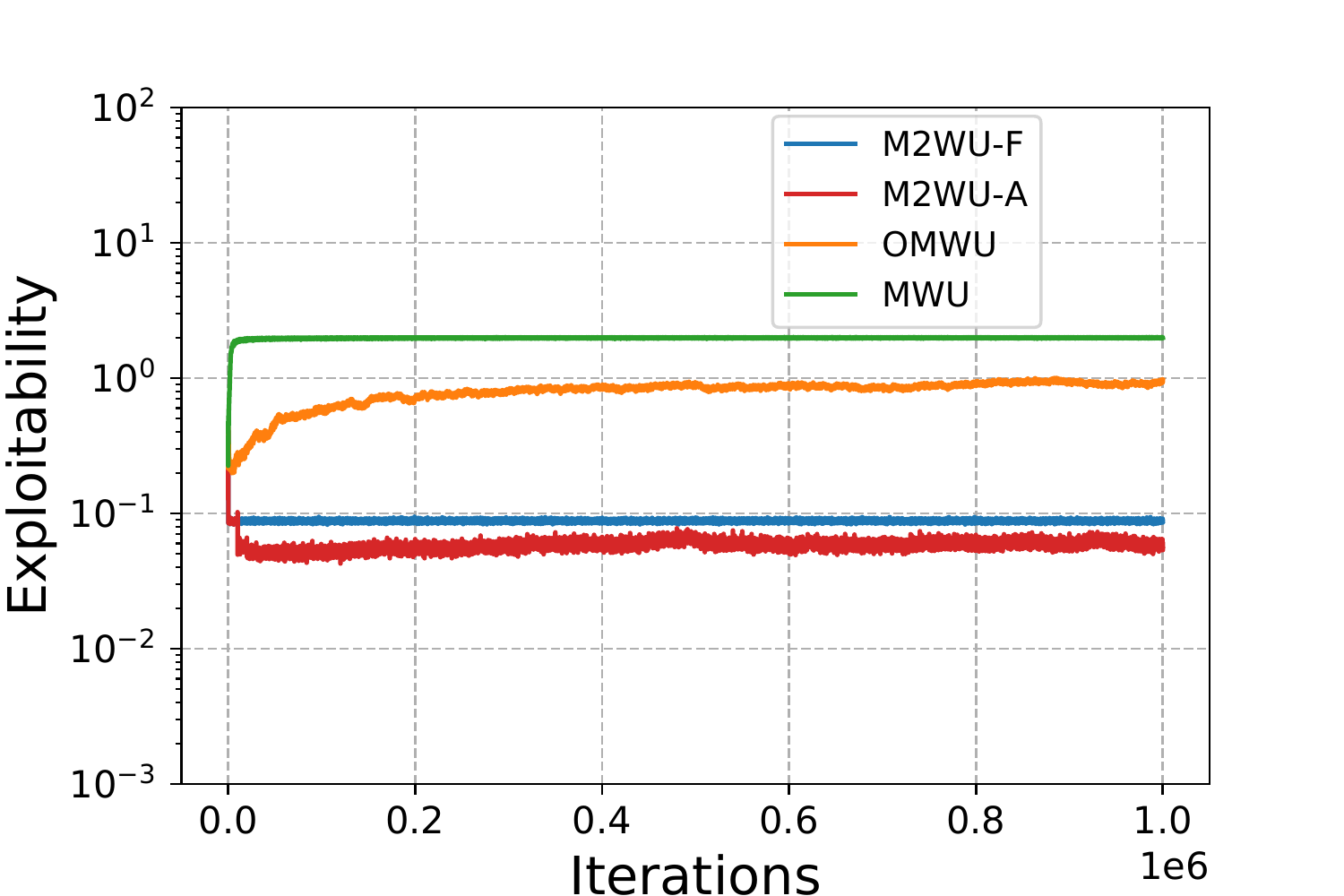}
        \subcaption{$\eta=0.1$}
    \end{minipage}
    \begin{minipage}[t]{0.49\columnwidth}
        \centering
        \includegraphics[width=1.0\textwidth]{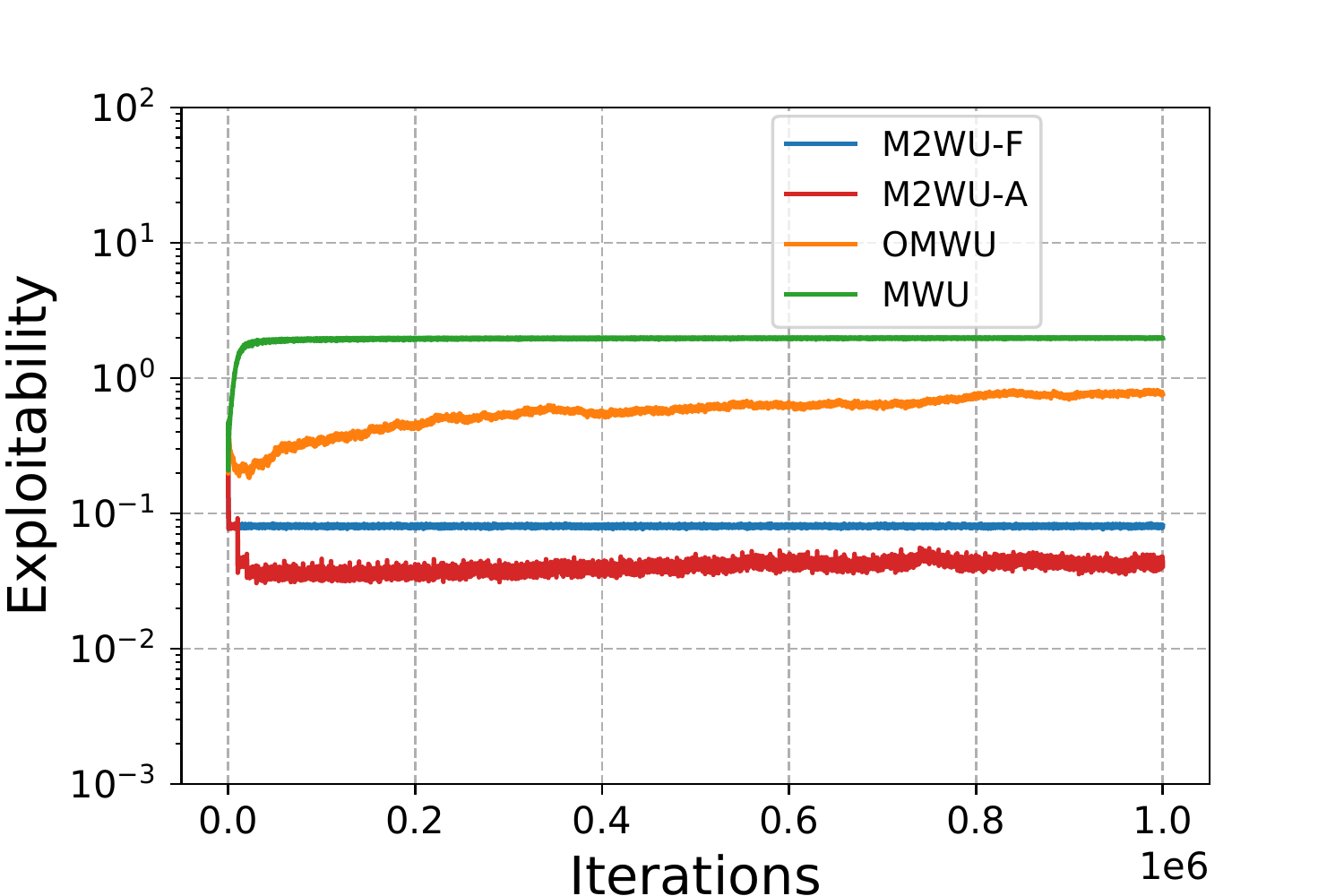}
        \subcaption{$\eta=0.05$}
    \end{minipage}
    \begin{minipage}[t]{0.49\columnwidth}
        \centering
        \includegraphics[width=1.0\textwidth]{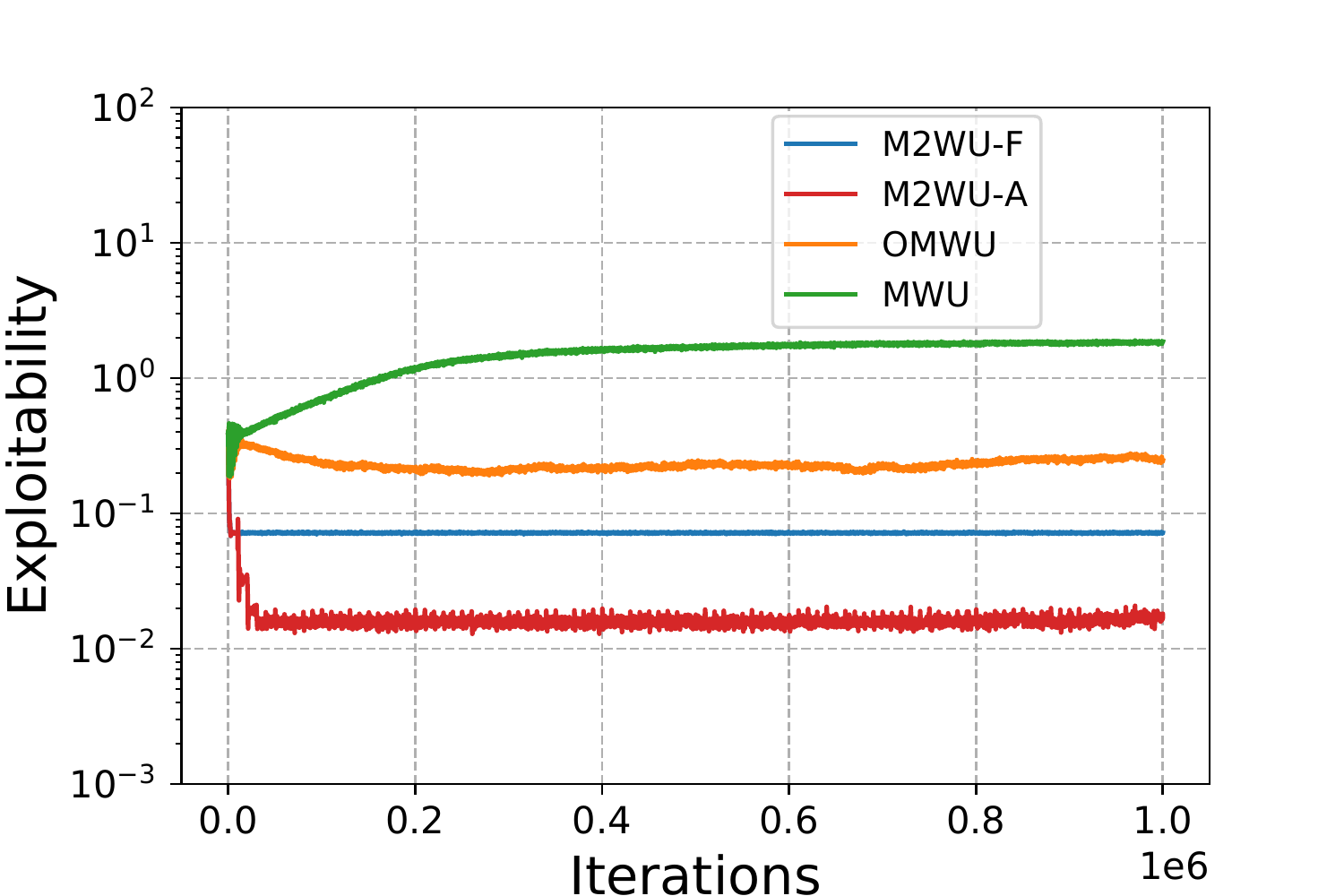}
        \subcaption{$\eta=0.01$}
    \end{minipage}
    \begin{minipage}[t]{0.49\columnwidth}
        \centering
        \includegraphics[width=1.0\textwidth]{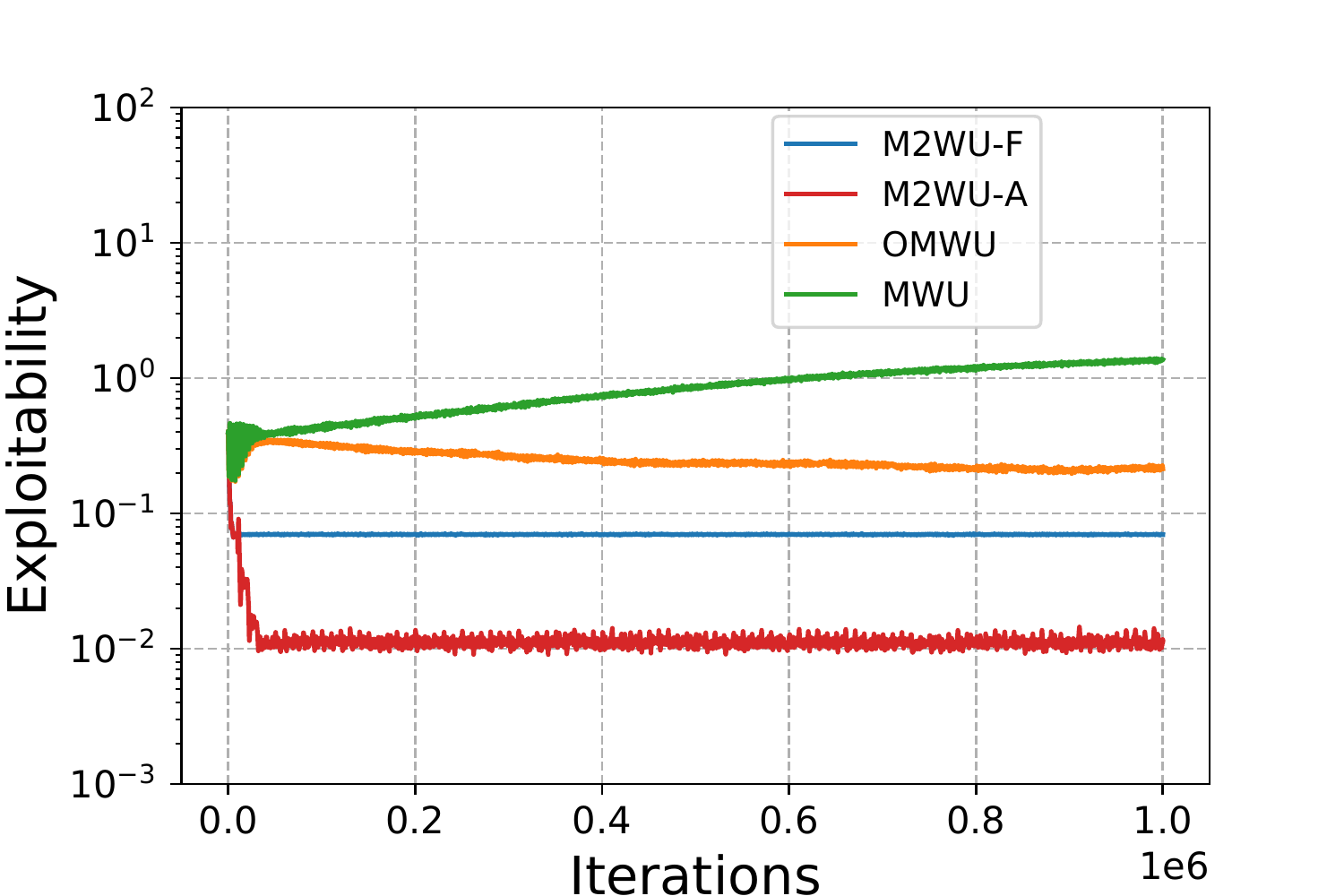}
        \subcaption{$\eta=0.005$}
    \end{minipage}
    \begin{minipage}[t]{0.49\columnwidth}
        \centering
        \includegraphics[width=1.0\textwidth]{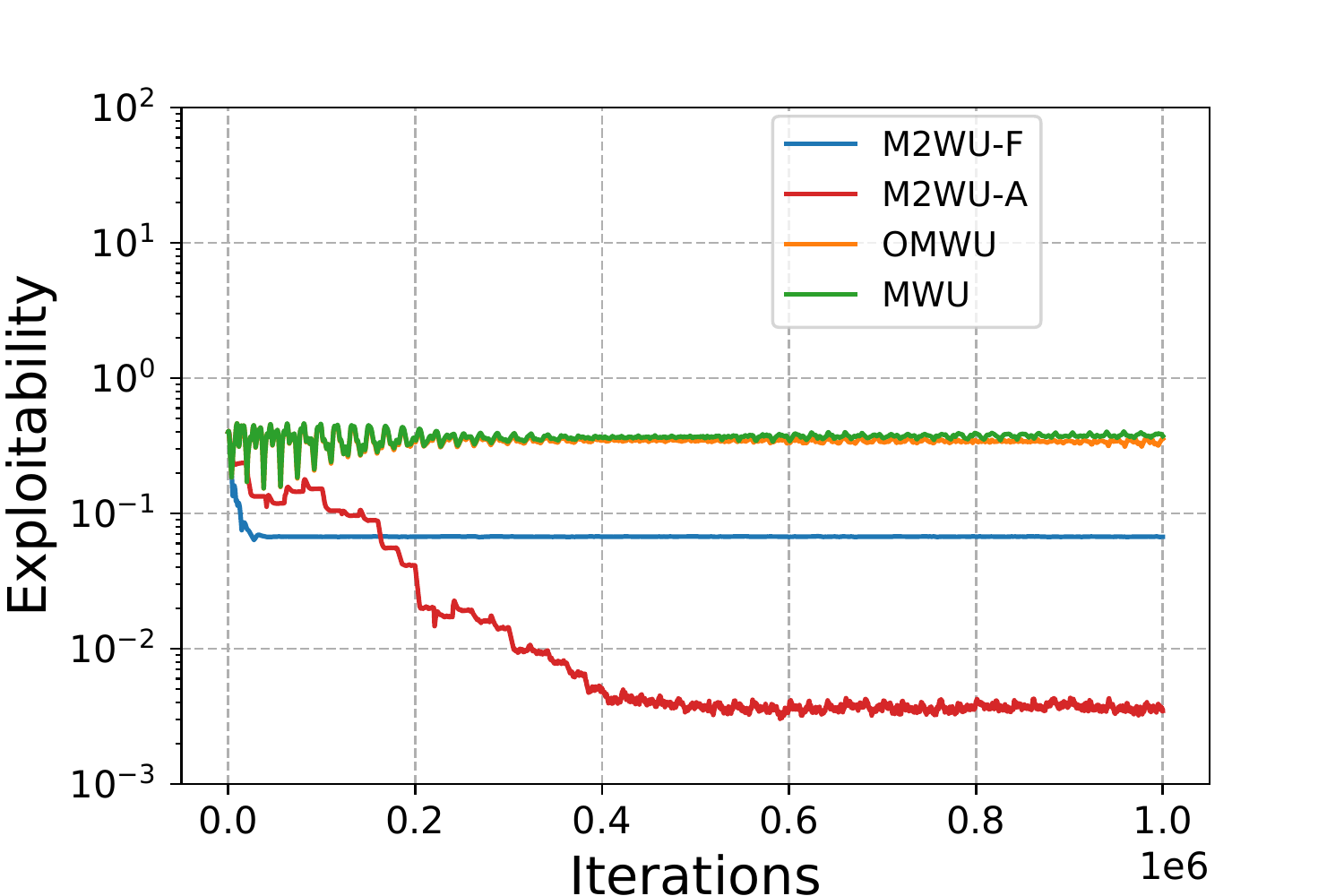}
        \subcaption{$\eta=0.001$}
    \end{minipage}
    \caption{
    Exploitability of $\pi^t$ for M2WU, MWU, and OMWU with varying $\eta\in \{0.1, 0.05, 0.01, 0.005, 0.001\}$ in M-Ne with noisy-information feedback.
    }
    \label{fig:exploitability_noisy_varying_eta_MNe}
\end{figure}

\subsection{Decreasing Learning Rates}
\label{sec:appendix_experiments_noisy_decay}
In this section, we investigate the performance of M2WU with decreasing learning rates under the noisy-information feedback setting.
We set the learning rates to $\eta_t=t^{-\frac{3}{4}}$ for all algorithms.
Other settings are equivalent to the noisy-information feedback experiments in Section \ref{sec:exp_noisy_feedback}.
Figure \ref{fig:exploitability_noisy_decay} shows the average exploitability of $\pi^t$ on $100$ instances.
Even with the decreasing learning rates, M2WU demonstrates better performance than MWU and OMWU.

\begin{figure}[ht!]
    \centering
    \begin{minipage}[t]{0.49\columnwidth}
        \centering
        \includegraphics[width=1.0\textwidth]{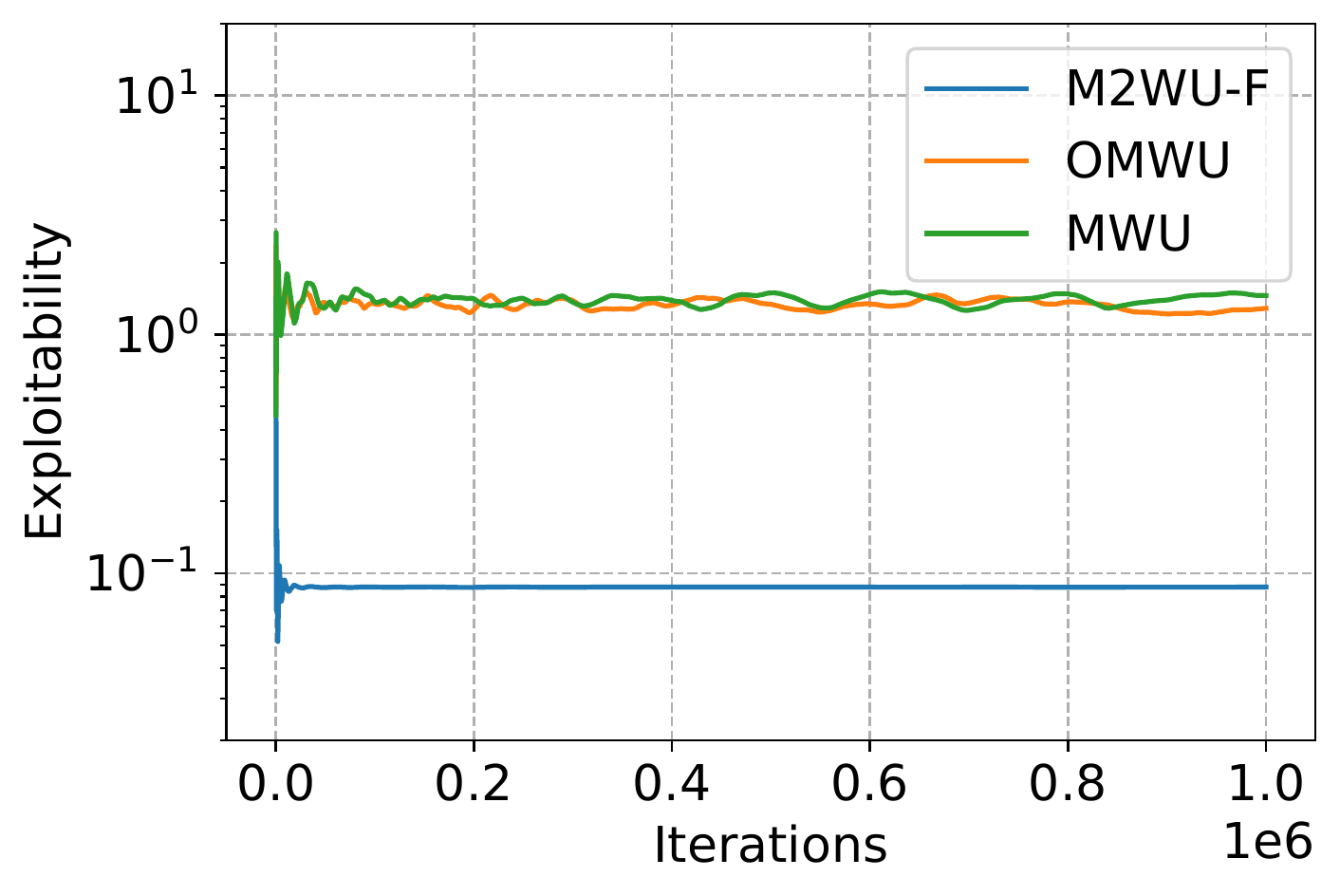}
        \subcaption{BRPS}
    \end{minipage}
    \begin{minipage}[t]{0.49\columnwidth}
        \centering
        \includegraphics[width=1.0\textwidth]{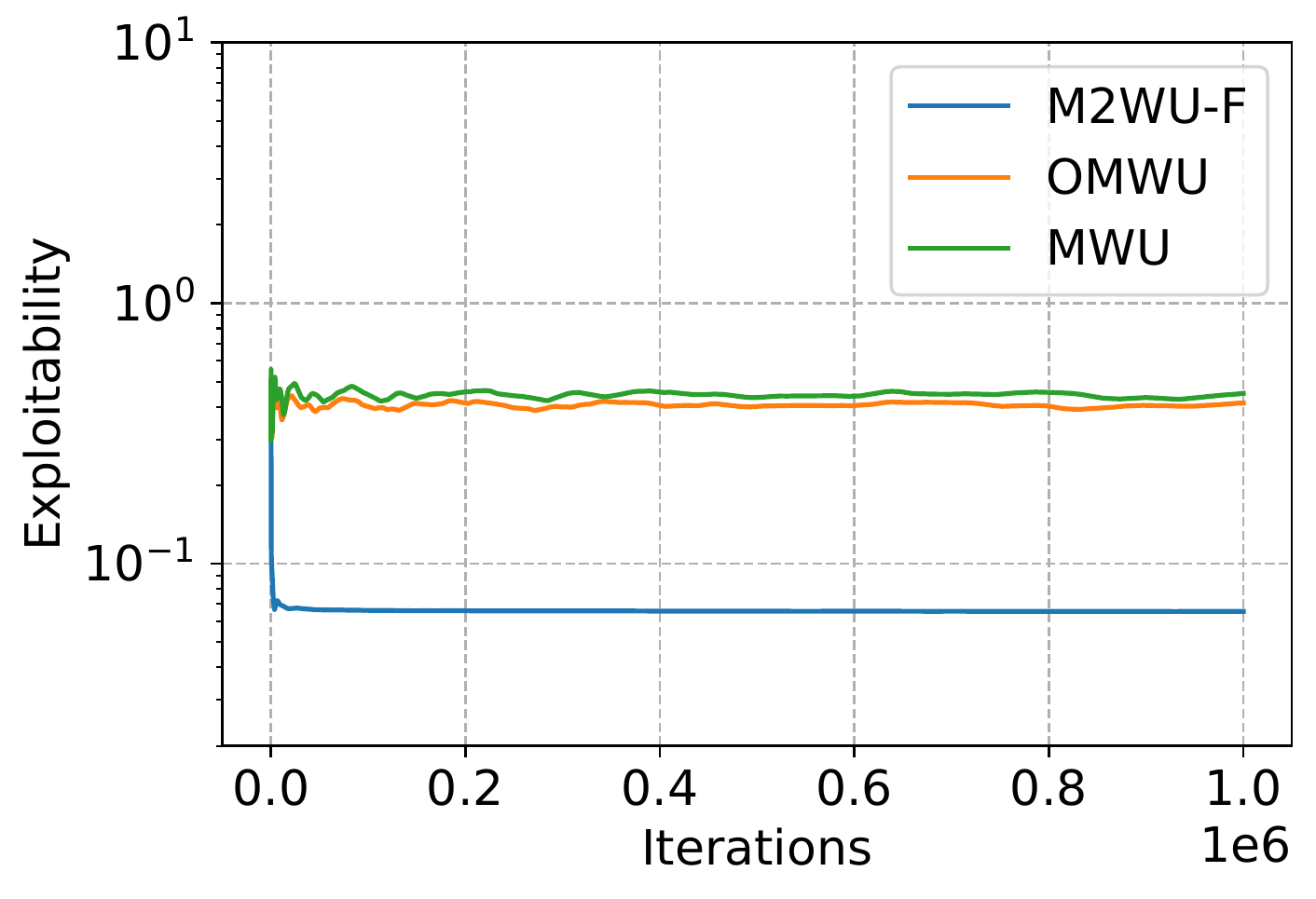}
        \subcaption{M-Ne}
    \end{minipage}
    \caption{
    Exploitability of $\pi^t$ for M2WU, MWU, and OMWU with decreasing learning rates under the noisy-information feedback setting.
    }
    \label{fig:exploitability_noisy_decay}
\end{figure}

\vfill

\end{document}